\definecolor[named]{urlblue}{cmyk}{1,0.58,0,0.21}
\pgfplotsset{compat=1.18}
\tikzstyle{vertex}=[draw,circle,fill=white,minimum size=5pt,inner sep=0pt]
\tikzstyle{edge}=[draw,thick]
\definecolor{darkpastelgreen}{rgb}{0.01, 0.75, 0.24}
\newtheorem{theorem}{Theorem}[section]
\newtheorem{lemma}[theorem]{Lemma}
\newtheorem{corollary}[theorem]{Corollary}
\newtheorem{example}[theorem]{Example}
\theoremstyle{definition}
\newtheorem{definition}[theorem]{Definition}
\theoremstyle{remark}
\newtheorem{claim}[theorem]{Claim}
\newenvironment{claimproof}[1][\proofname]{
  
  \begin{proof}[#1]%
}{%
  \end{proof}%
}
\newcommand\alignsymbols[2][c]{\mathrel{\eqmakebox[S][#1]{$#2$}}{}} 
\DeclareMathOperator{\depth}{depth}
\newcommand{\WL}[2]{{\sf WL}^{#1}\![#2]}
\newcommand{\WLit}[3]{{\sf WL}_{#2}^{#1}\![#3]}
\newcommand{\WLd}[2]{{\depth}_{\sf WL}^{#1}(#2)}
\newcommand{\logic}[1]{{\sf #1}}
\newcommand{\FO}{\logic{FO}}
\newcommand{\LC}{\logic C}
\newcommand{\LL}{\logic L}
\newcommand{\LCk}[1]{\LC^{#1}}
\newcommand{\ZZ}{\mathbb Z}
\newcommand{\CE}{\mathcal E}
\newcommand{\CH}{\mathcal H}
\newcommand{\CM}{\mathcal M}
\DeclareMathOperator{\im}{im}
\DeclareMathOperator{\tw}{tw}
\DeclareMathOperator{\dist}{dist}
\DeclareMathOperator{\flipf}{flip}
\DeclareMathOperator{\flip}{flip^*}
\DeclareMathOperator{\CFI}{CFI}
\DeclareMathOperator{\twCFI}{{CFI^{\textsf{x}}}}
\newcommand{\orcid}[1]{\href{https://orcid.org/#1}{\includegraphics[height=1.8ex]{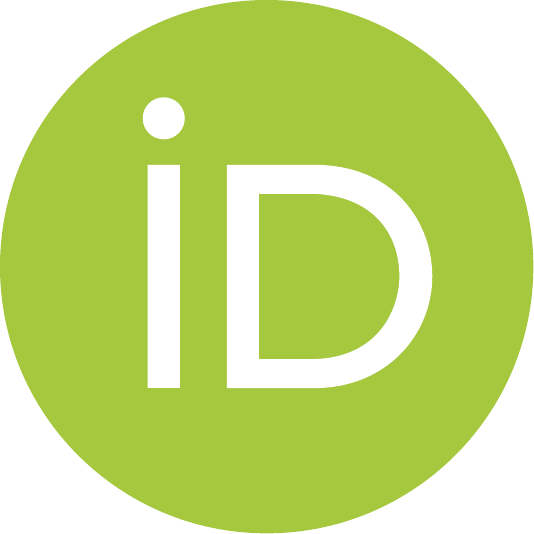}}}
\newcommand{\email}[1]{\href{mailto:#1}{\texttt{#1}}}
\title{Bounding the Weisfeiler--Leman Dimension\\via a Depth Analysis of I/R-Trees}
\author{
Sandra Kiefer\footnote{The research was supported by the Glasstone Benefaction, University of Oxford [Violette and Samuel Glasstone Research Fellowships in Science 2022] as well as Jesus College in Oxford.}\; \orcid{0000-0003-4614-9444}\\
University of Oxford\\
\email{sandra.kiefer@cs.ox.ac.uk}
\and
Daniel Neuen \orcid{0000-0002-4940-0318}\\
University of Bremen\\
\email{dneuen@uni-bremen.de}
}
\date{}
\begin{document}

\maketitle

\begin{abstract}
 The Weisfeiler--Leman (WL) dimension is an established measure for the inherent descriptive complexity of graphs and relational structures.
 It corresponds to the number of variables that are needed and sufficient to define the object of interest in a counting version of first-order logic (FO).
 These bounded-variable counting logics were even candidates to capture graph isomorphism, until a celebrated construction due to Cai, Fürer, and Immerman [Combinatorica 1992] showed that $\Omega(n)$ variables are required to distinguish all non-isomorphic $n$-vertex graphs.

 Still, very little is known about the precise number of variables required and sufficient to define every $n$-vertex graph.
 For the bounded-variable (non-counting) FO fragments, Pikhurko, Veith, and Verbitsky [Discret. Appl. Math. 2006] provided an upper bound of $\frac{n+3}{2}$ and showed that it is essentially tight.
 Our main result yields that, in the presence of counting quantifiers, $\frac{n}{4} + o(n)$ variables suffice.
 This shows that counting does allow us to save variables when defining graphs.
 As an application of our techniques, we also show new bounds in terms of the vertex cover number of the graph.

 To obtain the results, we introduce a new concept called the \emph{WL depth} of a graph. We use it to analyze branching trees within the Individualization/Refinement (I/R) paradigm from the domain of isomorphism algorithms.
 We extend the recursive procedure from the I/R paradigm by the possibility of splitting the graphs into independent parts.
 Then we bound the depth of the obtained branching trees, which translates into bounds on the WL dimension and thereby on the number of variables that suffice to define the graphs.
\end{abstract}

\section{Introduction}

The Weisfeiler--Leman (WL) algorithm is a combinatorial algorithm that iteratively collects and aggregates local information of structures and encodes them in colors assigned to parts of the structures.
The original algorithm introduced by Weisfeiler and Leman \cite{WeisfeilerL68} is the $2$-dimensional version, which colors pairs of vertices.
Its generalization to arbitrary dimension $k \geq 1$ is the $k$-dimensional WL algorithm ($k$-WL), which was independently introduced by Babai and Mathon (see \cite{Babai16} for a historic note) as well as by Immerman and Lander \cite{ImmermanL90}.

The most prominent application of the WL algorithm lies in the context of graph comparison, since the computed information can be used to discover structural differences in the graphs.
Indeed, the algorithm is commonly used as a subroutine in isomorphism algorithms both in practice (see, e.g., \cite{JunttilaK07,JunttilaK11,McKay81,McKayP14}) and theory (see, e.g., \cite{ArvindNPZ22,Babai80,BabaiCSTW13,Neuen22,Neuen24}), including Babai's quasipolynomial-time isomorphism test \cite{Babai16}, which falls back on $k$-WL for dimension $k = O(\log n)$.

Besides that, the WL algorithm has connections to numerous other areas in theoretical computer science (see the survey article \cite{Kiefer20}).
Among the most prominent links is the one to counting logics.
It is known that $k$-WL distinguishes two graphs if and only if the graphs can be distinguished via a sentence in first-order logic with counting quantifiers and $k+1$ variables, i.e., in the logic $\LCk{k+1}$ \cite{CaiFI92,ImmermanL90}.
Via this connection, the algorithm and the study of its expressive power have become a major tool to analyze the inherent descriptive complexity of relational structures.
More precisely, the \emph{WL dimension} of a graph $G$ is the minimum dimension of the algorithm that suffices to distinguish $G$ from every non-isomorphic graph and it directly corresponds to the minimum number of variables that suffice to define the graph in the counting logic $\LC$.

For some time, there was hope that there would be a universal bound on the WL dimension for all graphs, which would have placed the graph isomorphism problem in the complexity class P.
However, a celebrated construction due to Cai, Fürer, and Immerman \cite{CaiFI92} shows that $\Omega(n)$ variables are needed to define every graph on $n$ vertices, i.e., the WL dimension of $n$-vertex graphs is in $\Omega(n)$.
On the other hand, many restricted classes of graphs have finite WL dimension, for example, every graph class that excludes a fixed graph as a minor \cite{Grohe17}.

In recent years, there has been a series of results aiming to provide more precise bounds on the WL dimension of certain graph classes.
For example, Kiefer, Ponomarenko and Schweitzer \cite{KieferPS19} show that the WL dimension of planar graphs is at most $3$.
Further examples include linear bounds (with small and explicit constant factors) on the WL dimension of graphs of bounded tree-width \cite{KieferN22}, rank-width \cite{GroheN23} and genus \cite{GroheK19}.

Surprisingly, there has only been little progress towards determining the WL dimension only in terms of the number of vertices of a graph, which is arguably the simplest and most natural graph parameter.
In terms of logics, this poses the following question.
\textit{How many variables are needed and sufficient to define graphs on $n$ vertices in $\LC{}$?}
Pikhurko, Veith, and Verbitsky \cite{PikhurkoVV06} proved that for every two non-isomorphic $n$-vertex graphs $G$ and $H$, there is an $\FO$-formula with at most $\frac{n+3}{2}$ variables that distinguishes $G$ and $H$.
For the logic $\FO$ (without counting quantifiers), this bound is optimal up to an additive term of one, i.e., at least $\frac{n+1}{2}$ variables are required to distinguish non-isomorphic $n$-vertex graphs in $\FO$.
Since every $\FO$-formula is also a $\LC{}$-formula, the upper bound also holds for $\LC{}$, i.e., every $n$-vertex graph $G$ can be defined in $\LC{}$ with at most $\frac{n+3}{2}$ variables.\footnote{Note that, while we can well speak about definability of a graph in $\LC{}$, i.e., it being distinguished from every second graph, we need to restrict ourselves to the comparison of graphs with equal numbers of vertices in the bounded-variable fragments of $\FO$, since we cannot determine the number $n$ of vertices without using $n+1$ variables or counting.}
However, as opposed to the $\FO$ setting, it has remained open whether this bound is tight in the presence of counting quantifiers or whether they actually decrease the number of distinct variables that are needed.

Taking a closer look at the Cai-Fürer-Immerman (CFI) construction, \cite{PikhurkoVV06} also shows that at least $0.00465n$ variables are required to define every $n$-vertex graph $G$ in $\LC{}$.
Using today's refined understanding of the CFI construction (see, e.g., \cite{DawarR07, Neuen23, Roberson22}) and more recent results on expander graphs \cite{DvorakN16}, we can obtain an improved lower bound of $(\frac{1}{96} - o(1))n$ on the WL dimension of all $n$-vertex graphs.
Still, there is a considerable gap between this lower and the upper bound.

\paragraph{This work.}

Our main contribution is the following upper bound on the WL dimension of the class of all $n$-vertex graphs, which improves on the result obtained by Pikhurko, Veith, and Verbitsky \cite{PikhurkoVV06}.

\begin{theorem}
 \label{thm:main}
 The WL dimension of every $n$-vertex graph is at most $\frac{n}{4} + o(n)$.
\end{theorem}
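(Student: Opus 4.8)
The plan is to derive Theorem~\ref{thm:main} from a bound on a new graph parameter, the \emph{WL depth} $\WLd{}{G}$, which measures the depth of the branching trees produced by the Individualization/Refinement paradigm when we additionally allow \emph{splitting}: whenever the current (stable) colouring decomposes $V(G)$ into two unions of colour classes with no edges between them — more generally, into parts that are ``separated'' by the colouring — we recurse on each part independently and take the \emph{maximum} of the resulting depths instead of their sum. The first step is a transfer lemma: the WL dimension of $G$ is at most $\WLd{}{G}$ up to a lower-order additive term. Informally, an I/R-with-splitting tree of depth $d$ encodes a winning strategy for Spoiler in the bijective $k$-pebble game (for $k = d + o(n)$) on $G$ against any non-isomorphic $H$, where the splitting step is justified because Duplicator cannot transfer information across a colour-separated cut. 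Granting this, it suffices to prove $\WLd{}{G}\le \tfrac n4 + o(n)$ for every $n$-vertex graph $G$.

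We bound $\WLd{}{G}$ by an amortized analysis of the recursion along root-to-leaf paths. The classical $\tfrac{n+3}{2}$ bound of Pikhurko, Veith and Verbitsky~\cite{PikhurkoVV06} amounts to the statement that in any stable colouring with a non-singleton class one can individualise a vertex $v$ that, in an amortized sense, ``resolves'' at least two vertices (it creates the new singleton $\{v\}$ and splits the class containing $v$), giving about $n/2$ individualisations. Our goal is to raise the amortized progress per individualisation to essentially four. Two mechanisms are responsible: first, the splitting operation makes itself essentially free whenever individualising a few vertices disconnects the colour-refined graph, so it can be used freely in the ``structured'' cases; second, because the counting pebble game forces Duplicator to answer with bijections, a single well-chosen individualisation can be charged with creating a new singleton, \emph{properly} splitting its own colour class, \emph{and} splitting a suitable neighbouring class — four units of progress rather than two. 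The bookkeeping is carried by a potential on coloured graphs that drops by at least one per quarter-vertex of progress, so that the recursion depth telescopes to $\tfrac n4 + o(n)$.

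The heart of the argument is a case analysis on the structure of the stable colouring reached at a node of the tree. In the main case there is a ``large'' colour class $C$ such that individualising a suitable $v\in C$ triggers substantial refinement: either the induced regular graph $G[C]$ (or its complement) forces a balanced split of $C$, or the biregular interface between $C$ and a neighbouring class propagates the split, and then the full amortized gain is realised directly. In the complementary case \emph{every} large colour class is inert — individualising any of its vertices barely refines the colouring — and we argue that this forces strong structure: such a class is essentially a module of pairwise twins, which we contract for free since twins do not affect the WL dimension, or the colouring separates $G$ into strictly smaller parts, to which we apply splitting. Iterating these reductions drives down either $n$ (after contracting twin classes and passing to quotients) or the size of the largest part; the residual base cases — colourings all of whose classes are small, and the few remaining obstructions reminiscent of the $\CFI$ construction over a bounded-tree-width base — contribute only the $o(n)$ term.

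I expect the main obstacle to be the inert-large-class case: proving that if individualising vertices of a large class $C$ never meaningfully refines the colouring, then $G$ genuinely decomposes. This needs a careful structural study of stable colourings in which some class resists individualisation — in particular, ruling out rigid-yet-refinement-stable configurations — together with a verification that the decomposition obtained is separated by the colouring, which is precisely the hypothesis under which the splitting operation (hence the transfer to the counting pebble game) is sound. A secondary, purely quantitative difficulty is that each reduction loses a little — small colour classes, a logarithmic overhead in the pebble-game transfer, the base-graph size in $\CFI$-type obstructions — and one must check that all of these losses are genuinely of lower order, so that the constant $\tfrac14$ survives. Finally, the same framework with a slightly different potential should yield the promised bounds in terms of the vertex cover number, since a vertex cover is a natural source both of vertices to individualise and of a subsequent split.
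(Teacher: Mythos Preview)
Your high-level framework --- define a WL depth via an I/R tree with a splitting operation, then bound the depth by a potential argument charging four units of progress per individualisation --- matches the paper's. But two concrete ingredients are missing, and without them the argument does not go through.

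First, the paper's I/R tree has a \emph{flip} operation in addition to splitting: between any two colour classes one may replace edges by non-edges (and vice versa). Splitting into connected components is only powerful in combination with flipping; without it, a graph like $K_{n/2,n/2}$ or any dense regular graph never decomposes, and you never reach your ``colouring separates $G$ into strictly smaller parts'' branch. The paper's working hypothesis at each node is that the graph is \emph{flipped} (each pair of classes carries at most half the possible edges), and the progress lemmas need precisely this to guarantee that individualising a vertex splits neighbouring classes. Your parenthetical ``or its complement'' for $G[C]$ is not a substitute: flips must be taken between \emph{pairs} of classes, independently, and the tree has to record them so that the transfer lemma (Lemma~\ref{la:wl-depth-to-dimension}) still applies.

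Second, your dichotomy for inert large classes --- ``such a class is essentially a module of pairwise twins \dots\ or the colouring separates $G$'' --- is false. A strongly regular graph, or more generally a primitive coherent configuration with few $2$-WL pair colours, is a single twin-free colour class in which individualising any vertex creates only three colours; it is neither a twin module nor decomposable after any flip. These are the genuinely hard cases, and they are \emph{not} CFI-like. The paper handles them by invoking Babai's bounds (Theorem~\ref{thm:uniprimitive-individualization}, Lemmas~\ref{la:distinguishing-set-from-fractional-cover} and~\ref{la:hypergraph-bounds}): in any such configuration one can find a set $U$ of size $O(\sqrt{n}\log n)$ or $O(n^{3/4}\log n)$ whose individualisation makes the $2$-WL colouring discrete, and this $|U|$ is exactly the $o(n)$ term. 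The case analysis that isolates these configurations (Lemmas~\ref{la:split-with-4-vertex-colors}--\ref{la:split-with-1-vertex-color}) is organised by the $2$-WL parameter $\xi_{G,\chi}(c) = \sum_{c'}(\mu_{G,\chi}(c,c')-1)$ and by $|\im(\chi)|$, not by the size of colour classes; your ``large versus small class'' split does not align with where the difficulty actually lies.
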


Via the characterization of the WL algorithm in terms of first-order logic with counting quantifiers, we immediately obtain the following.

\begin{corollary}
 \label{cor:main}
 Every $n$-vertex graph $G$ can be defined in $\LCk{k}$, i.e., in the $k$-variable fragment of first-order logic enriched with counting quantifiers, where $k \in \frac{n}{4} + o(n)$.
\end{corollary}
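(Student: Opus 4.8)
The plan is to reduce Corollary~\ref{cor:main} to Theorem~\ref{thm:main} via the standard correspondence between the $k$-dimensional WL algorithm and the counting logic $\LCk{k+1}$, as established by Cai, Fürer, and Immerman~\cite{CaiFI92} and independently by Immerman and Lander~\cite{ImmermanL90}. Concretely, that correspondence states that two graphs $G$ and $H$ are distinguished by $k$-WL if and only if there is a sentence of $\LCk{k+1}$ that holds in one but not the other. Hence, if the WL dimension of $G$ is at most $d$, then $G$ is distinguished from every non-isomorphic graph by a sentence of $\LCk{d+1}$.

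First I would invoke Theorem~\ref{thm:main} to obtain, for the $n$-vertex graph $G$, a bound $d \le \frac{n}{4} + o(n)$ on its WL dimension. Then, applying the logic characterization, $G$ is defined (up to isomorphism) in $\LCk{d+1}$. Since $d + 1 = \frac{n}{4} + o(n) + 1 = \frac{n}{4} + o(n)$, the additive $+1$ is absorbed into the $o(n)$ error term, and we may take $k \in \frac{n}{4} + o(n)$ as claimed. One subtlety worth addressing, as the footnote in the excerpt already flags, is that definability in the bounded-variable fragment only makes sense when comparing $G$ against graphs on the same number $n$ of vertices; but this is exactly the notion implicit in the statement of the corollary, so no extra work is needed beyond noting it.

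The main (and essentially only) obstacle here is that there is no real obstacle: the corollary is an immediate translation of Theorem~\ref{thm:main} through a well-known black-box equivalence, and the only point requiring a moment's care is confirming that the $+1$ shift from WL dimension to number of variables does not escape the $o(n)$ slack — which it plainly does not. Thus the proof is a one-line appeal to the WL--logic correspondence combined with the main theorem.
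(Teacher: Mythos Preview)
Your proposal is correct and matches the paper's approach exactly: the paper derives Corollary~\ref{cor:main} from Theorem~\ref{thm:main} via Theorem~\ref{thm:eq-wl-ck} (the Cai--F\"urer--Immerman / Immerman--Lander equivalence), with the extra $+1$ variable absorbed into the $o(n)$ term. One small note: your caveat about restricting to $n$-vertex graphs misreads the footnote, which concerns bounded-variable $\FO$ \emph{without} counting; in $\LCk{k}$ one can already express the vertex count, so full definability against all graphs is both intended and obtained.
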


In particular, this shows that the ability to count does allow us to save variables when defining graphs.

Our techniques also allow us to obtain an improved bound on the WL dimension in terms of the vertex cover number.
For a graph $G$, a set $S \subseteq V(G)$ is a \emph{vertex cover} of $G$ if $e \cap S \neq \emptyset$ for all edges $e \in E(G)$, and the \emph{vertex cover number} of $G$ is the minimal size of a vertex cover.
It follows from \cite{KieferN22} that the WL dimension of a graph is at most its vertex cover number.
We obtain the following stronger bound.

\begin{theorem}
 \label{thm:main-vc}
 The WL dimension of every graph with vertex cover number $r$ is at most $\frac{2}{3}r + 3$.
\end{theorem}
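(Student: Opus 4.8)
To prove Theorem~\ref{thm:main-vc}, the plan is to bound the WL depth of a graph $G$ with vertex cover number $r$ by $\tfrac23 r + 3$ and then invoke the translation from WL depth to WL dimension developed above; the additive constant will absorb both that translation and the base cases of the recursion below. Throughout I would work with vertex-coloured graphs (colours only help) and argue by induction on $r$. Fix a minimum vertex cover $S$ with $|S| = r$, so that $I := V(G)\setminus S$ is independent, and observe that for $r \le 9$ the trivial bound $\WLd{}{G} \le r$ already gives the claim. That trivial bound comes from individualizing all of $S$: once $S$ is discrete, one refinement round splits $I$ into classes according to their neighbourhoods in $S$, and each such class is a set of twins, which is as discrete as the framework requires. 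The point is to replace the factor $1$ by $\tfrac23$.

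For the inductive step I would run the extended I/R procedure on $G$ with $2$-WL as the refinement and branch into two regimes according to the colouring it produces. In the \emph{cheap} regime there is a single vertex $v\in S$ whose individualization, after refinement, either makes at least one further vertex of $S$ discrete or exposes a situation in which a flip disconnects the graph; then the subgraph induced by the not-yet-discrete vertices has vertex cover number at most $r-2$, and the recursion $\WLd{}{G} \le 1 + \WLd{}{G'} \le \tfrac23 r + 3$ closes with room to spare. The decisive regime is when no individualization is cheap in this sense — informally, every not-yet-discrete vertex of $S$ looks the same to the refinement. I would argue that this forces the colour classes inside $S$ together with the neighbourhood types realised in $I$ to be so homogeneous (large classes behaving like independent sets or modules) that there is a set $X \subseteq S$ — chosen, say, minimal among those separating the neighbourhood types in $I$ and hitting the non-trivial colour classes of $S$ — whose individualization lets a flip split $G - X$ into independent parts $P_1,\dots,P_t$ with vertex cover numbers $r_i$ satisfying $r_i \le r - \tfrac32\lvert X\rvert$ for every $i$. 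Since depths do not accumulate across a split, the induction then gives
\[
  \WLd{}{G} \;\le\; \lvert X\rvert + \max_i \WLd{}{P_i}
  \;\le\; \lvert X\rvert + \tfrac23\bigl(r - \tfrac32\lvert X\rvert\bigr) + 3
  \;=\; \tfrac23 r + 3 .
\]

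The main obstacle is precisely this last implication: that the \emph{absence} of a cheap individualization forces a \emph{balanced} flip-split obeying the quantitative bound $r_i \le r - \tfrac32\lvert X\rvert$ (equivalently, that after paying $\lvert X\rvert$ individualizations each side of the split retains at least $\tfrac12\lvert X\rvert$ of the cover). This needs a careful analysis of how $2$-WL interacts with the bipartite-like structure between $S$ and $I$ — in particular, how individualizing a vertex of $S$ simultaneously refines the colour classes inside $S$ and the neighbourhood types inside $I$ — and it must rule out the possibility that a long sequence of individualizations each makes only local progress while never disconnecting the graph. The constant $\tfrac23$ is exactly the value making the two recursions $\WLd{}{G} \le 1 + \WLd{}{G'}$ with $\operatorname{vc}(G')\le r-2$ and $\WLd{}{G}\le \lvert X\rvert + \WLd{}{P_i}$ with $\operatorname{vc}(P_i)\le r-\tfrac32\lvert X\rvert$ both compatible with the target, so I would expect the argument in the decisive regime to be tuned to deliver precisely a factor-$\tfrac32$ reduction per individualized vertex.
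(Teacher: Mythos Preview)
Your outline has a genuine gap: the ``decisive regime'' is exactly the heart of the matter and you have not established it. The claim that the absence of a cheap individualization forces a flip-split whose parts satisfy $\operatorname{vc}(P_i)\le r-\tfrac32|X|$ is neither obvious nor, as stated, clearly true. Flips can \emph{increase} the vertex cover number (they may create edges inside $I$), so tracking $\operatorname{vc}$ through the I/R procedure is fragile; and even when a split occurs there is no reason a single component cannot retain almost all of $S$. The scenario you worry about---a long run of individualizations each making only local progress and never disconnecting the graph---is not something to be ruled out but something that actually happens, and your framework has no mechanism for crediting that progress.

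The paper sidesteps all of this by abandoning vertex cover number as the potential and instead tracking, for a fixed cover $S$, the quantity
\[
 p(\chi)\;=\;|\chi(S)| \;+\; |\chi(S)\setminus\chi(I)|,
\]
which is bounded by $2r$ and invariant under flips (flips do not change $\chi$). It proves, via two case lemmas, that in any connected stable situation one can individualize one or two vertices so that $p$ grows by at least $3$ per individualization; when $\chi(S)\cap\chi(I)\neq\emptyset$ it restricts to edge-removing flips so that $S$ remains a cover, and it carries an extra indicator $\delta_{\chi,S}$ to pay for the single step that makes the two colour sets disjoint. This potential-function argument never needs the graph to disconnect and works entirely with $1$-WL. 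If you want to pursue a vertex-cover-based induction, you would at minimum have to (i) control how flips interact with the cover and (ii) replace your unproven split lemma by something concrete---but the potential $p$ above is the device that makes the $\tfrac23$ emerge cleanly.
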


The starting point for our analysis is the following observation.
Suppose that for a graph $G$, there are vertices $v_1,\dots,v_s$ such that, after individualizing these vertices (i.e., assigning a unique color to each of them) and performing $1$-WL (also known as the \emph{Color Refinement algorithm}), we obtain a discrete coloring, i.e.\ one where all vertices receive different colors.
Then it is well known that the WL dimension of $G$ is at most $s+1$.
This idea builds the foundation of the Individualization/Refinement (I/R) paradigm, which is most prominently used in practical graph isomorphism tools, see, e.g., \cite{JunttilaK07,JunttilaK11,McKay81,McKayP14}, but it has also been used in theoretical work, such as \cite{Babai80, Babai81, BabaiCSTW13, KieferPS19, Spielman96}.

Unfortunately, the method does not suffice to obtain good upper bounds on the WL dimension of $n$-vertex graphs, since the number of vertices that need to be individualized might be high.
Indeed, for the complete graph $K_n$ on $n$ vertices, it is easy to see that $n-1$ vertices need to be individualized so that applying $1$-WL results in a discrete coloring.
To overcome this issue, we adopt a strategy inspired by practical isomorphism solvers \cite{JunttilaK11}, which is to allow for simple modifications to the graph without changing the problem at hand, as well as treating components of a graph independently.
For example, the WL dimension of a graph $G$ is equal to the WL dimension of $G$ with complemented edge set. For $K_n$, complementing the edge set yields isolated vertices.
Now, using that the WL dimension of a graph $G$ is at most $\max\{2,k\}$, where $k$ is the maximum dimension of its connected components, we obtain that the WL dimension of a complete graph is at most $2$.

It turns out that a combination of these ideas is already very powerful and suffices for us to show Theorems \ref{thm:main} and \ref{thm:main-vc}. We concretize the approach in a new concept, which we call the \emph{WL depth} of a graph.  
To obtain our results, we first show that if $G$ has $k$-WL depth at most $\ell$, then its WL dimension is at most $\max\{2,k\} + \ell$.
We then prove that the $1$-WL depth of a graph is at most $\frac{2}{3}r + 1$, where $r$ is the vertex cover number of the graph.
This implies Theorem \ref{thm:main-vc}.
Afterwards, we show that the $2$-WL depth of any $n$-vertex graph is at most $\frac{n}{4} + o(n)$, resulting in Theorem \ref{thm:main}.
This part is our main technical contribution and may be of independent interest since it also provides bounds on the possible depth of branching trees considered by practical isomorphism tests \cite{JunttilaK11}.

\paragraph{Outline.}

After covering the necessary preliminaries in the next section, we introduce the WL depth of a graph in Section \ref{sec:wl-depth} and prove various basic properties for it.
This notion is the key concept underlying our main results.
After that, we prove Theorem \ref{thm:main-vc} in Section \ref{sec:vc} and Theorem \ref{thm:main} in Section \ref{sec:advanced}.
Finally, we prove a more precise lower bound on the WL dimension of $n$-vertex graphs in Section \ref{sec:lower-bound}.

\section{Preliminaries}

\subsection{Graphs and Colorings}

An (undirected) \emph{graph} is a pair $G = (V(G),E(G))$ of a finite, non-empty \emph{vertex set} $V(G)$ and an \emph{edge set} $E(G) \subseteq \big\{\{u,v\} \, \big\vert \, u,v \in V(G), u \neq v\big\}$.
For $v,w \in V(G)$, we also write $vw$ as a shorthand for $\{v,w\}$.
The \emph{neighborhood} of~$v$ in $G$ is~$N_G(v) \coloneqq \{w \mid \{v,w\} \in E(G)\}$ and the \emph{degree} of $v$ in $G$ is $\deg_G(v) \coloneqq |N_G(v)|$.
The \emph{closed neighborhood} of $v$ in $G$ is $N_G[v] \coloneqq N_G(v) \cup \{v\}$.
For $W \subseteq V(G)$, we also define $N_G(W) \coloneqq \left(\bigcup_{v \in W}N(v)\right) \setminus W$.
If the graph is clear from the context, we usually omit the subscript.

For a graph $G$ and $U,W \subseteq V(G)$, we define $E_G(U,W) \coloneqq \{uw \in E(G) \mid u \in U, w \in W\}$.
We also denote by $G[W]$ the \emph{induced subgraph} of $G$ on the vertex set $W$, and define $G - W \coloneqq G[V(G) \setminus W]$.
For disjoint subsets $U,W \subseteq V(G)$, we define the bipartite graph $G[U,W]$ with vertex set $U \cup W$ and edge set $E_G(U,W)$.
A bipartite graph $G$ with bipartition $(V_1,V_2)$ is called \emph{biregular} if $\deg(v_1) = \deg(v_1')$ for all $v_1,v_1' \in V_1$, and $\deg(v_2) = \deg(v_2')$ for all $v_2,v_2' \in V_2$.

For a graph $G$ and two vertices $v,w \in V(G)$, a \emph{path of length $\ell$} from $v$ to $w$ is a sequence $v = u_0,u_1,\dots,u_\ell = w$ of pairwise distinct vertices such that $u_{i-1}u_i \in E(G)$ for all $i \in [\ell]$ (where $[\ell] \coloneqq \{1,\dots,\ell\}$).
The \emph{distance} between $v$ and $w$, denoted by $\dist_G(v,w)$, is the minimal length of a path from $v$ to $w$.

A set $S \subseteq V(G)$ is a \emph{vertex cover} of $G$ if $S \cap e \neq \emptyset$ for all $e \in E(G)$.
The \emph{vertex cover number} of $G$ is the smallest integer $r \geq 0$ such that $G$ has a vertex cover of size $r$.

A \emph{(vertex-)colored graph} is a tuple $(G,\chi)$ where $G$ is a graph and $\chi\colon V(G) \rightarrow C$ is a \emph{(vertex) coloring}, a mapping from $V(G)$ into some set $C$ of colors.
We write $\im(\chi) \coloneqq \{\chi(v) \mid v \in V(G)\}$ to denote the image of $\chi$. We generally assume that all graphs are colored even if not explicitly stated.
Typically, the color set $C$ is chosen to be an initial segment $[n]$ of the natural numbers.
We say a coloring $\chi$ is \emph{discrete} if it is injective, i.e., all color classes have size~$1$.

Let $\chi\colon U \rightarrow C$ be a coloring of the elements of some universe $U$.
For a tuple $\bar u = (u_1,\dots,u_\ell) \in U^\ell$, we define $\chi[\bar u]$ to be the coloring obtained from $\chi$ by individualizing all elements $u_1,\dots,u_\ell$, i.e., $(\chi[\bar u])(u_i) = (1,\min\{j \in [\ell] \mid u_i = u_j\})$ for all $i \in [\ell]$, and $(\chi[\bar u])(v) = (0,\chi(v))$ for all $v \in U \setminus \{u_1,\dots,u_\ell\}$.
For $u_1,\dots,u_\ell \in U$, we also write $\chi[u_1,\dots,u_\ell]$ for $\chi[(u_1,\dots,u_\ell)]$.
Moreover, slightly abusing notation, for a set $S = \{u_1,\dots,u_\ell\} \subseteq U$, we write $\chi[S]$ instead of $\chi[u_1,\dots,u_\ell]$ if the order of the elements is irrelevant to us.

An \emph{isomorphism} from $G$ to a graph $H$ is a bijection $\varphi\colon V(G) \rightarrow V(H)$ such that for all~$v,w \in V(G)$, it holds that~$\{v,w\} \in E(G)$ if and only if $\{\varphi(v),\varphi(w)\} \in E(H)$.
The graphs $G$ and $H$ are \emph{isomorphic} ($G \cong H$) if there is an isomorphism from~$G$ to~$H$.
We write $\varphi\colon G\cong H$ to denote that $\varphi$ is an isomorphism from $G$ to $H$.
Isomorphisms between colored graphs have to respect the colors of the vertices.

\subsection{Logics}

We give a brief introduction to the logics and notation we use here. For more background, we refer the reader to \cite{Grohe17,Otto17}. We write $\FO$ to denote standard \emph{first-order logic}.
For a formula $\varphi$, we write $\varphi(x_1,\dots,x_k)$ to indicate that the free variables of $\varphi$ are among the variables $\{x_1,\dots,x_k\}$.
A \emph{sentence} is a formula without free variables.
Let $G$ be a graph and $v_1,\dots,v_k \in V(G)$.
We write $G \models \varphi(v_1,\dots,v_k)$ if $G$ is a model of $\varphi$ when $x_i$ is interpreted by $v_i$.

We define \emph{first-order logic with counting quantifiers} $\LC$ to be the extension of $\FO$ by counting quantifiers of the form $\exists^{\geq j} x \varphi$.
The formula $\exists^{\geq j} x \varphi$ is satisfied over a graph $G$ if there are at least $j$ distinct elements $v \in V(G)$ that satisfy $\varphi$.
For $k \geq 1$, we define $\LCk{k}$ to be the restriction of $\LC$ to formulas with at most $k$ variables, i.e., we restrict ourselves to a set of variables of size $k$.

Let $G$ and $H$ be two graphs and let $\LL$ be a logic.
We say that a sentence $\varphi \in \LL{}$ \emph{distinguishes} $G$ and $H$ if $G \models \varphi \iff H \not\models \varphi$.

\subsection{The Weisfeiler--Leman Algorithm}

Let~$\chi_1,\chi_2\colon (V(G))^k \rightarrow C$ be colorings of the~$k$-tuples of vertices of a graph~$G$. 
We say $\chi_1$ \emph{refines} $\chi_2$, denoted $\chi_1 \preceq \chi_2$, if $\chi_1(\bar v) = \chi_1(\bar w)$ implies $\chi_2(\bar v) = \chi_2(\bar w)$ for all $\bar v,\bar w \in (V(G))^{k}$.
The colorings $\chi_1$ and $\chi_2$ are \emph{equivalent}, denoted $\chi_1 \equiv \chi_2$,  if $\chi_1 \preceq \chi_2$ and $\chi_2 \preceq \chi_1$.
The coloring $\chi_1$ \emph{strictly refines} $\chi_2$, denoted $\chi_1 \prec \chi_2$, if $\chi_1 \preceq \chi_2$ and $\chi_1 \not\equiv \chi_2$.

We describe the \emph{$k$-dimensional Weisfeiler--Leman algorithm} ($k$-WL) for $k \geq 1$.
Let $(G,\chi)$ be a colored graph.
We define the initial coloring $\WLit{k}{0}{G,\chi}\colon (V(G))^{k} \rightarrow C$ as the coloring where each tuple is colored with the isomorphism type of its underlying ordered subgraph.
More precisely, for two colored graphs $(G,\chi)$ and $(G,\chi')$, and vertices $v_1,\dots,v_k \in V(G)$ and $v_1',\dots,v_k' \in V(G')$ we have $\WLit{k}{0}{G,\chi}(v_1,\dots,v_k) = \WLit{k}{0}{G',\chi'}(v_1',\dots,v_k')$
if and only if there is an isomorphism between the induced colored graphs $G[\{v_1,\dots,v_k\}]$ and $G'[\{v'_1,\dots,v'_k\}]$ mapping $v_i$ to $v'_i$ for all $i \in [k]$.

We then recursively define the coloring $\WLit{k}{i}{G,\chi}$ obtained after $i$ rounds of the algorithm.
Suppose $k \geq 2$.
For $\bar v = (v_1,\dots,v_k) \in (V(G))^k$, set
$\WLit{k}{i+1}{G,\chi}(\bar v) \coloneqq \big(\WLit{k}{i}{G,\chi}(\bar v), \CM_i(\bar v)\big)$,
where
\[\CM_i(\bar v) \mspace{-3mu}\coloneqq\mspace{-3mu} \Big\{\!\!\Big\{\mspace{-2mu}\big(\WLit{k}{i}{G,\mspace{-3mu}\chi}(\bar v[\mspace{-1mu}w\mspace{-1mu}/\mspace{-1mu}1\mspace{-1mu}]),\mspace{-0.5mu}...\mspace{1.2mu},\mspace{-1.5mu}\WLit{k}{i}{G,\mspace{-3mu}\chi}(\bar v[\mspace{-1mu}w\mspace{-1mu}/\mspace{-1mu}k\mspace{-1mu}])\mspace{-1mu}\big) \;\mspace{-2mu}\Big\vert\mspace{-2mu}\; w \mspace{-2mu}\in\mspace{-2mu} V\mspace{-1mu}(G)\mspace{-2mu} \Big\}\!\!\Big\}\]
and $\bar v[w/i] \coloneqq (v_1,\dots,v_{i-1},w,v_{i+1},\dots,v_k)$ is the tuple obtained from substituting the $i$-th entry of $\bar v$ with $w$.

For $k = 1$, the definition is analogous, but we set
\[\CM_i(\bar v) \coloneqq \Big\{\!\!\Big\{ \WLit{1}{i}{G,\chi}(w) \;\Big\vert\; w \in N_G(v) \Big\}\!\!\Big\}.\]

By definition, $\WLit{k}{i+1}{G,\chi} \preceq \WLit{k}{i}{G,\chi}$ holds for all $i \geq 0$.
So there is a minimal~$i_\infty \geq 0$ such that $\WLit{k}{i_{\infty}}{G,\chi} \equiv \WLit{k}{i_{\infty}+1}{G,\chi}$, and we set $\WL{k}{G,\chi} \coloneqq \WLit{k}{i_\infty+1}{G,\chi}$.

To simplify notation, for $\ell < k$ and vertices $v_1,\dots,v_\ell \in V(G)$, we write $\WL{k}{G,\chi}(v_1,\dots,v_\ell)$ for $\WL{k}{G,\chi}(v_1,\dots,v_\ell,v_\ell,\dots,v_\ell)$, where the latter tuple has $k$ entries.

Also, we say a vertex coloring $\lambda\colon V(G) \to C$ is \emph{stable with respect to $k$-WL} if it is not refined by $k$-WL.
Formally, this means that $\lambda \equiv \lambda'$ where $\lambda'\colon V(G) \to C$ is defined via $\lambda'(v) \coloneqq \WL{k}{G,\lambda}(v)$ for all $v \in V(G)$.

The algorithm $k$-WL takes a vertex-colored graph $(G,\chi)$ as input and returns $\WL{k}{G,\chi}$.
Given vertex-colored graphs $(G,\chi)$ and $(H,\lambda)$, the algorithm \emph{distinguishes} $(G,\chi)$ and $(H,\lambda)$ if
\[\Big\{\!\!\Big\{\WL{k}{G,\chi}(\bar v) \;\Big\vert\; \bar v \mspace{-1.5mu}\in\mspace{-1.5mu} (V(G))^k\Big\}\!\!\Big\} \mspace{-1mu} \neq \mspace{-1mu} \Big\{\!\!\Big\{\WL{k}{H,\lambda}(\bar w) \;\Big\vert\; \bar w \mspace{-1.5mu}\in\mspace{-1.5mu} (V(H))^k\Big\}\!\!\Big\}.\]
We write $(G,\chi) \simeq_k (H,\lambda)$ if $k$-WL does not distinguish $(G,\chi)$ and $(H,\lambda)$.
Also, $k$-WL \emph{identifies} $(G,\chi)$ if it distinguishes $(G,\chi)$ from every other non-isomorphic vertex-colored graph. From the definition, it follows that all graphs that are distinguished (and identified, respectively) by $k$-WL are also distinguished (and identified, respectively) by $(k+1)$-WL.
The \emph{WL dimension} of a graph $(G,\chi)$ is the minimal integer $k \geq 1$ such that $k$-WL identifies $(G,\chi)$.

The following theorem connects the WL algorithm to first-order logic with counting quantifiers.

\begin{theorem}[\cite{CaiFI92,ImmermanL90}]
 \label{thm:eq-wl-ck}
 Let $k \geq 1$ and let $G$, $H$ be two graphs.
 Then $G \simeq_k H$ if and only if there is no sentence in $\LCk{k}$ that distinguishes $G$ and $H$.
\end{theorem}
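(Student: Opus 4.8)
The plan is to prove the statement by passing through the standard combinatorial bridge between logical and algorithmic equivalence, namely the \emph{bijective pebble game} (the counting analogue of the Ehrenfeucht--Fra\"iss\'e game). Throughout I use $k{+}1$ pebbles, matching the $k{+}1$ variables available on the logic side (cf.\ the discussion in the introduction) and the fact that $k$-WL colors $k$-tuples. Concretely, I would first set up the game $\mathrm{BP}_{k+1}$ on a pair of colored graphs $(G,\chi)$ and $(H,\lambda)$: a configuration consists of pebbled partial maps $\bar a$ in $G$ and $\bar b$ in $H$ with $|\bar a|=|\bar b|\le k{+}1$; in each round Spoiler removes a pebble pair (or, if fewer than $k{+}1$ pairs are placed, adds a new one), Duplicator answers with a bijection $f\colon V(G)\to V(H)$, and Spoiler places the chosen pebble pair on some $v\in V(G)$ and $f(v)\in V(H)$. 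Spoiler wins as soon as the pebbled configuration is not a partial isomorphism of the colored graphs (so Spoiler wins immediately if $|V(G)|\neq|V(H)|$, since then Duplicator has no bijection), and Duplicator wins if she can play forever. I would then establish the two halves: (i) Duplicator wins $\mathrm{BP}_{k+1}$ from the empty configuration iff $G$ and $H$ satisfy the same $\LCk{k+1}$-sentences; and (ii) Duplicator wins $\mathrm{BP}_{k+1}$ from the empty configuration iff $G \simeq_k H$. Chaining (i) and (ii) gives the theorem.

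For (i) I would use characteristic (Hintikka-style) formulas. By induction on a round bound $q$, for each configuration $(\bar a,\bar b)$ I build an $\LCk{k+1}$-formula $\varphi^q_{\bar a}(\bar x)$, reusing the same $k{+}1$ variables at every level of the induction, such that $H\models\varphi^q_{\bar a}(\bar b)$ iff Duplicator can survive $q$ further rounds from $(\bar a,\bar b)$; the unbounded game is recovered by taking $q$ past the point where the $q$-round games stabilize, which happens because $G$ and $H$ are finite, so only finitely many configuration-types ever occur. The base case $\varphi^0_{\bar a}$ just records the isomorphism type of the colored subgraph pebbled by $\bar a$. The interesting part of the step is exactly where the counting quantifiers enter: surviving one more round means Duplicator has a bijection that, for each free pebble slot $i$ and each $v$, preserves the $q$-round type of the configuration obtained by moving slot $i$ onto $v$; such a bijection exists iff for every $q$-round type $\tau$ the number of vertices $v$ realizing $\tau$ is the same in $G$ as in $H$, which one expresses by pinning down, for each of the finitely many $\tau$, the exact number of its witnesses (via $\exists^{\ge j}x_i(\dots)\wedge\neg\exists^{\ge j+1}x_i(\dots)$) and conjoining over $\tau$. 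The converse half of (i) --- a distinguishing $\LCk{k+1}$-sentence yields a Spoiler strategy --- is the routine ``descend along the syntax tree'' argument, in which a counting quantifier on which $G$ and $H$ disagree is defeated by a pigeonhole choice of $v$ in whichever structure has more witnesses of the relevant sub-type; care is needed to reuse variables so that the strategy stays within $k{+}1$ pebbles.

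For (ii) I would prove, by induction on $r$, that for all $k$-tuples $\bar v\in(V(G))^k$ and $\bar w\in(V(H))^k$ the colors $\WLit{k}{r}{G,\chi}(\bar v)$ and $\WLit{k}{r}{H,\lambda}(\bar w)$ coincide iff Duplicator can survive $r$ rounds of $\mathrm{BP}_{k+1}$ from the configuration in which the first $k$ pebble pairs sit on $(v_i,w_i)$ for $i\in[k]$ (leaving the $(k{+}1)$-st pair free; this is the pebble that models the substitution operation in the WL update). The case $r=0$ is immediate from the definition of the initial coloring as the ordered colored-isomorphism type, which is precisely the partial-isomorphism condition. For the step, the round-$(r{+}1)$ color of $\bar v$ refines its round-$r$ color exactly by the multiset $\CM_r(\bar v)$ over $w\in V(G)$ of tuples $(\WLit{k}{r}{G,\chi}(\bar v[w/1]),\dots,\WLit{k}{r}{G,\chi}(\bar v[w/k]))$; two such multisets (for $\bar v$ and for $\bar w$) are equal iff there is a bijection $f$ with $\WLit{k}{r}{G,\chi}(\bar v[w/i])=\WLit{k}{r}{H,\lambda}(\bar w[f(w)/i])$ for all $w$ and $i$, which by the induction hypothesis is exactly Duplicator surviving one more round. (For $k=1$, i.e.\ Color Refinement, this is the same argument with the neighborhood-multiset update.) Taking $r$ up to the stabilization round identifies the stable coloring $\WL{k}{G,\chi}$ with the game type, so ``$k$-WL distinguishes $G$ and $H$'' --- the multisets of stable colors over all $k$-tuples differ --- coincides with ``Spoiler wins $\mathrm{BP}_{k+1}$''. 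I expect the main obstacle, and the part that needs the most care, to be the bookkeeping in these two inductions: keeping the characteristic formulas within $k{+}1$ variables while re-pebbling (so that nested quantifiers reuse variable names), and correctly aligning the tuple-indexed WL update $\CM_r$ with the one-pebble-at-a-time moves of the game, including the off-by-one between the WL dimension and the number of pebbles/variables. Once these correspondences are pinned down precisely, the remainder is standard.
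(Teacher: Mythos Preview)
The paper does not prove this theorem; it is stated with attribution to \cite{CaiFI92,ImmermanL90} and used as a black box. So there is no ``paper's own proof'' to compare against. Your outline via the bijective $(k{+}1)$-pebble game and Hintikka-style characteristic formulas is exactly the standard argument from those references and is correct in substance.

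One point to flag: as printed, the theorem says $\LCk{k}$, whereas the introduction (and your proof) use $\LCk{k+1}$. With the paper's convention that $k$-WL colors $k$-tuples, the correct correspondence is indeed with $\LCk{k+1}$, and the $(k{+}1)$-pebble game you set up is the right bridge; the extra pebble is precisely what models the substitution $\bar v[w/i]$ in the refinement step. So your argument proves the version stated in the introduction, which is the one actually used later (the ``$+1$'' is absorbed into the $o(n)$ in Corollary~\ref{cor:main}). If you want to match the theorem literally as written, you would have to shift indices, but that would contradict the paper's own definitions; it is safer to treat the $\LCk{k}$ in the theorem statement as a typo for $\LCk{k+1}$ and proceed as you have.
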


We also require the following lemma.

\begin{lemma}
 \label{la:indiv-dimension}
 Let $k \in \mathbb{N}$, let $(G,\chi)$ be a vertex-colored graph, and let $u \in V(G)$.
 Suppose $k$-WL identifies the graph $(G, \chi[u])$.
 Then $(k + 1)$-WL identifies $(G,\chi)$.
\end{lemma}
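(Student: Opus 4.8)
The plan is to route everything through the logical characterization of Theorem~\ref{thm:eq-wl-ck} and then build, for an arbitrary counterexample candidate, a distinguishing $\LCk{k+1}$ sentence by ``promoting'' the individualized vertex to an existentially quantified variable. So I would assume $k$-WL identifies $(G,\chi[u])$, fix any colored graph $(H,\lambda)$ with $(H,\lambda)\not\cong(G,\chi)$, and observe that, by Theorem~\ref{thm:eq-wl-ck}, it suffices to exhibit a sentence $\Psi\in\LCk{k+1}$ with $(G,\chi)\models\Psi$ and $(H,\lambda)\not\models\Psi$.

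The core step is to produce, for every vertex $v\in V(H)$, a formula $\psi_v^+(x)\in\LCk{k+1}$ with a single free variable $x$ such that $(G,\chi)\models\psi_v^+(u)$ but $(H,\lambda)\not\models\psi_v^+(v)$. I would distinguish two cases. If $(H,\lambda[v])\not\cong(G,\chi[u])$, then, since $k$-WL identifies $(G,\chi[u])$, Theorem~\ref{thm:eq-wl-ck} supplies a sentence $\psi_v\in\LCk{k}$ distinguishing $(G,\chi[u])$ and $(H,\lambda[v])$; after possibly replacing $\psi_v$ by its negation we may assume $(G,\chi[u])\models\psi_v$ and $(H,\lambda[v])\not\models\psi_v$. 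The routine syntactic translation encoding ``individualize a vertex'' into a fresh free variable --- replace every atom ``$z$ carries the individualized color'' by $z=x$, and ``$z$ carries non-individualized color $c$'' by $z\neq x\wedge$``$z$ carries color $c$'' --- turns $\psi_v$ into a formula $\psi_v^+(x)$ using exactly one additional variable, hence in $\LCk{k+1}$, and satisfies $(G,\chi)\models\psi_v^+(u)\iff(G,\chi[u])\models\psi_v$ and $(H,\lambda)\models\psi_v^+(v)\iff(H,\lambda[v])\models\psi_v$, which yields the required behavior. If instead $(H,\lambda[v])\cong(G,\chi[u])$, any witnessing isomorphism must send $v$ to $u$ (the unique individualized vertices) and preserve colors on all other vertices, so it would already be an isomorphism $(H,\lambda)\cong(G,\chi)$ unless $\chi(u)\neq\lambda(v)$; as $(H,\lambda)\not\cong(G,\chi)$, we get $\chi(u)\neq\lambda(v)$, and we may take $\psi_v^+(x)$ to be the atomic formula ``$x$ carries color $\chi(u)$'', which again gives $(G,\chi)\models\psi_v^+(u)$ and $(H,\lambda)\not\models\psi_v^+(v)$.

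Finally I would assemble
\[
 \Psi \;\coloneqq\; \exists x\,\bigwedge_{v\in V(H)}\psi_v^+(x),
\]
which, after renaming bound variables so that all the $\psi_v^+$ share one fixed set of $k+1$ variables, lies in $\LCk{k+1}$ because $V(H)$ is finite. Choosing $x\coloneqq u$ shows $(G,\chi)\models\Psi$, since $(G,\chi)\models\psi_v^+(u)$ for every $v$. Conversely $(H,\lambda)\not\models\Psi$, because any would-be witness $v'\in V(H)$ would in particular satisfy the conjunct $\psi_{v'}^+(v')$, contradicting its defining property. Hence $\Psi$ distinguishes $(G,\chi)$ and $(H,\lambda)$, and Theorem~\ref{thm:eq-wl-ck} yields that $(k+1)$-WL identifies $(G,\chi)$.

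The obstacle here is not conceptual but lies in getting the bookkeeping right at two places. First, one must use a \emph{single} existentially quantified $x$ shared by all conjuncts (rather than a separate witness per $v$) so as not to exceed $k+1$ variables, and verify carefully that the individualization-to-free-variable translation really costs only one extra variable. Second, there is the subtlety that individualizing a vertex overwrites its original color, so one genuinely has to handle the case in which individualizing some $v\in V(H)$ makes $H$ isomorphic to $(G,\chi[u])$ even though $H\not\cong G$ --- this is precisely the color-mismatch case $\chi(u)\neq\lambda(v)$ dealt with above. Everything else is a direct invocation of Theorem~\ref{thm:eq-wl-ck}.
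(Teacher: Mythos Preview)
Your argument is correct, but it takes a genuinely different route from the paper. The paper's proof is purely combinatorial and avoids the logical characterization altogether: given any $(G',\chi')$ that $(k+1)$-WL does not distinguish from $(G,\chi)$, it picks $u' \in V(G')$ with $\WL{k+1}{G',\chi'}(u') = \WL{k+1}{G,\chi}(u)$, invokes the standard fact that equal $(k+1)$-WL colors of $u$ and $u'$ force $(G,\chi[u]) \simeq_k (G',\chi'[u'])$, and then applies the identification hypothesis directly to obtain an isomorphism (which automatically respects the original vertex colors, since $\chi(u) = \chi'(u')$ is already encoded in the $(k+1)$-WL color). The color-mismatch case that you single out therefore never arises in the paper's argument. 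Your approach instead builds, for each candidate witness $v$ in a non-isomorphic $H$, an explicit refuting $\LCk{k+1}$ formula and then conjoins and quantifies. The paper's proof is shorter and sidesteps the syntactic bookkeeping of the individualization-to-free-variable translation; your argument, by going through Theorem~\ref{thm:eq-wl-ck}, is more explicit about the distinguishing sentence and makes the ``one extra variable buys one individualization'' slogan concrete at the level of syntax.
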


\begin{proof}
 Suppose there is a vertex-colored graph $(G',\chi')$ with a vertex $u' \in V(G')$ such that $\WL{k+1}{G',\chi'}(u') = \WL{k+1}{G,\chi}(u)$.
 Hence, we have that
 \begin{align*}
     &\{\!\!\{\WL{k+1}{G,\chi}(u, w_1, \dots, w_k) \mid (w_1, \dots , w_k) \in V^k(G)\}\!\!\}\\
  =~ &\{\!\!\{\WL{k+1}{G',\chi'}(u', w'_1, \dots, w'_k) \mid (w'_1, \dots , w'_k) \in V^k(G')\}\!\!\}.
 \end{align*}
 Thus, the graphs $(G,\chi[u])$ and $(G',\chi'[u'])$ obtain equal colorings under $k$-WL.
 By assumption, this implies that $(G,\chi[u])$ and $(G',\chi'[u'])$ are isomorphic, which is equivalent to the existence of an isomorphism from $(G,\chi)$ to $(G',\chi')$ mapping $u$ to $u'$.
\end{proof}

\section{The Weisfeiler--Leman Depth}
\label{sec:wl-depth}

In this section, we introduce the $k$-WL depth of a colored graph $(G,\chi)$, which is the key notion underlying our analysis.
Let $k \geq 1$.
Intuitively speaking, the $k$-WL depth is the minimum number of vertices that need to be individualized in order to obtain a discrete coloring after performing $k$-WL, except that we allow for some simple additional operations that do not affect the WL dimension.
The first such operation is the possibility to split a graph into connected components.
Indeed, it is known that if $k \geq 2$ and $k$-WL identifies every connected component of a graph, then $k$-WL also identifies $G$.
So we may always restrict ourselves to the analysis of connected graphs when we are interested in bounding the WL dimension.
In itself, this operation is of limited use, since it can only be applied at most once.
For this reason, we also consider operations that allow us to decrease the number of edges (hoping to make the graph disconnected again).

The most basic operation is to complement the entire edge set of $G$, which is already sufficient to handle complete graphs.
However, in a vertex-colored graph, we can also complement edges between two color classes without changing the WL dimension.
More generally, we can choose any number of pairs of vertex colors and complement edges exactly between the chosen pairs.
To formalize this idea, we introduce the notion of a flip function.

Let $(G,\chi)$ be a vertex-colored graph and let $C \coloneqq \im(\chi)$.
A \emph{flip function for $(G,\chi)$} is a function $f\colon C \times C \to \{0,1\}$ such that $f(c_1,c_2) = f(c_2,c_1)$ for all $c_1,c_2 \in C$.

Let $f$ be a flip function for $(G,\chi)$.
We define the \emph{$f$-flip of $(G,\chi)$} to be the graph $\flipf_f(G,\chi) \coloneqq (G',\chi)$ where $V(G') = V(G)$ and
\begin{align*}
 E(G') \coloneqq\;\;\; &\{vw \in E(G) \mid f(\chi(v),\chi(w)) = 0\}\\
            \cup\;     &\{vw \notin E(G) \mid v \neq w, f(\chi(v),\chi(w)) = 1\}.
\end{align*}
Intuitively speaking, the $f$-flip of $(G,\chi)$ is obtained from $(G,\chi)$ by complementing the edges between all color classes associated with $c_1,c_2 \in C$ for which $f(c_1,c_2) = 1$.

We will mostly use one particular flip function $f$, which aims at minimizing the number of edges in the $f$-flip.
We define $f^*_{G,\chi} \colon C \times C \to \{0,1\}$ via $f^*_{G,\chi}(c_1,c_2) = 1$ if and only if
\[\left|E_G(\chi^{-1}\mspace{-1mu}(c_1),\mspace{-2mu}\chi^{-1}\mspace{-2mu}(c_2))\right| > \mspace{-2mu}\frac{1}{2}\left|\Big\{vw \mspace{-2mu}\in\mspace{-2mu} \binom{V(G)}{2} \;\Big|\; \chi(v) \mspace{-1mu}= \mspace{-1mu}c_1, \mspace{-2mu}\chi(w) \mspace{-1mu}= \mspace{-1mu}c_2\Big\}\right|\mspace{-1mu}.\]
We define the \emph{flip} of $(G,\chi)$ as $\flip(G,\chi) \coloneqq \flipf_{f_{G,\chi}^*}(G,\chi)$.
We say that $(G,\chi)$ \emph{is flipped} if $(G,\chi) = \flip(G,\chi)$.
Observe that flipping is idempotent, i.e., the flip of $(G,\chi)$ is flipped.

With these definitions at hand, we are now ready to formally define the $k$-WL depth.
As already indicated above, the concept aims at describing the minimum number of vertices that need to be individualized in order to obtain a discrete coloring after performing $k$-WL, but we additionally allow splitting the graph into its connected components (which are then treated independently) and transitioning to the $f$-flip for any flip function $f$.
More precisely, we wish to arrive at a graph with a discrete coloring by applying the operations (1) refine the coloring using $k$-WL, (2) move to the $f$-flip for any flip function $f$, (3) split the graph into connected components and treat them independently, and (4) individualize some vertex.
In doing so, we aim to minimize the number of vertex individualizations.
This is formalized in the next definition.

\begin{definition}
 \label{def:irc}
 Let $k \geq 1$.
 A \emph{$k$-IRC tree} of $(G,\chi)$ is a triple $(T,r,\gamma)$ where $T$ is a tree with root $r \in V(T)$,
 and $\gamma$ maps each $t \in V(T)$ to a vertex-colored graph $\gamma(t) = (G_t,\chi_t)$ such that $\gamma(r) = (G,\chi)$, and for each leaf $t$ of $T$, it holds that $|V(G_t)| = 1$, and additionally, each internal node $t$ of $T$ satisfies (at least) one of the following:
 \begin{enumerate}[label=(D.\arabic*)]
  \item\label{item:refine} $t$ has exactly one child $s$ and $G_s = G_t$ (uncolored) and $\chi_s(v) = \WL{k}{G,\chi_t}(v)$ for all $v \in V(G_t)$,
  \item\label{item:flip} $t$ has exactly one child $s$ and $(G_s,\chi_s) = \flipf_f(G_t,\chi_t)$ for some flip function $f$ of $(G_t,\chi_t)$,
  \item\label{item:component} $G_t$ has exactly $\ell$ connected components with vertex sets $A_1,\dots,A_\ell$, and $t$ has exactly $\ell$ children $s_1,\dots,s_\ell$ and it holds that $(G_{s_i},\chi_{s_i}) = (G_t[A_i],\chi_t|_{A_i})$ for all $i \in [\ell]$, or
  \item\label{item:individualize} $t$ has exactly one child $s$ and $G_s = G_t$ and $\chi_s = \chi_t[u]$ for some $u \in V(G_t)$.
 \end{enumerate}
\end{definition}

There may be situations where an internal node $t$ satisfies more than one of the above conditions.
We say the node $t$ is an \emph{individualization node} if it satisfies Condition \ref{item:individualize} and none of the other conditions.
We define the \emph{individualization depth} of $(T,r,\gamma)$ to be the maximum number of individualization nodes on any root-to-leaf path in $(T,r)$.
The \emph{$k$-WL depth of $(G,\chi)$} is the minimum individualization depth of a $k$-IRC tree of $(G,\chi)$ and we denote it by $\WLd{k}{G,\chi}$.

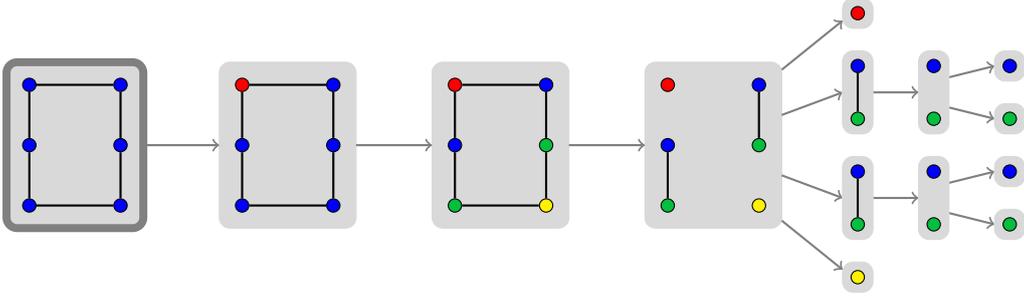
\begin{figure*}
 \centering
 \begin{tikzpicture}
  \draw[gray,line width=3pt,fill=gray!30,rounded corners] (-0.3,-0.3) rectangle (1.5,1.9);
  \draw[gray!30,fill=gray!30,rounded corners] (2.5,-0.3) rectangle (4.3,1.9);
  \draw[gray!30,fill=gray!30,rounded corners] (5.3,-0.3) rectangle (7.1,1.9);
  \draw[gray!30,fill=gray!30,rounded corners] (8.1,-0.3) rectangle (9.9,1.9);

  \draw[thick,gray,->] (1.5,0.8) -- (2.5,0.8);
  \draw[thick,gray,->] (4.3,0.8) -- (5.3,0.8);
  \draw[thick,gray,->] (7.1,0.8) -- (8.1,0.8);

  \draw[thick,gray,->] (9.9,1.8) -- (10.7,2.45);
  \draw[thick,gray,->] (9.9,1.2) -- (10.7,1.5);
  \draw[thick,gray,->] (9.9,0.4) -- (10.7,0.1);
  \draw[thick,gray,->] (9.9,-0.2) -- (10.7,-0.85);

  \draw[thick,gray,->] (11.1,1.5) -- (11.7,1.5);
  \draw[thick,gray,->] (11.1,0.1) -- (11.7,0.1);

  \draw[thick,gray,->] (12.1,1.3) -- (12.7,1.15);
  \draw[thick,gray,->] (12.1,1.7) -- (12.7,1.85);
  \draw[thick,gray,->] (12.1,0.3) -- (12.7,0.45);
  \draw[thick,gray,->] (12.1,-0.1) -- (12.7,-0.25);

  \node[vertex,fill=blue] (1-1) at (0,0) {};
  \node[vertex,fill=blue] (1-2) at (0,0.8) {};
  \node[vertex,fill=blue] (1-3) at (0,1.6) {};
  \node[vertex,fill=blue] (1-4) at (1.2,0) {};
  \node[vertex,fill=blue] (1-5) at (1.2,0.8) {};
  \node[vertex,fill=blue] (1-6) at (1.2,1.6) {};

  \draw[edge] (1-1) -- (1-2);
  \draw[edge] (1-1) -- (1-4);
  \draw[edge] (1-2) -- (1-3);
  \draw[edge] (1-3) -- (1-6);
  \draw[edge] (1-4) -- (1-5);
  \draw[edge] (1-5) -- (1-6);

  \node[vertex,fill=blue] (2-1) at (2.8,0) {};
  \node[vertex,fill=blue] (2-2) at (2.8,0.8) {};
  \node[vertex,fill=red] (2-3) at (2.8,1.6) {};
  \node[vertex,fill=blue] (2-4) at (4.0,0) {};
  \node[vertex,fill=blue] (2-5) at (4.0,0.8) {};
  \node[vertex,fill=blue] (2-6) at (4.0,1.6) {};

  \draw[edge] (2-1) -- (2-2);
  \draw[edge] (2-1) -- (2-4);
  \draw[edge] (2-2) -- (2-3);
  \draw[edge] (2-3) -- (2-6);
  \draw[edge] (2-4) -- (2-5);
  \draw[edge] (2-5) -- (2-6);

  \node[vertex,fill=darkpastelgreen] (3-1) at (5.6,0) {};
  \node[vertex,fill=blue] (3-2) at (5.6,0.8) {};
  \node[vertex,fill=red] (3-3) at (5.6,1.6) {};
  \node[vertex,fill=yellow] (3-4) at (6.8,0) {};
  \node[vertex,fill=darkpastelgreen] (3-5) at (6.8,0.8) {};
  \node[vertex,fill=blue] (3-6) at (6.8,1.6) {};

  \draw[edge] (3-1) -- (3-2);
  \draw[edge] (3-1) -- (3-4);
  \draw[edge] (3-2) -- (3-3);
  \draw[edge] (3-3) -- (3-6);
  \draw[edge] (3-4) -- (3-5);
  \draw[edge] (3-5) -- (3-6);

  \node[vertex,fill=darkpastelgreen] (4-1) at (8.4,0) {};
  \node[vertex,fill=blue] (4-2) at (8.4,0.8) {};
  \node[vertex,fill=red] (4-3) at (8.4,1.6) {};
  \node[vertex,fill=yellow] (4-4) at (9.6,0) {};
  \node[vertex,fill=darkpastelgreen] (4-5) at (9.6,0.8) {};
  \node[vertex,fill=blue] (4-6) at (9.6,1.6) {};

  \draw[edge] (4-1) -- (4-2);
  \draw[edge] (4-5) -- (4-6);

  \draw[gray!30,fill=gray!30,rounded corners] (10.9 - 0.2,2.55 - 0.2) rectangle (10.9 + 0.2,2.55 + 0.2);
  \draw[gray!30,fill=gray!30,rounded corners] (10.9 - 0.2,1.15 - 0.2) rectangle (10.9 + 0.2,1.85 + 0.2);
  \draw[gray!30,fill=gray!30,rounded corners] (10.9 - 0.2,-0.25 - 0.2) rectangle (10.9 + 0.2,0.45 + 0.2);
  \draw[gray!30,fill=gray!30,rounded corners] (10.9 - 0.2,-0.95 - 0.2) rectangle (10.9 + 0.2,-0.95 + 0.2);

  \node[vertex,fill=darkpastelgreen] (5-1) at (10.9,1.15) {};
  \node[vertex,fill=red] (5-3) at (10.9,2.55) {};
  \node[vertex,fill=blue] (5-5) at (10.9,1.85) {};

  \node[vertex,fill=darkpastelgreen] (5-2) at (10.9,-0.25) {};
  \node[vertex,fill=yellow] (5-4) at (10.9,-0.95) {};
  \node[vertex,fill=blue] (5-6) at (10.9,0.45) {};

  \draw[edge] (5-1) -- (5-5);
  \draw[edge] (5-2) -- (5-6);

  \draw[gray!30,fill=gray!30,rounded corners] (11.9 - 0.2,1.15 - 0.2) rectangle (11.9 + 0.2,1.85 + 0.2);
  \draw[gray!30,fill=gray!30,rounded corners] (11.9 - 0.2,-0.25 - 0.2) rectangle (11.9 + 0.2,0.45 + 0.2);

  \node[vertex,fill=darkpastelgreen] (6-1) at (11.9,1.15) {};
  \node[vertex,fill=blue] (6-5) at (11.9,1.85) {};

  \node[vertex,fill=darkpastelgreen] (6-2) at (11.9,-0.25) {};
  \node[vertex,fill=blue] (6-6) at (11.9,0.45) {};

  \foreach \y in {1.15,1.85,-0.25,0.45}{
   \draw[gray!30,fill=gray!30,rounded corners] (12.9 - 0.2,\y - 0.2) rectangle (12.9 + 0.2,\y + 0.2);
  }

  \node[vertex,fill=darkpastelgreen] (7-1) at (12.9,1.15) {};
  \node[vertex,fill=blue] (7-5) at (12.9,1.85) {};

  \node[vertex,fill=darkpastelgreen] (7-2) at (12.9,-0.25) {};
  \node[vertex,fill=blue] (7-6) at (12.9,0.45) {};

 \end{tikzpicture}
 \caption{A $1$-IRC tree of $C_6$.
  Every gray region corresponds to one node $t$ of the tree with the colored graph $\gamma(t)$ drawn inside.
  The root node is the only individualization node.
  Hence, the individualization depth of the $1$-IRC tree is equal to $1$.}
 \label{fig:example}
\end{figure*}

We start by determining the WL depth of some simple examples.

\begin{example}
 A $1$-IRC tree of the $6$-cycle $C_6$ (where every vertex receives the same color) is shown in Figure \ref{fig:example}.
 The $1$-IRC tree has individualization depth $1$, and hence $\WLd{1}{C_6} \leq 1$.
 Actually, it is not difficult to see that $\WLd{1}{C_6} = 1$.
\end{example}

\begin{example}
 \label{exa:complete-graph}
 Every vertex-colored complete graph $(K_n,\chi)$ has $k$-WL depth $0$ for all $k \geq 1$.
 Indeed, the graph $\flip(K_n,\chi)$ has no edges.
 So after splitting $\flip(K_n,\chi)$ into its connected components, we obtain that each one of them has only one vertex and thus satisfies the condition to be a leaf.
\end{example}

\begin{example}
 \label{exa:discrete-coloring}
 Let $(G,\chi)$ be a graph such that $\chi$ is a discrete coloring.
 Then $\flip(G,\chi)$ has no edges, so the graph can be split into connected components of size $1$.
 This implies that $\WLd{k}{G,\chi} = 0$ for all $k \geq 1$.

 More generally, if applying $1$-WL to a colored graph $(G,\chi)$ results in a discrete coloring, then $\WLd{k}{G,\chi} = 0$ for all $k \geq 1$.
 In particular, $\WLd{1}{G,\chi} = 0$ for almost all graphs $(G,\chi)$, see \cite{BabaiES80}.
\end{example}

The following lemma will turn out to be helpful when proving upper bounds on the WL depth.

\begin{lemma}
 \label{la:wl-depth-bound}
 Let $(G,\chi)$ be a colored graph and let $k \in \mathbb{N}$.
 Then
 \begin{enumerate}[label=(\arabic*)]
  \item\label{item:wl-depth-bound-1} $\WLd{k}{G,\chi} \leq \WLd{k}{G,\chi'}$ for the coloring $\chi'$ with $\chi'(v) = \WL{k}{G,\chi}(v)$ for all $v \in V(G)$;
  \item\label{item:wl-depth-bound-2} $\WLd{k}{G,\chi} \leq \WLd{k}{\flipf_f(G,\chi)}$ for every flip function $f$ of $(G,\chi)$;
  \item\label{item:wl-depth-bound-3} $\WLd{k}{G,\chi} \leq \max_{i \in [\ell]}\WLd{k}{G[A_i],\chi|_{A_i}}$, where $A_1,\dots,A_\ell$ are the vertex sets of the connected components of $G$; and
  \item\label{item:wl-depth-bound-4} $\WLd{k}{G,\chi} \leq 1 + \WLd{k}{G,\chi[u]}$ for every $u \in V(G)$.
 \end{enumerate}
\end{lemma}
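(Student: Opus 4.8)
The plan is to prove all four inequalities in a uniform way: starting from an optimal $k$-IRC tree for the colored graph appearing on the right-hand side, I would build a $k$-IRC tree of $(G,\chi)$ by grafting a single new root on top of it, where the edge(s) leaving this new root realize exactly one of the operations \ref{item:refine}, \ref{item:flip}, \ref{item:component}, or \ref{item:individualize} from Definition \ref{def:irc}. Since every original node retains its role and the new root satisfies one of the four defining conditions, the result is again a valid $k$-IRC tree of $(G,\chi)$; the only remaining task is to track how the number of individualization nodes along root-to-leaf paths changes.

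For \ref{item:wl-depth-bound-1}, let $\chi'$ be the coloring with $\chi'(v) = \WL{k}{G,\chi}(v)$ for all $v \in V(G)$, and let $(T',r',\gamma')$ be a $k$-IRC tree of $(G,\chi')$ realizing $\WLd{k}{G,\chi'}$. I would let $T$ be $T'$ with an extra root $r$ added whose unique child is $r'$, set $\gamma(r) \coloneqq (G,\chi)$, and keep $\gamma$ unchanged elsewhere. Then $r$ satisfies \ref{item:refine}, since $G_{r'} = G$ (uncolored) and $\chi_{r'} = \chi'$ is precisely $\WL{k}{G,\chi}(\cdot)$; the leaves and all old internal nodes keep their conditions, so this is a $k$-IRC tree of $(G,\chi)$. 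As $r$ satisfies \ref{item:refine}, it is not an individualization node, so the individualization depth is unchanged and the inequality follows. Items \ref{item:wl-depth-bound-2} and \ref{item:wl-depth-bound-3} go the same way: for \ref{item:wl-depth-bound-2} the unique child of $r$ is the root of an optimal tree of $\flipf_f(G,\chi)$ and $r$ satisfies \ref{item:flip} (using that the $f$-flip preserves both the vertex set and the coloring); for \ref{item:wl-depth-bound-3} the root $r$ has the roots of optimal trees of $(G[A_1],\chi|_{A_1}),\dots,(G[A_\ell],\chi|_{A_\ell})$ as its children and satisfies \ref{item:component}. In both cases $r$ is not an individualization node, and for \ref{item:wl-depth-bound-3} the individualization depth of the combined tree equals the maximum over the $\ell$ subtrees, which is the claimed bound.

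For \ref{item:wl-depth-bound-4}, I would again attach a fresh root $r$ with $\gamma(r) = (G,\chi)$ on top of an optimal $k$-IRC tree of $(G,\chi[u])$, making the old root its unique child. This time $r$ satisfies \ref{item:individualize}, and in the worst case -- when $r$ satisfies none of \ref{item:refine}, \ref{item:flip}, \ref{item:component} -- it counts as an individualization node, so every root-to-leaf path acquires at most one extra individualization node; hence $\WLd{k}{G,\chi} \le 1 + \WLd{k}{G,\chi[u]}$.

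There is no real computation here, so I do not expect a serious obstacle; the parts that need care are purely bookkeeping. First, one has to check in each case that the grafted object literally matches Definition \ref{def:irc}: the new root maps to $(G,\chi)$, the leaves still consist of a single vertex (immediate, since the leaves are untouched), and the inserted node meets the relevant condition verbatim -- in particular that $\flipf_f$ leaves $V(G)$ and $\chi$ intact, and that the component-splitting node conforms to \ref{item:component} even in the degenerate case $\ell = 1$. Second, one must invoke the exact definition of an \emph{individualization node} (it satisfies \ref{item:individualize} and \emph{none} of the other conditions) to be certain that the roots grafted in \ref{item:wl-depth-bound-1}--\ref{item:wl-depth-bound-3} never contribute to the individualization depth, even in degenerate situations such as $\WL{k}{G,\chi}(\cdot)$ coinciding with some $\chi[u]$.
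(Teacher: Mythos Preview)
Your proposal is correct and follows essentially the same approach as the paper: in each case one grafts a fresh root, mapped to $(G,\chi)$, on top of optimal $k$-IRC tree(s) for the right-hand side, so that the new root satisfies the corresponding option \ref{item:refine}--\ref{item:individualize}, and then reads off the individualization depth. Your extra remarks about the degenerate cases (e.g.\ $\ell=1$ in \ref{item:wl-depth-bound-3}, or the new root in \ref{item:wl-depth-bound-4} possibly satisfying another option) are harmless refinements the paper does not spell out.
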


\begin{proof}
 For Item \ref{item:wl-depth-bound-1}, let $(T',r',\gamma')$ be a $k$-IRC tree of $(G,\chi')$ of individualization depth $d$.
 We construct a $k$-IRC tree $(T,r,\gamma)$ by setting $V(T) \coloneqq V(T') \cup \{r\}$ (we assume $r \notin V(T')$), $E(T) \coloneqq E(T') \cup \{(r,r')\}$ and $\gamma(r) \coloneqq (G,\chi)$ and $\gamma(t) \coloneqq \gamma'(t)$ for all $t \in V(T')$.
 Note that $r$ satisfies Option \ref{item:refine} in Definition \ref{def:irc}, so $(T,r,\gamma)$ is a $k$-IRC tree of $(G,\chi)$ with individualization depth $d$.

 For Item \ref{item:wl-depth-bound-2}, let $f$ be a flip function of $(G,\chi)$.
 Let $(T',r',\gamma')$ be a $k$-IRC tree of $(G',\chi') \coloneqq \flipf_f(G,\chi)$ of individualization depth $d$.
 We construct a $k$-IRC tree $(T,r,\gamma)$ by setting $V(T) \coloneqq V(T') \cup \{r\}$ (we assume $r \notin V(T')$), $E(T) \coloneqq E(T') \cup \{(r,r')\}$ and $\gamma(r) \coloneqq (G,\chi)$ and $\gamma(t) \coloneqq \gamma'(t)$ for all $t \in V(T')$.
 Note that $r$ satisfies Option \ref{item:flip} in Definition \ref{def:irc}, so $(T,r,\gamma)$ is a $k$-IRC tree of $(G,\chi)$ of individualization depth $d$.

 For Item \ref{item:wl-depth-bound-3}, let $A_1,\dots,A_\ell$ denote the vertex sets of the connected components of $G$.
 For every $i \in [\ell]$, let $(T_i,r_i,\gamma_i)$ be a $k$-IRC tree of $(G_i,\chi_i)$ of individualization depth $d_i$.
 Without loss of generality, assume that $V(T_i) \cap V(T_j) = \emptyset$ and $r \notin V(T_i)$ for all distinct $i,j \in [\ell]$.
 We build a $k$-IRC tree $(T,r,\gamma)$ of $(G,\chi)$ with
 \[V(T) \coloneqq \{r\} \cup \bigcup_{i \in [\ell]} V(T_i)\]
 and
 \[E(T) \coloneqq \{(r,r_i) \mid i \in [\ell]\} \cup \bigcup_{i \in [\ell]} E(T_i).\]
 Also, we set $\gamma(r) \coloneqq (G,\chi)$ and $\gamma(t) \coloneqq \gamma_i(t)$ for all $i \in [\ell]$ and $t \in V(T_i)$.
 Note that $r$ satisfies Option \ref{item:component} in Definition \ref{def:irc}, so $(T,r,\gamma)$ is a $k$-IRC tree of $(G,\chi)$ of individualization depth $\max_{i \in [\ell]} d_i$.

 Finally, for Item \ref{item:wl-depth-bound-4}, let $u \in V(G)$.
 Also let $(T',r',\gamma')$ be a $k$-IRC tree of $(G,\chi')$ of individualization depth $d$ where $\chi' \coloneqq \chi[u]$.
 We construct a $k$-IRC tree $(T,r,\gamma)$ by setting $V(T) \coloneqq V(T') \cup \{r\}$ (we assume $r \notin V(T')$), $E(T) \coloneqq E(T') \cup \{(r,r')\}$ and $\gamma(r) \coloneqq (G,\chi)$ and $\gamma(t) \coloneqq \gamma'(t)$ for all $t \in V(T')$.
 Note that $r$ satisfies Option \ref{item:individualize} in Definition \ref{def:irc}, so $(T,r,\gamma)$ is a $k$-IRC tree of $(G,\chi)$ with individualization depth at most $1+d$.
\end{proof}

To demonstrate the potential of the concept of the WL depth, we show that forests have $k$-WL depth $0$ for every $k \in \mathbb{N}$.
 
\begin{lemma}
 Let $F$ be a forest and $\chi$ be a vertex coloring of $F$.
 Then $\WLd{k}{F,\chi} = 0$ for all $k \geq 1$.
\end{lemma}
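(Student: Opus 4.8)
The plan is to prove, by induction on $n \coloneqq |V(F)|$, that $\WLd{k}{F,\chi} = 0$ for every forest $F$, every vertex coloring $\chi$ of $F$, and every $k \geq 1$. If $n = 1$, then $(F,\chi)$ already satisfies the leaf condition of Definition \ref{def:irc}, so the trivial one-node tree is a $k$-IRC tree with $0$ individualization nodes. If $F$ is disconnected with components on vertex sets $A_1,\dots,A_\ell$, then each $F[A_i]$ is a forest on fewer than $n$ vertices, so by Item \ref{item:wl-depth-bound-3} of Lemma \ref{la:wl-depth-bound} and the induction hypothesis we get $\WLd{k}{F,\chi} \leq \max_{i}\WLd{k}{F[A_i],\chi|_{A_i}} = 0$. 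Hence it remains to handle the case where $F = T$ is a tree with $n \geq 2$ vertices.

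For such a tree I would first apply $k$-WL and pass to the resulting stable coloring $\chi'$, given by $\chi'(v) = \WL{k}{T,\chi}(v)$; by Item \ref{item:wl-depth-bound-1} of Lemma \ref{la:wl-depth-bound} this does not increase the WL depth. Next I would move to the canonical flip $\flip(T,\chi')$, which is legal by Item \ref{item:wl-depth-bound-2}. The first useful observation is that, because $\chi'$ is stable, within each color class the induced subgraph is a regular subforest of a tree and hence has maximum degree at most $1$, and between any two color classes the induced bipartite graph is a biregular subforest and hence has degree at most $1$ on one of the two sides; plugging these structural facts into the definition of $f^*_{T,\chi'}$ shows that the only color pairs that get flipped are (i) a singleton class $\{z\}$ paired with a class all of whose vertices are adjacent to $z$, and (ii) a two-element class $\{x,y\}$ with $xy \in E(T)$. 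In both cases the flipped vertex pairs are all edges, so $\flip(T,\chi')$ is obtained from $T$ by \emph{only deleting edges}; in particular $\flip(T,\chi')$ is again a forest.

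The crux is then the claim that $(T,\chi')$ is \emph{not} already flipped, i.e., at least one flip of type (i) or (ii) occurs. I would derive this from the structure of the center $\mathrm{Cen}(T)$ of $T$, the set obtained by iteratively removing all leaves until at most two vertices remain: it is well known that $|\mathrm{Cen}(T)| \in \{1,2\}$, that the two vertices are adjacent when there are two of them, and that $\mathrm{Cen}(T)$ is invariant under $\Aut(T)$. Since on a tree the $k$-WL-stable coloring coincides with the orbit partition of $\Aut(T)$, the set $\mathrm{Cen}(T)$ is a union of color classes of $\chi'$. If $\mathrm{Cen}(T) = \{c\}$, then $\{c\}$ is a singleton color class, and by stability every vertex that shares a color with a neighbor of $c$ is in fact adjacent to $c$, triggering a flip of type (i). If $\mathrm{Cen}(T) = \{c_1,c_2\}$ and $c_1,c_2$ have the same color, then $\{c_1,c_2\}$ is a two-element class containing the edge $c_1c_2$, triggering a flip of type (ii); and if $c_1,c_2$ have distinct colors, then $\{c_1\}$ is a singleton class and we are back in case (i). This proves the claim. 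Consequently $\flip(T,\chi')$ is a forest on $n$ vertices with strictly fewer than $n-1$ edges, hence disconnected, and each of its components is a forest on fewer than $n$ vertices; combining this with Items \ref{item:wl-depth-bound-1}, \ref{item:wl-depth-bound-2} and \ref{item:wl-depth-bound-3} of Lemma \ref{la:wl-depth-bound} and the induction hypothesis applied to each component yields $\WLd{k}{T,\chi} \leq \WLd{k}{T,\chi'} \leq \WLd{k}{\flip(T,\chi')} = 0$.

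The step I expect to be the main obstacle is the crux claim, and specifically the input that on a tree the WL-stable coloring equals the orbit partition, so that the center forms color classes; although this is classical, a self-contained argument requires observing that the stable coloring of a tree records, for every vertex, the isomorphism type of the subtree hanging off each incident edge, which in turn determines eccentricities and hence pins down the center. An equally valid route is to prove the crux claim directly by a combinatorial analysis of $k$-WL-stable colored trees — showing that such a tree with at least two vertices must have a singleton color class or a two-element class spanning an edge — via an induction that repeatedly peels off a color class of leaves; this avoids invoking the orbit characterization but is more tedious. The remaining bookkeeping (verifying the structural facts about regular and biregular subforests and that the listed flips only delete edges) is routine.
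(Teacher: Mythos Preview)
Your proposal is correct and follows essentially the same approach as the paper: induct on $|V(F)|$, reduce to a connected tree, pass to the $k$-WL-stable coloring, show that $\flip$ only deletes edges and deletes at least one (using the center of the tree), and then apply the induction hypothesis to the resulting disconnected forest. The only noteworthy difference is that the paper justifies ``the center is a union of color classes'' directly from $1$-WL stability of the iterated leaf-removal process, whereas you route through the (heavier but equally valid) fact that $k$-WL on a colored tree computes the orbit partition---a detour you yourself flag as avoidable.
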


\begin{proof}
 We show the statement by induction on the number of vertices of $F$.
 If $|V(F)| = 1$, then $\WLd{k}{F,\chi} = 0$ immediately follows.

 So suppose $|V(F)| > 1$.
 If $F$ is not connected, we split $F$ into its connected components $F_1,\dots,F_\ell$.
 By the induction hypothesis, $\WLd{k}{F_i,\chi|_{V(F_i)}} = 0$ for every $i \in [\ell]$.
 Then Lemma \ref{la:wl-depth-bound}\ref{item:wl-depth-bound-3} implies that
 \[\WLd{k}{F,\chi} \leq \max_{i \in [\ell]} \WLd{k}{F_i,\chi|_{V(F_i)}} = 0.\]

 So suppose $F$ is connected.
 Let $\chi^*(v) \coloneqq \WL{k}{F,\chi}(v)$ for all $v \in V(F)$.

 \begin{claim}
  \label{claim:disconnected-forest}
  $\flip(F,\chi)$ is a disconnected forest.
 \end{claim}

 \begin{claimproof}
  Note that it suffices to show that $E(\flip(F,\chi^*)) \subsetneq E(F)$.

  Consider two colors $c_1,c_2 \in \im(\chi^*)$.
  First suppose that $c_1 \neq c_2$.
  Since $\chi^*$ is stable with respect to $1$-WL, we conclude that the set $F[(\chi^*)^{-1}(c_1),(\chi^*)^{-1}(c_2)]$ is a disjoint union of $k$ stars $K_{1,h}$ for some $k \geq 1$ and $h \geq 0$ (for $h = 0$ the graph $K_{1,h}$ is an isolated vertex).
  If $k = 1$, then flipping removes all edges in $E_F(\chi^{-1}(c_1),\chi^{-1}(c_2))$.
  Otherwise, it has no effect on the edges in $E_F(\chi^{-1}(c_1),\chi^{-1}(c_2))$.

  So consider the case $c_1 = c_2$.
  Then, by regularity, $F[(\chi^*)^{-1}(c_1)]$ either has no edges or is a matching.
  If $F[(\chi^*)^{-1}(c_1)]$ is isomorphic to $K_2$, flipping removes the corresponding edge.
  Otherwise, flipping has no effect on the edges contained in $F[(\chi^*)^{-1}(c_1)]$.

  Hence, $E(\flip(F,\chi^*)) \subseteq E(F)$. We now show that at least one edge of $F$ is not present in $\flip(F,\chi^*)$. 

  Consider the graph repeatedly obtained by removing all leaves from $F$ until at most two vertices remain.
  These two vertices form a union of color classes under $\chi^*$ since $\chi^*$ is stable with respect to $1$-WL.
  If only one vertex remains, it forms a singleton color class and all incident edges are removed by flipping the graph. Otherwise, the two vertices are connected by an edge (since $F$ is connected), which is removed by flipping the graph.
 \end{claimproof}

 We thus have
 \[\WLd{k}{F,\chi} \stackrel{\text{L.\ }\ref{la:wl-depth-bound}\ref{item:wl-depth-bound-1}}{\leq} \WLd{k}{F,\chi^*}
                   \stackrel{\text{L.\ }\ref{la:wl-depth-bound}\ref{item:wl-depth-bound-2}}{\leq} \WLd{k}{\flip(F,\chi^*)} \leq 0\]
 where the last inequality follows as in the case that $F$ is disconnected.
\end{proof}

Next we prove that we can always refine colorings without increasing the WL depth.

\begin{lemma}
 \label{la:refine-coloring-smaller-wl-depth}
 Let $(G,\chi)$ be a vertex-colored graph and $k \geq 1$.
 Also let $\chi'\colon V(G) \to C$ be a vertex coloring such that $\chi' \preceq \chi$.
 Then $\WLd{k}{G,\chi'} \leq \WLd{k}{G,\chi}$.

 Additionally, for every $k$-IRC tree $(T,r,\gamma)$ of $(G,\chi)$ of individualization depth $d$, there is a $k$-IRC tree $(T',r',\gamma')$ of $(G,\chi')$ of individualization depth at most $d$ such that $(T',r') = (T,r)$.
\end{lemma}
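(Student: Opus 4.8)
The plan is to start from an arbitrary $k$-IRC tree $(T,r,\gamma)$ of $(G,\chi)$ of individualization depth $d$ and to produce a $k$-IRC tree of $(G,\chi')$ with the \emph{same} underlying rooted tree, i.e.\ $(T',r') = (T,r)$, by only relabeling the graphs attached to the nodes. Concretely, for each $t \in V(T)$ with $\gamma(t) = (G_t,\chi_t)$ I set $\gamma'(t) = (G_t,\chi_t')$ using the \emph{same} graph $G_t$, and I maintain along the way the invariant $\chi_t' \preceq \chi_t$. I process the nodes in order of increasing distance from $r$. At the root, $\gamma'(r) := (G,\chi')$ and the invariant is exactly the hypothesis $\chi' \preceq \chi$. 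For the inductive step I let each internal node $t$ reuse in $(T',r',\gamma')$ the same defining condition it uses in $(T,r,\gamma)$. If $t$ satisfies \ref{item:refine}, I put $\chi_s'(v) := \WL{k}{G_t,\chi_t'}(v)$, and since $k$-WL is monotone with respect to refinement of the input coloring (a routine induction on the number of rounds, starting from the fact that a finer vertex coloring yields finer ordered-subgraph isomorphism types), we obtain $\chi_s' \preceq \chi_s$. If $t$ satisfies \ref{item:component}, I use that the connected components of $G_t$ depend only on $G_t$, so they are the same sets $A_1,\dots,A_\ell$, and I set $\gamma'(s_i) := (G_t[A_i],\chi_t'|_{A_i})$ with $\chi_t'|_{A_i} \preceq \chi_t|_{A_i}$. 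If $t$ satisfies \ref{item:individualize} with $\chi_s = \chi_t[u]$, I set $\chi_s' := \chi_t'[u]$, where a short case distinction according to whether a vertex equals $u$ shows that $\chi_t' \preceq \chi_t$ implies $\chi_t'[u] \preceq \chi_t[u]$. In all three cases the graph at each child is unchanged and the invariant passes to the children.

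The one case that requires genuine care is the flip node \ref{item:flip}. There, $t$ has a single child $s$ with $(G_s,\chi_s) = \flipf_f(G_t,\chi_t)$ for some flip function $f$ of $(G_t,\chi_t)$, and I need a flip function $f'$ of $(G_t,\chi_t')$ for which $\flipf_{f'}(G_t,\chi_t')$ has exactly the graph $G_s$, so that the entire subtree below $s$ still fits. Since $\chi_t' \preceq \chi_t$, there is a well-defined map $\pi\colon \im(\chi_t') \to \im(\chi_t)$ with $\chi_t = \pi \circ \chi_t'$ (send the $\chi_t'$-color of a vertex to its $\chi_t$-color), and I define $f'(c_1',c_2') := f(\pi(c_1'),\pi(c_2'))$. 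Then $f'$ is symmetric because $f$ is, and for all $v \neq w$ we have $f'(\chi_t'(v),\chi_t'(w)) = f(\chi_t(v),\chi_t(w))$, so the $f'$-flip of $(G_t,\chi_t')$ toggles exactly the same vertex pairs as the $f$-flip of $(G_t,\chi_t)$; hence its graph is $G_s$ and its coloring is still $\chi_t' \preceq \chi_t = \chi_s$. Note that the particular flip $\flip$ (using $f^*$) need not be preserved under refinement of the coloring, but this does not matter, since Definition \ref{def:irc} permits an arbitrary flip function in \ref{item:flip}.

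For a leaf $t$ of $T$ we have $|V(G_t)| = 1$ in $(T,r,\gamma)$, and $\gamma'(t)$ carries the same graph $G_t$, so the leaf condition holds in $(T',r',\gamma')$ as well; therefore $(T',r',\gamma')$ is a valid $k$-IRC tree of $(G,\chi')$. Moreover, by construction, every internal node that is not an individualization node of $(T,r,\gamma)$ is made to satisfy one of \ref{item:refine}, \ref{item:flip}, \ref{item:component} in $(T',r',\gamma')$ and is thus not an individualization node there either, so the individualization nodes of $(T',r',\gamma')$ form a subset of those of $(T,r,\gamma)$. Consequently, its individualization depth is at most $d$, which yields the ``additionally'' statement, and taking the minimum over all $k$-IRC trees of $(G,\chi)$ then gives $\WLd{k}{G,\chi'} \leq \WLd{k}{G,\chi}$. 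The only real obstacle here is the flip case, i.e.\ lifting a flip function through the coloring refinement so that the flipped graph is unchanged, with the lifting via $\pi$ being the key trick; the remaining verifications (WL monotonicity and the fact that individualization commutes with refinement) are routine.
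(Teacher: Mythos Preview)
Your proposal is correct and follows essentially the same approach as the paper: keep the same rooted tree, relabel each node with $(G_t,\chi_t')$ top-down while maintaining $\chi_t' \preceq \chi_t$, and handle the four cases exactly as the paper does, including the lifting of the flip function via the canonical map $\pi\colon \im(\chi_t') \to \im(\chi_t)$. Your write-up is slightly more explicit than the paper's (you spell out WL monotonicity, the leaf check, and why an arbitrary flip rather than $\flip$ is needed), but the argument is the same.
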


\begin{proof}
 We prove the second part of the lemma; this clearly implies the first statement.

 Let $(T,r,\gamma)$ be a $k$-IRC tree of $(G,\chi)$ of individualization depth $d$.
 Also, for every $t \in V(T)$ let $(G_t,\chi_t) \coloneqq \gamma(t)$.

 For every $t \in V(T)$, we construct $\gamma'(t) = (G_t,\chi_t')$ so that $(T,r,\gamma')$ is a $k$-IRC tree of $(G,\chi')$ of individualization depth at most $d$.
 Additionally, we ensure that $\chi_t' \preceq \chi_t$ for every $t \in V(T)$.

 We describe the mapping $\gamma'$ in a top-down manner starting at the root.
 We set $\gamma'(r) \coloneqq (G,\chi')$.

 Now let $v \in V(T)$ be an internal node such that $\gamma'(t) = (G_t,\chi_t')$ has already been defined.

 First suppose $t$ satisfies Option \ref{item:refine} in Definition \ref{def:irc} in the $k$-IRC tree $(T,r,\gamma)$.
 Let $s$ be the unique child of $t$.
 Then $\chi_s(v) = \WL{k}{G,\chi_t}(v)$ for all $v \in V(G_t) = V(G_s)$.
 We set $\gamma'(s) \coloneqq (G_s,\chi_s')$ where $\chi_s'(v) = \WL{k}{G,\chi_t'}(v)$ for all $v \in V(G_t)$.
 Then $t$ also satisfies Option \ref{item:refine} in Definition \ref{def:irc} in the $k$-IRC tree $(T,r,\gamma')$.
 Also, since $\chi_t' \preceq \chi_t$, we conclude that $\chi_s' \preceq \chi_s$ by the properties of $k$-WL.

 Next suppose $t$ satisfies Option \ref{item:flip} in Definition \ref{def:irc} in the $k$-IRC tree $(T,r,\gamma)$.
 Let $s$ be the unique child of $t$.
 Then $(G_s,\chi_s) = \flipf_f(G_t,\chi_t)$ for some flip function $f$ of $(G_t,\chi_t)$.
 Let $C' \coloneqq \im(\chi_t')$.
 We define a flip function $f'$ of $(G_t,\chi_t')$ as follows.
 Let $c_1',c_2' \in C'$ be two colors.
 Since $\chi_t' \preceq \chi_t$, there are unique colors $c_1,c_2 \in \im(\chi_t)$ such that $(\chi_t')^{-1}(c_1') \subseteq \chi_t^{-1}(c_1)$ and $(\chi_t')^{-1}(c_2') \subseteq \chi_t^{-1}(c_2)$.
 We set $f'(c_1',c_2') \coloneqq f(c_1,c_2)$.
 Let $(G_s',\chi_s') \coloneqq \flipf_{f'}(G_t,\chi_t')$.
 Then $G_s' = G_s$.
 We set $\gamma'(s) \coloneqq (G_s,\chi_s')$.
 Then $t$ satisfies Option \ref{item:flip} in Definition \ref{def:irc} in the $k$-IRC tree $(T,r,\gamma')$ using the flip function $f'$.
 Also, $\chi_s' = \chi_t' \preceq \chi_t = \chi_s$.

 So suppose $t$ satisfies Option \ref{item:component} in Definition \ref{def:irc} in the $k$-IRC tree $(T,r,\gamma)$.
 Let $s_1,\dots,s_\ell$ denote the children of $t$.
 We set $\gamma'(s_i) \coloneqq (G_{s_i},\chi_t'|_{V(G_{s_i})})$ for every $i \in [\ell]$.
 Then $t$ satisfies Option \ref{item:component} in Definition \ref{def:irc} in the $k$-IRC tree $(T,r,\gamma')$.
 Also, $\chi_{s_i}' = \chi_t'|_{V(G_{s_i})} \preceq \chi_t|_{V(G_{s_i})} = \chi_{s_i}$ for every $i \in [\ell]$.

 Finally, suppose $t$ satisfies Option \ref{item:individualize} in Definition \ref{def:irc} in the $k$-IRC tree $(T,r,\gamma)$.
 Let $s$ be the unique child of $t$.
 Then $(G_s,\chi_s) = (G_t,\chi_t[u])$ for some $u \in V(G)$.
 We set $\chi_s' \coloneqq \chi_t'[u]$ and $\gamma'(s) \coloneqq (G_s,\chi_s')$.
 Then $t$ satisfies Option \ref{item:individualize} in Definition \ref{def:irc} in the $k$-IRC tree $(T,r,\gamma')$.
 Also, $\chi_s' = \chi_t'[u] \preceq \chi_t[u] = \chi_s$ since $\chi_t' \preceq \chi_t$.

 Since every internal node of $(T,r,\gamma)$ satisfies one of Options \ref{item:refine}-\ref{item:individualize} in Definition \ref{def:irc}, this completes the construction of $(T,r,\gamma')$.
 We obtain that $(T,r,\gamma')$ is a $k$-IRC tree of $(G,\chi')$.
 Also, every individualization node in $(T,r,\gamma')$ is also an individualization node in $(T,r,\gamma)$.
 So $(T,r,\gamma')$ has individualization depth at most $d$.
\end{proof}

Next, we argue that we can eliminate vertices with identical neighborhoods in the graph without increasing the $k$-WL depth. This will become an important tool to prove the bounds on the WL dimension that we want to obtain.

Let $(G,\chi)$ be a colored graph.
We say two vertices $v,w \in V(G)$ are \emph{strong twins} if $N_G[v] = N_G[w]$ (in particular, if $v \neq w$, then $vw \in E(G)$).
Also, we say that $v,w \in V(G)$ are \emph{weak twins} if $N_G(v) = N_G(w)$ (in particular, if $v \neq w$, then $vw \notin E(G)$).
Clearly, both relations are equivalence relations in $G$.
A partition $\pi$ of $V(G)$ is a \emph{twin partition of $G$} if $\pi$ is the partition into equivalence classes of either the strong-twin relation or the weak-twin relation in $G$.
Let $\pi$ be a twin partition of $G$.
We define the colored graph $(G/\pi,\chi/\pi)$ where $V(G/\pi) = \pi$,
\[E(G/\pi) = \{PP' \mid P,P' \in \pi, E_G(P,P') \neq \emptyset\}\]
and $(\chi/\pi)(P) = \{\!\{\chi(v) \mid v \in P\}\!\}$. 
Finally, we say that $G$ is \emph{twin-free} if there is no non-trivial twin partition of $G$, i.e., there are no distinct vertices $v,w \in V(G)$ that are strong or weak twins.

Note that, by the definition of twins, every edge $PP'$ in $G/\pi$ corresponds to a complete bipartite graph between the elements of $P$ and the elements of $P'$ in $G$.
Also note that for all $P,P' \in V(G/\pi)$ with $(\chi/\pi)(P) \neq (\chi/\pi)(P')$ and every $v \in P, v' \in P'$, it holds that $\WL{k}{G,\chi}(v) \neq \WL{k}{G,\chi}(v')$, since for $k \geq 2$, the algorithm $k$-WL distinguishes vertices for which the multisets of colors of siblings in the twin relation are not equal.

\begin{lemma}
 \label{la:remove-twins}
 Let $k \geq 2$.
 Let $(G,\chi)$ be a colored graph and let $\pi$ be a twin partition of $G$.
 Then
 \[\WLd{k}{G,\chi} \leq \WLd{k}{G/\pi,\chi/\pi}.\]
\end{lemma}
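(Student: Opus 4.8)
The plan is to take a $k$-IRC tree of $(G/\pi,\chi/\pi)$ of minimal individualization depth and \emph{blow it up} into a $k$-IRC tree of $(G,\chi)$ of no larger individualization depth; essentially all of the difficulty lies in making the blow-up respect the four node types, and the crux is the individualization step. As a preparatory reduction, I would combine Lemma~\ref{la:wl-depth-bound}\ref{item:wl-depth-bound-1} on the $G$-side with Lemma~\ref{la:refine-coloring-smaller-wl-depth} on the quotient side — noting that $\pi$ is a purely graph-theoretic object and hence still a twin partition of $(G,\WL{k}{G,\chi})$, and that $(\WL{k}{G,\chi})/\pi$ refines $\chi/\pi$ — to reduce to the case where $\chi$ is stable with respect to $k$-WL. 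The point of this reduction, using the observation made just before the statement that for $k\ge 2$ the algorithm distinguishes vertices whose twin-siblings carry different color multisets, is that then the $\chi$-color of a vertex already determines the $\mu$-color — and hence the size — of the twin class it belongs to; this is what links the $G$-side colorings to the multiset colorings on the quotient side.

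Given a minimal $k$-IRC tree $(T,r,\gamma)$ of $(G/\pi,\chi/\pi)$ with $\gamma(t)=(H_t,\mu_t)$, I would first insert a refine node below each individualization node (which does not increase the individualization depth, by the standard maneuver of Lemma~\ref{la:refine-coloring-smaller-wl-depth}), and then build a tree on the same underlying shape with $\gamma'(t)=(G_t,\chi_t)$, where $G_t$ is obtained from $H_t$ by replacing each vertex $P\in V(H_t)$ with $|P|$ vertices, joining blow-ups of adjacent vertices of $H_t$ completely, and placing inside each blow-up class the clique (if $\pi$ is the strong-twin partition) or independent set (for weak twins), uniformly complemented according to the flips of that color class with itself performed so far; the coloring $\chi_t$ is defined by following the operations from $\chi_r\coloneqq\chi$, maintaining the invariant that the $\chi_t$-color of a vertex determines the $\mu_t$-color of the $H_t$-class it comes from. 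The operations are mirrored as follows: a refine node stays a refine node; a flip node with function $f$ becomes a flip node whose function acts on a pair of $\chi_t$-colors through the corresponding pair of $\mu_t$-colors (this toggles internal edges exactly when $f$ flips a color class with itself); a component node stays a component node, where one must allow that an isolated blow-up class coming from an isolated independent vertex of $H_t$ yields several singleton components, which are already leaves; and for a leaf $t$, where $H_t$ is a single vertex $P$, the blow-up is a twin class of size $|P|$, to which I append one flip node (deleting the internal clique in the strong-twin case) and one component node to split it into singleton leaves. None of these appended nodes is an individualization node.

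The only node type that affects the individualization count is the individualization node, and the main obstacle is to handle it without loss: when the quotient individualizes a vertex $U\in V(H_t)$, the lift individualizes a single vertex $u\in U$, introducing exactly one individualization node per quotient individualization node, and I must show that the residual twin class $U\setminus\{u\}$ is afterwards resolved exactly as the quotient's individualized vertex $U$ is. For this I would rely on an invariant along the lifted tree — that $H_t$ is recovered from $G_t$ by contracting its colored-twin classes and recording their sizes, up to the extra splitting caused by individualized vertices — together with the fact that for $k\ge 2$ the algorithm $k$-WL counts twin-class sizes: since $|U\setminus\{u\}|=|U|-1$ differs from the size of every other twin class that might otherwise collide with it, the refine node inserted below the individualization separates $U\setminus\{u\}$ from everything else precisely when $U$ gets separated on the quotient side. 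The technically delicate part is verifying that this invariant survives all four operations — in particular that no two blow-up vertices coming from distinct vertices of $H_t$ ever become $k$-WL-indistinguishable, where the hypothesis $k\ge 2$ is essential (this is also the point where the internal-edge toggling of flips must be tracked carefully). Once the invariant is established, an induction along the tree shows that the lifted tree reaches singleton leaves with the same number of individualizations on every root-to-leaf path, which gives $\WLd{k}{G,\chi}\le\WLd{k}{G/\pi,\chi/\pi}$.
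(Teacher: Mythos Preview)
Your plan is sound and shares its core mechanism with the paper's proof: mirror each node of the quotient's $k$-IRC tree on the $G$-side, individualizing a single representative per quotient individualization, and exploit that $k$-WL for $k\ge 2$ detects the twin-class structure. The paper, however, packages this as an induction on the \emph{height} of the quotient tree rather than as an explicit blow-up. At each node it applies the matching item of Lemma~\ref{la:wl-depth-bound} on the $G$-side, checks that the resulting coloring quotients to something at least as fine as the child coloring on the quotient side, absorbs any slack via Lemma~\ref{la:refine-coloring-smaller-wl-depth}, and invokes the induction hypothesis. This makes the individualization case a one-liner --- pick $u\in P$, observe $\chi[u]/\pi\equiv\chi'[P]$, recurse --- with no preparatory reduction to stable colorings, no inserted refine nodes, and no need to track the internal clique/independent-set state of blow-up classes through flips. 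Your direct construction can be completed, but the height induction sidesteps precisely the bookkeeping you yourself flag as delicate. One small caveat: the argument via ``$|U\setminus\{u\}|=|U|-1$'' presupposes that each blow-up class is a single colored-twin class, which your invariant does not guarantee (twins may carry distinct $\chi$-colors); the robust reason $k$-WL separates $U\setminus\{u\}$ after the inserted refine is that every $v\in U\setminus\{u\}$ has $u$'s singleton color in its twin-sibling multiset, whereas no vertex outside $U$ does.
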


\begin{proof}
 Let $k \geq 2$ be fixed.
 We show that for every colored graph $(G,\chi)$, every twin partition $\pi$ of $G$, and every $k$-IRC tree $(T',r',\gamma')$ of $(G',\chi') \coloneqq (G/\pi,\chi/\pi)$ of individualization depth $d$, it holds that $\WLd{k}{G,\chi} \leq d$.
 We prove this statement by induction on the height $h$ of the tree $(T',r')$.

 In the base case, $h = 0$.
 Then $|V(G')| = 1$, which means that all vertices in $(G,\chi)$ are twins. Then $G$ is a complete graph or has no edges and in both cases, $\WLd{k}{G,\chi}=0 \leq d$ (see Example \ref{exa:complete-graph}).

 So as the inductive hypothesis, assume the statement holds for all graphs $(H,\lambda)$ and twin partitions $\sigma$ of $H$ for which $(H/\sigma,\lambda/\sigma)$ has a $k$-IRC tree of height less than $h$. Let $(G,\chi)$ be a colored graph for which $(G',\chi') \coloneqq (G/\pi,\chi/\pi)$ has a $k$-IRC tree $(T',r',\gamma')$ of height $h$. Let $d$ be the individualization depth of $(T',r',\gamma')$.

 Let $t'_1,\dots,t'_\ell$ denote the children of $r'$. For $i \in [\ell]$, let $(G'_i,\chi'_i) \coloneqq \gamma'(t'_i)$ and let $T'_i$ denote the subtree of $T'$ rooted at $t'_i$.
 Observe that $(T'_i,t'_i,\gamma'|_{V(T'_i)})$ is a $k$-IRC tree of $(G'_i,\chi'_i)$ and has height less than $h$. Let $d_i$ denote the individualization depth of $(T'_i,t'_i,\gamma'|_{V(T'_i)})$. Observe that $d_i \leq d$, and if $r'$ is an individualization node, then it even holds that $d_i \leq d-1$.

 First suppose that $\ell = 1$, $G'_1 = G'$, and $\chi'_1(v) = \WL{k}{G',\chi'}(v)$ for all $v \in V(G')$.
 Let $\chi_1(v) \coloneqq \WL{k}{G,\chi}(v)$ for all $v \in V(G)$.
 Note that $\chi_1 / \pi \equiv \chi'_1$ since $k \geq 2$.
 Furthermore, we know that $(G'_1,\chi'_1)$ has a $k$-IRC tree of height less than $h$ and individualization depth at most $d$ (for instance $(T'_1,t'_1,\gamma'|_{V(T'_1)})$).
 Thus, by Lemma \ref{la:refine-coloring-smaller-wl-depth}, the same holds for $(G',\chi_1 / \pi)$.
 Hence, we can apply the induction hypothesis to $(G,\chi_1)$ and $\pi$ and obtain that
 \begin{alignat*}{2}
  \WLd{k}{G,\chi} &\alignsymbols{\stackrel{\text{L.\ }\ref{la:wl-depth-bound}\ref{item:wl-depth-bound-1}}{\leq}} \WLd{k}{G,\chi_1} \mspace{-9mu} && \alignsymbols{\stackrel{\text{IH}}{\leq}} \mspace{-9mu}\WLd{k}{G',\chi_1/\pi}\\
                  &\alignsymbols{\stackrel{\text{L.\ }\ref{la:refine-coloring-smaller-wl-depth}}{=}} \WLd{k}{G',\chi'_1} \mspace{-9mu}&& \alignsymbols{\leq}\mspace{-9mu} d.
 \end{alignat*}

 Next, suppose that $\ell = 1$ and $(G'_1,\chi'_1) = \flipf_{f'}(G',\chi')$ for some flip function $f'$ of $(G',\chi')$. Let $\chi_1(v) \coloneqq \WL{k}{G,\chi}(v)$ for all $v \in V(G)$. Let $C \coloneqq \im(\chi_1)$.
 We define a flip function $f$ of $(G,\chi_1)$ as follows.
 Let $c_1,c_2 \in C$ be two colors. Since the sets of colors that $k$-WL computes for vertices (in $V(G)$) contained in differently colored $P,P' \in V(G')$, there are unique colors $c'_1,c'_2 \in \im(\chi')$ such that
 \begin{align*}
 \chi_1^{-1}(c_1) \subseteq &\bigcup_{P \in (\chi')^{-1}(c_1')} P,
 \\\chi_1^{-1}(c_2) \subseteq &\bigcup_{P \in (\chi')^{-1}(c_2')} P.
 \end{align*}
 We set $f(c_1,c_2) \coloneqq f'(c'_1,c'_2)$.
 Let $G_1$ be the first component of $\flipf_{f}(G,\chi_1)$.
 Then $G_1/\pi = G'_1$, and also $\chi_1/\pi \preceq \chi'_1 = \chi'$.
 Furthermore, we know that $(G'_1,\chi'_1)$ has a $k$-IRC tree of height less than $h$ and individualization depth at most $d$ (for instance $(T'_1,t'_1,\gamma'|_{V(T'_1)})$).
 Thus, by Lemma \ref{la:refine-coloring-smaller-wl-depth}, the same holds for $(G'_1,\chi_1 / \pi)$.
 Hence, we can apply the induction hypothesis to $(G_1,\chi_1)$ and $\pi$ and obtain that
 \begin{alignat*}{2}
  \WLd{k}{G,\chi} &\alignsymbols{\stackrel{\text{L.\ }\ref{la:wl-depth-bound}\ref{item:wl-depth-bound-1}}{\leq}} \mspace{-4mu} \WLd{k}{G,\chi_1} &&\alignsymbols{\stackrel{\text{L.\ }\ref{la:wl-depth-bound}\ref{item:wl-depth-bound-2}}{\leq}} \mspace{-4mu}\WLd{k}{G_1,\chi_1}\\
                  &\alignsymbols{\stackrel{\text{IH}}{\leq}} \mspace{-4mu} \WLd{k}{G'_1,\chi_1/\pi} \mspace{-8mu} &&\alignsymbols{\stackrel{\text{L.\ } \ref{la:refine-coloring-smaller-wl-depth}}{\leq}} \mspace{-4mu}\WLd{k}{G'_1,\chi'_1} \ \leq \ d.
 \end{alignat*}

 Now, suppose that $G'$ is disconnected and consists of connected components with vertex sets $A'_1,\dots,A'_\ell$ and we have $(G'_i,\chi'_i) = (G'[A'_i],\chi'|_{A'_i})$ for all $i \in [\ell]$.
 Let $A_i$ be the vertex set of the connected component of $G$ corresponding to $A'_i$.
 Define $(G_i,\chi_i) \coloneqq (G[A_i],\chi|_{A_i})$.
 Note that $G_i / \pi = G'_i$ and also $\chi_i / \pi \equiv \chi'_i$ for every $i \in [\ell]$.
 Furthermore, we know that every $(G'_i,\chi'_i)$ has a $k$-IRC tree of height less than $h$ and individualization depth at most $d_i$ (for instance $(T'_i,t'_i,\gamma'|_{V(T'_i)})$).
 Thus, by Lemma \ref{la:refine-coloring-smaller-wl-depth}, the same holds for $(G'_i,\chi_i / \pi)$.
 Hence, we can apply the induction hypothesis to every $(G_i,\chi_i)$ and $\pi$ and obtain that

 \begin{alignat*}{2}
  \WLd{k}{G,\chi} &\alignsymbols{\stackrel{\text{L.\ }\ref{la:wl-depth-bound}\ref{item:wl-depth-bound-3}}{\leq}} \max_{i \in [\ell]}\WLd{k}{G_i,\chi_i}
                  &&\alignsymbols{\stackrel{\text{IH}}{\leq}} \max_{i \in [\ell]} \WLd{k}{G'_i,\chi_i/\pi}\\
                  &\alignsymbols{\stackrel{\text{L.\ } \ref{la:refine-coloring-smaller-wl-depth}}{=}} \max_{i \in [\ell]} \WLd{k}{G'_i,\chi'_i} &&\alignsymbols{\leq} \max_{i \in [\ell]} d_i \ \leq \ d.
 \end{alignat*}

 Finally, suppose that $\ell = 1$ and there is some $P \in V(G')$ such that $(G'_1,\chi'_1) = (G', \chi'[P])$.
 Choose any $u \in P$ and define $(G_1,\chi_1) \coloneqq (G,\chi[u])$.
 Note that $G_1 / \pi = G'$ and also $\chi_1 / \pi \equiv \chi'_1$.
 Furthermore, we know that $(G',\chi'_1)$ has a $k$-IRC tree of height less than $h$ and individualization depth at most $d-1$ (for instance $(T'_i,t'_i,\gamma'|_{V(T'_i)})$).
 Thus, by Lemma \ref{la:refine-coloring-smaller-wl-depth}, the same holds for $(G',\chi_1 / \pi)$.
 Hence, we can apply the induction hypothesis to $(G,\chi_1)$ and $\pi$ and obtain that
 \begin{alignat*}{2}
  \WLd{k}{G,\chi} &\alignsymbols{\stackrel{\text{L.\ }\ref{la:wl-depth-bound}\ref{item:wl-depth-bound-4}}{\leq}} 1 + \WLd{k}{G,\chi_1}
                  &&\alignsymbols{\stackrel{\text{IH}}{\leq}} 1 + \WLd{k}{G',\chi_1/\pi}\\
                  &\alignsymbols{\stackrel{\text{L.\ } \ref{la:refine-coloring-smaller-wl-depth}}{=}} 1 + \WLd{k}{G',\chi'_1} &&\alignsymbols{\leq} 1 + d-1 \ = \ d.
 \end{alignat*}
This concludes the proof.
\end{proof}

Our main motivation to study the $k$-WL depth of graphs is to obtain improved bounds on the $k$-WL dimension of graphs.
Towards this end, the following lemma relates the $k$-WL depth of a graph to its $k$-WL dimension.

\begin{lemma}
 \label{la:wl-depth-to-dimension}
 Let $k,d \in \mathbb{N}$ and let $(G,\chi)$ be a colored graph with $\WLd{k}{G,\chi} \leq d$.
 Then the $(\max\{2,k\}+d)$-dimensional Weisfeiler--Leman algorithm identifies $(G,\chi)$.
\end{lemma}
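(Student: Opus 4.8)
The plan is to induct on the individualization depth $d$ of an optimal $k$-IRC tree $(T,r,\gamma)$ of $(G,\chi)$, simultaneously showing the slightly stronger statement that $(\max\{2,k\}+d)$-WL identifies $(G,\chi)$ after individualizing any $d$ further vertices, or more cleanly, by induction on the structure of the tree working from the leaves upward. First I would set $K \coloneqq \max\{2,k\}+d$ and recall the two basic facts I will lean on: (i) $K$-WL refines $k$-WL and also refines $2$-WL, so all the operations inside a $k$-IRC tree (refining by $k$-WL, flipping between colour classes, splitting into connected components) are ``visible'' to $K$-WL; and (ii) Lemma~\ref{la:indiv-dimension}, which converts the individualization of one vertex into one extra dimension. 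The inductive claim I would carry along each node $t$ of the tree is: if the subtree rooted at $t$ has individualization depth $d_t$, then $(\max\{2,k\}+d_t)$-WL identifies $\gamma(t)=(G_t,\chi_t)$.

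The base case is a leaf $t$: then $|V(G_t)|=1$, which is trivially identified by $2$-WL, and $\max\{2,k\}+0\geq 2$. For the inductive step I would split according to which of \ref{item:refine}--\ref{item:individualize} the node $t$ satisfies, handling the child(ren) by the induction hypothesis. For \ref{item:refine}: $\gamma(s)$ is obtained from $\gamma(t)$ by replacing $\chi_t$ with its $k$-WL-stabilization $\chi_t^*$; since $K\geq k$, the coloring $\WL{K}{G_t,\chi_t}$ already refines $\chi_t^*$, so a $K$-WL-identification of $(G_t,\chi_t^*)$ yields one of $(G_t,\chi_t)$ — here I would invoke (or quickly prove) the standard fact that if $\lambda$ is obtained from $\chi$ by a $k$-WL-definable refinement and $K$-WL identifies $(G,\lambda)$ then it identifies $(G,\chi)$, because any $(H,\mu)$ with the same $K$-WL output admits the ``same'' refinement $\mu^*$ with the same output. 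For \ref{item:flip}: $\gamma(s)=\flipf_f(G_t,\chi_t)$; the key point is that the $f$-flip is a $k$-WL-invariant operation (it only uses the colors $\chi_t$, which $K$-WL sees, so the flipped graph of any $K$-WL-equivalent $(H,\mu)$ is $K$-WL-equivalent to $\gamma(s)$), hence identification transfers back. For \ref{item:component}: $G_t$ has connected components $A_1,\dots,A_\ell$; here I would use that $K\geq 2$ and the standard fact that if $K$-WL identifies every connected component of a colored graph (with their induced colorings) then it identifies the whole graph — combined with the observation that $K$-WL assigns distinct colors to vertices in distinct components, and $d_{s_i}\le d_t$. For \ref{item:individualize}: $\gamma(s)=(G_t,\chi_t[u])$ and $d_s=d_t-1$; the induction hypothesis gives that $(\max\{2,k\}+d_t-1)$-WL identifies $(G_t,\chi_t[u])$, and Lemma~\ref{la:indiv-dimension} then gives that $(\max\{2,k\}+d_t)$-WL identifies $(G_t,\chi_t)$.

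Applying the claim at the root $r$, where $d_r=\WLd{k}{G,\chi}\le d$, yields that $(\max\{2,k\}+d)$-WL identifies $(G,\chi)$, as desired.

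I expect the main obstacle to be the careful formulation and verification of the ``transfer'' lemma implicit in cases \ref{item:refine} and \ref{item:flip}: namely that each of the IRC operations is canonical/isomorphism-invariant in a way that is detected by $K$-WL, so that from $\gamma(t)\simeq_K (H,\mu)$ one can reconstruct a graph $(H',\mu')$ with $\gamma(s)\simeq_K (H',\mu')$ and an isomorphism $\gamma(s)\cong (H',\mu')$ lifting to $\gamma(t)\cong (H,\mu)$. The refinement and flip operations are determined purely by the coloring (which $K$-WL refines since $K\ge k$), so this is morally routine, but stating it cleanly — perhaps by noting that $\WL{K}{G,\chi}$ determines $\WL{k}{G,\chi}$, which determines $\chi^*$ and the flip function's behavior — is where the real bookkeeping lies. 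The connected-components case additionally needs the (well-known but worth citing) fact that $2$-WL, and hence $K$-WL, distinguishes vertices in different connected components and that component-wise identification implies global identification for $K\ge 2$; I would cite this rather than reprove it.
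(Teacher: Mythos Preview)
Your proposal is correct and follows essentially the same approach as the paper: structural induction on the $k$-IRC tree (the paper phrases it as induction on the height of $(T,r)$), handling each of the four node types separately, with Lemma~\ref{la:indiv-dimension} used for \ref{item:individualize} and the fact that $K\geq 2$ used for \ref{item:component}. The paper's treatment of cases \ref{item:refine} and \ref{item:flip} is in fact more terse than yours---it simply asserts the transfer without spelling out the ``canonicity'' argument you outline---so your proposal is, if anything, more careful on exactly the point you flag as the main obstacle.
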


\begin{proof}
 Let $(T,r,\gamma)$ be a $k$-IRC tree of $(G,\chi)$ of individualization depth $d$. Let $h$ be the height of $(T,r)$. We prove the statement by induction on $h$.

 In the base case, $h = 0$.
 Then $|V(G)| = 1$ and $(G,\chi)$ is identified by $k$-WL for every $k \in \mathbb{N}$.

 So as the inductive hypothesis, assume the statement holds for all graphs $(H,\lambda)$ that have a $k$-IRC tree of height smaller than $h$ and of individualization depth at most $d$.

 Let $t_1,\dots,t_\ell$ denote the children of $r$.
 For $i \in [\ell]$, let $(G_i,\chi_i) \coloneqq \gamma(t_i)$ and let $T_i$ denote the subtree of $T$ rooted at $t_i$.
 Observe that $(T_i,t_i,\gamma|_{V(T_i)})$ is a $k$-IRC tree of $(G_i,\chi_i)$ of height smaller than $h$.
 Let $d_i$ denote the individualization depth of $(T_i,t_i,\gamma|_{V(T_i)})$. Observe that $d_i \leq d$, and if $r$ is an individualization node, then it even holds that $d_i \leq d-1$.

 First suppose that $\ell = 1$, $G_1 = G$, and $\chi_1(v) = \WL{k}{G,\chi}(v)$ for all $v \in V(G)$.
 By the induction hypothesis, $(G_1,\chi_1)$ is identified by $(\max\{2,k\}+d)$-WL.
 Hence, $(G,\chi)$ is identified by $(\max\{2,k\}+d)$-WL.

 Next, suppose that $\ell = 1$ and $(G_1,\chi_1) = \flipf_{f}(G,\chi)$ for some flip function $f$ of $(G,\chi)$.
 By the induction hypothesis, $(G_1,\chi_1)$ is identified by $(\max\{2,k\}+d)$-WL.
 Consider a colored graph $(G',\chi')$ with $(G',\chi') \simeq_{\max\{2,k\}+d} (G,\chi)$.
 Then it has to have the same vertex color classes and sizes as $(G,\chi)$.
 We conclude that $\flipf_f(G'\chi') \simeq_{\max\{2,k\}+d} (G_1,\chi_1)$, which implies that there is an isomorphism $\varphi\colon\flipf_f(G'\chi') \cong (G_1,\chi_1)$ (since $(G_1,\chi_1)$ is identified by $(\max\{2,k\}+d)$-WL).
 But then $\varphi$ is also an isomorphism from $(G',\chi')$ to $(G,\chi)$.
 Hence, $(G,\chi)$ is identified by $(\max\{2,k\}+d)$-WL.

 Now, suppose that $G$ is disconnected and consists of connected components with vertex sets $A_1,\dots,A_\ell$ and $(G_i,\chi_i) = (G[A_i],\chi|_{A_i})$ for all $i \in [\ell]$.
 By the induction hypothesis, each $(G[A_i],\chi|_{A_i})$ is identified by $(\max\{2,k\}+d)$-WL.
 Since $2$-WL distinguishes pairs of vertices in the same connected component from pairs of vertices in different connected components, the algorithm $(\max\{2,k\}+d)$-WL distinguishes $(G,\chi)$ from any graph whose multiset of isomorphism types of connected components is different.
 Hence, $(G,\chi)$ is identified by $(\max\{2,k\}+d)$-WL.
 
 Finally, suppose that $\ell = 1$ and that there is some $u \in V(G)$ such that $(G_1,\chi_1) = (G, \chi[u])$.
 By the induction hypothesis, $(G,\chi[u])$ is identified by $(\max\{2,k\}+d-1)$-WL. Hence, by Lemma \ref{la:indiv-dimension}, $(G,\chi)$ is identified by $(\max\{2,k\}+d)$-WL.
\end{proof}

In analyzing the $k$-WL depth of (vertex-colored) graphs $(G,\chi)$, the challenge is when only the last option is applicable, i.e., $(G,\chi)$ is connected and flipped, and $\chi$ is stable with respect to $k$-WL, that is, it is not refined by it.
In such a case, we say that $(G,\chi)$ is \emph{$k$-robust}.

\begin{lemma}
 \label{la:robust-no-singleton}
 Let $k \geq 1$ and suppose that $(G,\chi)$ is $k$-robust and $|V(G)| \geq 2$.
 Then $|\chi^{-1}(c)| \geq 2$ for every color $c \in \im(\chi)$.
\end{lemma}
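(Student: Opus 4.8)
The statement asserts that if $(G,\chi)$ is $k$-robust with $|V(G)| \geq 2$, then no color class is a singleton. The plan is to argue by contradiction: suppose $c \in \im(\chi)$ has $|\chi^{-1}(c)| = 1$, say $\chi^{-1}(c) = \{v\}$. Since $(G,\chi)$ is $k$-robust, it is connected, so $v$ has a neighbour. The key observation is that a singleton color class behaves like an individualized vertex, and the refinement/flip machinery should then be able to peel it off, contradicting robustness (which requires that none of Options \ref{item:refine}--\ref{item:individualize} make progress).

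More concretely, first I would exploit the flip operation. Because $\{v\}$ is its own color class, we may choose a flip function $f$ that complements edges exactly between $\chi(v)$ and every color class $c'$ for which $v$ is adjacent to \emph{more than half} of the vertices colored $c'$. After applying $\flip$ (which is one of the allowed operations, Option \ref{item:flip}), the vertex $v$ is adjacent to at most half of each color class. I would then argue that since $(G,\chi)$ is flipped (being $k$-robust) and stable with respect to $k$-WL, the neighbourhood of $v$ within each color class is already "balanced" in the sense forced by $k$-WL stability together with the flip condition on $f^*_{G,\chi}$; this should pin down $v$'s local structure. The central point is to show that \emph{some} allowed operation strictly makes progress — either the flip changes the edge set (contradicting flippedness), or $k$-WL refines the coloring further (contradicting stability), or the graph becomes disconnected. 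Since $(G,\chi)$ is $k$-robust, none of these can happen, which should force $|V(G)| = 1$.

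The cleanest route is probably: if $v$ is a singleton color class, then $v$ is effectively already individualized, so $\chi$ equals $\chi'[v]$ for the coloring $\chi'$ obtained by merging $v$ back (or the identity if $v$ keeps its own class trivially) — but more usefully, stability of $\chi$ with respect to $k$-WL, combined with $v$ being a distinguished vertex, forces each other color class $c'$ to be split by $k$-WL according to adjacency to $v$ unless every vertex of color $c'$ has the same adjacency relationship to $v$ (all adjacent, or all non-adjacent). Combined with the flip condition (each class is at most half-adjacent to $v$, hence \emph{zero}-adjacent if uniform), we get that $v$ is isolated after the flip. Then splitting into connected components (Option \ref{item:component}) separates $\{v\}$ from the rest, so $(G,\chi)$ is not connected — but wait, this is after the flip; here I would instead note that $G$ itself (being flipped) must then already be disconnected, or $v$ has an edge whose removal by $f^*$ the idempotence of flipping forbids. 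The bookkeeping between "before flip" and "after flip" is the one delicate spot.

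\textbf{Expected main obstacle.} The subtle part is reconciling $k$-robustness — which requires \emph{simultaneously} that $(G,\chi)$ is flipped, connected, and $k$-WL-stable — with the consequences of a singleton class, without accidentally using an operation in a way that doesn't literally fit one of \ref{item:refine}--\ref{item:individualize}. In particular one must be careful that "being flipped" means the flip $f^*_{G,\chi}$ with respect to the \emph{current} coloring yields the same graph; after any hypothetical refinement the relevant flip function changes. I expect the proof to instead be short and essentially structural: a singleton class $\{v\}$ together with $k$-WL-stability ($k \geq 1$) forces, via the $1$-WL-style neighbourhood multiset condition, that every color class consists of vertices with identical adjacency to $v$; the $f^*$-optimality (at most half-adjacency) then forces non-adjacency; hence $v$ is isolated, so $G$ — being connected with $\geq 2$ vertices — cannot be flipped and robust at once. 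The care needed is only in phrasing this so it applies uniformly for all $k \geq 1$, since for $k=1$ the neighbourhood-multiset argument is exactly $1$-WL refinement, while for $k \geq 2$ one additionally has the twin-type distinctions already noted before Lemma \ref{la:remove-twins}.
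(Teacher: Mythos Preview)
Your proposal is correct and lands on essentially the same argument as the paper: $k$-WL stability forces every color class to have uniform adjacency to the singleton $v$, and flippedness then yields the contradiction (the paper picks a neighbour $w$, observes its entire class $D$ lies in $N(v)$, and notes this directly violates being flipped; you instead deduce $v$ is isolated and contradict connectedness --- a trivially equivalent endpoint). The first half of your plan, invoking IRC-tree operations and Options \ref{item:refine}--\ref{item:individualize}, is unnecessary detour: $k$-robustness is purely the structural condition ``connected, flipped, and $k$-WL-stable'', and the three-line proof uses only these properties directly.
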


\begin{proof}
 Suppose towards a contradiction that $|\chi^{-1}(c)| = 1$ holds for some color $c$, and let $v \in V(G)$ denote the unique vertex of color $c$.
 Since $G$ is connected and $|V(G)| \geq 2$, there exists a $w \in N(v)$.
 Let $D \coloneqq \{w' \in V(G) \mid \chi(w') = \chi(w)\}$.
 Since $\chi$ is stable with respect to $k$-WL, we conclude that $D \subseteq N(v)$.
 But this contradicts $(G,\chi)$ being flipped since all edges are present between the color classes $\{v\}$ and $D$.
\end{proof}

For a graph $G$, a vertex coloring $\chi$ and a set $U \subseteq V(G)$, we define the coloring $\chi_{k,U}$ via $\chi_{k,U}(v) \coloneqq \WL{k}{G,\chi[U]}(v)$ for all $v \in V(G)$.
For a single vertex $u \in V(G)$, we also set $\chi_{k,u} \coloneqq \chi_{k,\{u\}}$. 

Now, a simple strategy to bound the $k$-WL depth of a graph is, given a $k$-robust graph, to individualize a small set of vertices $U$ so that the number of colors in $\chi_{k,U}$ increases as much as possible.

\begin{lemma}
 \label{la:wl-depth-from-color-class-increase}
 Let $\xi,k \geq 1$ be integers.
 Suppose that, for every $k$-robust graph $(G,\chi)$ with $n > 1$ vertices, there is a set $U \subseteq V(G)$ such that $|\im(\chi_{k,U})| \geq |\im(\chi)| + \xi|U|$.
 Then
 \[\WLd{k}{G,\chi} \leq \frac{n-1}{\xi}\]
 holds for every colored graph $(G,\chi)$.
\end{lemma}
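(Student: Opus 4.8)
The plan is to establish, by induction, the sharper statement
\[ \WLd{k}{G,\chi} \;\le\; \frac{|V(G)| - |\im(\chi)|}{\xi} \qquad \text{for every colored graph } (G,\chi), \]
from which the lemma is immediate since $|\im(\chi)| \ge 1$ whenever $V(G) \neq \emptyset$. I would run the induction primarily on $n := |V(G)|$ and, for fixed $n$, secondarily on the \emph{defect} $\delta := n - |\im(\chi)| \ge 0$. If $\delta = 0$ then $\chi$ is discrete and $\WLd{k}{G,\chi} = 0$ by Example~\ref{exa:discrete-coloring}; so in the inductive step we may assume $\delta \ge 1$, and in particular $n \ge 2$. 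The point of tracking $n - |\im(\chi)|$ rather than just $n$ is that individualizing a vertex keeps $n$ fixed but should be allowed to decrease the potential, and this refined potential is what lets one individualization ``pay for itself''.

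Next I would reduce to the case that $(G,\chi)$ is $k$-robust, i.e.\ connected and flipped with $\chi$ stable under $k$-WL. If $\chi$ is not stable, replacing $\chi$ by $v \mapsto \WL{k}{G,\chi}(v)$ does not increase $\WLd{k}{G,\chi}$ (Lemma~\ref{la:wl-depth-bound}\ref{item:wl-depth-bound-1}) and strictly increases $|\im(\chi)|$, hence strictly decreases $\delta$ with $n$ unchanged, so the secondary induction hypothesis finishes the case. If $\chi$ is stable but $(G,\chi)$ is not flipped, I would pass to $\flip(G,\chi)$ via Lemma~\ref{la:wl-depth-bound}\ref{item:wl-depth-bound-2}; this leaves $n$ and $\delta$ untouched, but since flipping is idempotent the result is flipped, and it is then either no longer stable (handled as above) or stable and flipped. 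Finally, if $(G,\chi)$ is stable and flipped but disconnected, say with components $A_1,\dots,A_\ell$ ($\ell \ge 2$), then by Lemma~\ref{la:wl-depth-bound}\ref{item:wl-depth-bound-3} and the primary induction hypothesis (each $|A_i| < n$),
\[ \WLd{k}{G,\chi} \le \max_{i} \WLd{k}{G[A_i],\chi|_{A_i}} \le \max_i \frac{|A_i| - |\im(\chi|_{A_i})|}{\xi} \le \frac{\delta}{\xi}, \]
where the last step uses $\sum_i |\im(\chi|_{A_i})| \ge |\im(\chi)|$ together with $\sum_i (|A_i| - |\im(\chi|_{A_i})|) = \delta$ and the non-negativity of each summand.

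The heart of the argument is then a $k$-robust $(G,\chi)$ with $n \ge 2$. Here the hypothesis of the lemma yields a (non-empty) set $U \subseteq V(G)$ with $|\im(\chi_{k,U})| \ge |\im(\chi)| + \xi|U|$. Iterating Lemma~\ref{la:wl-depth-bound}\ref{item:wl-depth-bound-4} over the $|U|$ vertices of $U$ gives $\WLd{k}{G,\chi} \le |U| + \WLd{k}{G,\chi[U]}$, and since $\chi_{k,U}(v) = \WL{k}{G,\chi[U]}(v)$, Lemma~\ref{la:wl-depth-bound}\ref{item:wl-depth-bound-1} gives $\WLd{k}{G,\chi[U]} \le \WLd{k}{G,\chi_{k,U}}$. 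The colored graph $(G,\chi_{k,U})$ has $n$ vertices but defect $n - |\im(\chi_{k,U})| \le \delta - \xi|U| < \delta$, so the secondary induction hypothesis bounds $\WLd{k}{G,\chi_{k,U}}$ by $(\delta - \xi|U|)/\xi = \delta/\xi - |U|$. Adding up would then give $\WLd{k}{G,\chi} \le |U| + \delta/\xi - |U| = \delta/\xi$, completing the induction.

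The main obstacle I anticipate is the flip operation: it is the only one of the four operations in Definition~\ref{def:irc} that changes neither $n$ nor the defect $\delta$, so it cannot be charged against the induction measure, and the argument only goes through because flipping is idempotent and hence is applied at most once before one of the progress-making operations (refine, split, individualize) becomes available. Two smaller points deserve care. First, the hypothesis must be read with a non-empty $U$: for a $k$-robust graph $\chi$ is stable, so $U = \emptyset$ trivially satisfies $|\im(\chi_{k,U})| \ge |\im(\chi)|$ and would otherwise render the hypothesis vacuous while the conclusion fails (e.g.\ for Cai--Fürer--Immerman graphs). Second, in the disconnected case one genuinely needs $|\im(\chi)| \le \sum_i |\im(\chi|_{A_i})|$ rather than the weaker bound $|\im(\chi|_{A_i})| \ge 1$, since a single component may carry almost the entire colour defect.
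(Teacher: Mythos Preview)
Your proposal is correct and follows essentially the same route as the paper: prove the sharper inequality $\WLd{k}{G,\chi} \le (|V(G)| - |\im(\chi)|)/\xi$ by induction, splitting into the four cases (not stable, not flipped, disconnected, $k$-robust) and using Lemma~\ref{la:wl-depth-bound} together with the hypothesis in the robust case. The one substantive difference is the induction measure: the paper inducts on the triple $(|V(G)|,\,|\im(\chi)|,\,|E(G)|)$ with the inverse order on the middle component, so that the flip case is handled by the strict edge decrease $|E(\flip(G,\chi))| < |E(G)|$; you instead induct on $(n,\delta)$ and absorb the flip by idempotency, arguing that after one flip the graph is flipped and one of the three progress-making cases applies. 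Both work; the paper's version is slightly cleaner as a pure induction (every case invokes the hypothesis on a strictly smaller triple), whereas yours requires noting, as you do, that a stable coloring stays stable under $\flip$ so the chain flip--then--(split or individualize) terminates without looping. Your remark that the hypothesis must be read with $U \neq \emptyset$ is well taken and is indeed implicit in the paper's step ``in particular, $|\im(\chi_{k,U})| > |\im(\chi)|$''.
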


\begin{proof}
 We prove that
 \begin{equation}
  \label{eq:wl-depth-xi-simple}
  \WLd{k}{G,\chi} \leq \frac{|V(G)|-|\im(\chi)|}{\xi}
 \end{equation}
 holds for every colored graph $(G,\chi)$, which implies the statement.

 We prove \eqref{eq:wl-depth-xi-simple} by induction on the tuple $(|V(G)|,|\im(\chi)|,|E(G)|)$.
 Consider the set $M \coloneqq \{(n,\ell,m) \in \ZZ_{\geq 0}^3 \mid \ell \leq n\}$ and observe that $(|V(G)|,|\im(\chi)|,|E(G)|) \in M$.
 For the induction, we define a linear order $\prec$ on $M$ via $(n,\ell,m) \prec (n',\ell',m')$ if $n < n'$, or $n = n'$ and $\ell > \ell'$, or $n = n'$ and $\ell = \ell'$ and $m < m'$.
 Note that we use the inverse order on the second component, i.e., for $\ell \neq \ell'$, we have $(n,\ell,m) \prec (n,\ell',m')$ if $\ell > \ell'$.
 Still, since $\ell \leq n$ for every $(n,\ell,m) \in M$, there are no infinite decreasing chains in $M$.

 For the base case, suppose that $|V(G)| = 1$.
 Then $\WLd{1}{G,\chi} = 0 = \frac{|V(G)| - |\im(\chi)|}{\xi}$ and the statement holds.

 For the inductive step, suppose that $(G,\chi)$ is a colored graph with $|V(G)| > 1$.
 We distinguish several cases.
 \begin{itemize}[leftmargin=3ex]
  \item
   First, suppose that $\chi$ is not stable with respect to $k$-WL, i.e., $\chi^* \prec \chi$ where $\chi^*(v) \coloneqq \WL{k}{G,\chi}(v)$ for all $v \in V(G)$.
   Observe that $|\im(\chi^*)| > |\im(\chi)|$. This implies that $(|V(G)|,|\im(\chi^*)|,|E(G)|) \prec (|V(G)|,|\im(\chi)|,|E(G)|)$.
   Hence, by the induction hypothesis, we obtain
   \[\WLd{k}{G,\chi^*} \leq \frac{|V(G)|-|\im(\chi^*)|}{\xi}.\]
   Also, $\WLd{k}{G,\chi} \leq \WLd{k}{G,\chi^*}$ by Lemma \ref{la:wl-depth-bound}\ref{item:wl-depth-bound-1}.
   Together, we obtain that
   \begin{align*}
    \WLd{k}{G,\chi} \leq \WLd{k}{G,\chi^*} \leq \frac{|V(G)|-|\im(\chi^*)|}{\xi} \leq \frac{|V(G)|-|\im(\chi)|}{\xi}.
   \end{align*}
  \item
   Next, assume that $(G,\chi)$ is not flipped.
   Let $(G',\chi') \coloneqq \flip(G,\chi)$.
   Observe that $V(G') = V(G)$, $\chi' = \chi$ and $|E(G')| < |E(G)|$.
   So we conclude that  $(|V(G')|,|\im(\chi')|,|E(G')|) \prec (|V(G)|,|\im(\chi)|,|E(G)|)$.
   Hence, by the induction hypothesis, we get
   \[\WLd{k}{G',\chi'} \leq \frac{|V(G')|-|\im(\chi')|}{\xi}.\]
   Also, $\WLd{k}{G,\chi} \leq \WLd{k}{G',\chi'}$ by Lemma \ref{la:wl-depth-bound}\ref{item:wl-depth-bound-2}.
   Together, we obtain that
   \begin{align*}
    \WLd{k}{G,\chi} \leq \WLd{k}{G',\chi'} \leq \frac{|V(G')|-|\im(\chi')|}{\xi} = \frac{|V(G)|-|\im(\chi)|}{\xi}.
   \end{align*}
  \item
   Now suppose that $G$ is not connected and let $A_1,\dots,A_\ell$ be the vertex sets of the connected components of $G$.
   Observe that $|A_i| < |V(G)|$ for all $i \in [\ell]$.
   By the induction hypothesis, we have that
   \[\WLd{k}{G[A_i],\chi|_{A_i}} \leq \frac{|A_i|-|\im(\chi|_{A_i})|}{\xi}\]
   for all $i \in [\ell]$.
   Together with Lemma \ref{la:wl-depth-bound}\ref{item:wl-depth-bound-3}, we obtain
   \begin{align*}
    \WLd{k}{G,\chi} \leq \max_{i \in [\ell]} \WLd{k}{G[A_i],\chi|_{A_i}} \leq \max_{i \in [\ell]} \frac{|A_i| - |\im(\chi|_{A_i})|}{\xi}.
   \end{align*}
   Also, since $|A_i| - |\im(\chi|_{A_i})| \geq 0$, we get that
   \begin{align*}
    \max_{i \in [\ell]}(|A_i| - |\im(\chi|_{A_i})|) &\leq  \sum_{i \in [\ell]} (|A_i| - |\im(\chi|_{A_i})|)\\
                                                    &= |V(G)| - \sum_{i \in [\ell]}|\im(\chi|_{A_i})| \leq |V(G)| - |\im(\chi)|.
   \end{align*}
   So $\WLd{k}{G,\chi} \leq \frac{|V(G)|-|\im(\chi)|}{\xi}$.
  \item
   Finally, suppose that $(G,\chi)$ is connected and flipped, $|V(G)| > 1$, and $\chi$ is stable with respect to $k$-WL.
   By assumption, there is a $U \subseteq V(G)$ such that $|\im(\chi_{k,U})| \geq |\im(\chi)| + \xi|U|$.
   In particular, $|\im(\chi_{k,U})| > |\im(\chi)|$, hence $(|V(G)|,|\im(\chi_{k,U})|,|E(G)|) \prec (|V(G)|,|\im(\chi)|,|E(G)|)$.
   Thus, by the induction hypothesis, we have
   \[\WLd{k}{G,\chi_{k,U}} \leq \frac{|V(G)|-|\im(\chi_{k,U})|}{\xi}.\]
   Also, applying Lemma \ref{la:wl-depth-bound}\ref{item:wl-depth-bound-4} a total of $|U|$ many times and then applying Lemma \ref{la:wl-depth-bound}\ref{item:wl-depth-bound-1}, we obtain
   \begin{align*}
    \WLd{k}{G,\chi} \leq \WLd{k}{G,\chi[U]} + |U| \leq \WLd{k}{G,\chi_{k,U}} + |U|.
   \end{align*}
   Together, it follows that
   \begin{alignat*}{2}
        \WLd{k}{G,\chi} &\alignsymbols{\leq} \WLd{k}{G,\chi_{k,U}} + |U| &&\alignsymbols{\leq} \frac{|V(G)|-|\im(\chi_{k,U})|}{\xi} + |U|\\
                        &\alignsymbols{\leq} \frac{|V(G)| - (|\im(\chi)| + \xi|U|) + \xi|U|}{\xi} &&\alignsymbols{=} \frac{|V(G)| - |\im(\chi)|}{\xi}.
   \end{alignat*}
 \end{itemize}
 This concludes the proof.
\end{proof}

With this tool at hand, we can already obtain a first bound on the $1$-WL depth of arbritrary graphs.

For a vertex-colored graph $(G,\chi)$, we define the graph $G[[\chi]]$ with vertex set $V(G[[\chi]]) \coloneqq \im(\chi)$ and edge set
\[E(G[[\chi]]) \coloneqq \{(c_1,c_2) \mid E_G(\chi^{-1}(c_1),\chi^{-1}(c_2)) \neq \emptyset\}.\]
Here, we explicitly define $G[[\chi]]$ to contain self-loops, i.e., there is a loop $(c_1,c_1) \in E(G[[\chi]])$ if $E_G(\chi^{-1}(c_1),\chi^{-1}(c_1)) \neq \emptyset$.
For $c_1 \in V(G[[\chi]])$, we set $\deg_{G[[\chi]]}(c_1) \coloneqq |\{c_2 \in V(G[[\chi]]) \mid (c_1,c_2) \in E(G[[\chi]])\}|$.

\begin{lemma}
 \label{la:color-class-increase-general-robust}
 Let $(G,\chi)$ be a $1$-robust graph such that $|V(G)|> 1$.
 Let $c \in \im(\chi)$ and set $\xi \coloneqq \deg_{G[[\chi]]}(c) + 1$.
 Then $|\im(\chi_{1,u})| \geq |\im(\chi)| + \xi$ for every $u \in \chi^{-1}(c)$.
\end{lemma}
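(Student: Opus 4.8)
The plan is to individualize a single vertex $u \in \chi^{-1}(c)$ and run $1$-WL (Color Refinement) to reach the stable coloring $\chi_{1,u}$, tracking how the color classes of $\chi$ get subdivided. Write $n_{c'} \coloneqq |\chi^{-1}(c')|$ for every color $c'$, and let $S \coloneqq \{c' \in \im(\chi) \mid (c,c') \in E(G[[\chi]])\}$ be the set of neighbour-colours of $c$, so that $|S| = \deg_{G[[\chi]]}(c)$ and $\xi = |S|+1$; note $S$ may contain $c$ itself. Since $\chi_{1,u}$ refines $\chi$, each class $\chi^{-1}(c')$ is partitioned into $k_{c'} \geq 1$ classes of $\chi_{1,u}$; putting $\operatorname{split}(c') \coloneqq k_{c'}-1 \geq 0$, we get $|\im(\chi_{1,u})| = |\im(\chi)| + \sum_{c'} \operatorname{split}(c')$, so it suffices to show $\sum_{c'} \operatorname{split}(c') \geq \xi$. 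I will establish three facts: (i) $\operatorname{split}(c) \geq 1$ always; (ii) $\operatorname{split}(c') \geq 1$ for every $c' \in S \setminus \{c\}$; and (iii) $\operatorname{split}(c) \geq 2$ if $c \in S$. Granting these, $\sum_{c'}\operatorname{split}(c') \geq \operatorname{split}(c) + |S\setminus\{c\}|$, which is at least $1 + |S| = \xi$ when $c \notin S$ and at least $2 + (|S|-1) = \xi$ when $c \in S$.

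Fact (i) is immediate: $n_c \geq 2$ by Lemma~\ref{la:robust-no-singleton}, and in $\chi[u]$ the vertex $u$ already has a colour not shared with $\chi^{-1}(c)\setminus\{u\}$. For (ii), fix $c' \in S\setminus\{c\}$. Because $\chi$ is $1$-WL-stable, the bipartite graph $G[\chi^{-1}(c),\chi^{-1}(c')]$ is biregular, and since it has an edge, every vertex of $\chi^{-1}(c)$ — in particular $u$ — has some number $d \geq 1$ of neighbours in $\chi^{-1}(c')$. Counting edges, $d\cdot n_c = |E_G(\chi^{-1}(c),\chi^{-1}(c'))| \leq \tfrac12 n_c n_{c'}$ because $(G,\chi)$ is flipped and $c \neq c'$; hence $d \leq \tfrac12 n_{c'} < n_{c'}$. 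Thus $\chi^{-1}(c')$ contains both a vertex adjacent to $u$ and one not adjacent to $u$; these two vertices share a colour in $\chi[u]$, but since $u$'s colour is a singleton class there, already the first refinement round of $1$-WL separates them, so they lie in distinct classes of $\chi_{1,u}$ and $\operatorname{split}(c') \geq 1$. Fact (iii) is the same argument applied to the (regular, by stability) graph $G[\chi^{-1}(c)]$: now the flip inequality reads $|E_G(\chi^{-1}(c),\chi^{-1}(c))| \leq \tfrac12\binom{n_c}{2}$, which forces the internal degree $d$ of $u$ within $\chi^{-1}(c)$ to satisfy $1 \leq d \leq \tfrac{n_c-1}{2}$, hence $n_c \geq 3$ and $d \leq n_c-2$; so $\chi^{-1}(c)\setminus\{u\}$ contains both a neighbour and a non-neighbour of $u$, which $1$-WL separates, and on top of splitting off $u$ this gives $\operatorname{split}(c) \geq 2$.

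I expect the only delicate point to be the self-loop case, i.e.\ making sure that the generic ``$+1$'' in $\xi$ and the ``$+1$ for $c$ being its own neighbour-colour'' are both realized by the additional subdivision of $\chi^{-1}(c)$, and that the weaker flip inequality available there ($e \leq \tfrac12\binom{n_c}{2}$ rather than $e \leq \tfrac12 n_c n_{c'}$) still yields the strict bound $d \leq n_c-2$ needed to separate a neighbour from a non-neighbour inside $\chi^{-1}(c)\setminus\{u\}$. Everything else — biregularity/regularity from $1$-WL-stability, the elementary edge counts, and the observation that individualizing $u$ splits, in one refinement round, any monochromatic pair exactly one of whose members is adjacent to $u$ — is routine.
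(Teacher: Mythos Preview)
Your proof is correct and follows essentially the same approach as the paper's: individualize $u$, then argue that each neighbouring colour class of $c$ in $G[[\chi]]$ splits at least once (and $\chi^{-1}(c)$ itself splits at least twice when $c$ has a self-loop). The only cosmetic difference is that you invoke the quantitative flip inequality $|E| \leq \tfrac12(\text{max possible})$ to derive explicit degree bounds, whereas the paper uses the weaker consequence ``not complete (bipartite)'' together with (bi)regularity; both yield exactly the needed neighbour/non-neighbour of $u$ in each class.
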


\begin{proof}
 Let $c_1,\dots,c_d$ be the neighbors of $c$ in the graph $G[[\chi]]$.
 We have that $|\im(\chi[u])| \geq |\im(\chi)| + 1$.
 Hence, it suffices to argue that $|\im(\chi_{1,u})| \geq |\im(\chi[u])| + d$.
 Let $i \in [d]$ and first suppose that $c_i \neq c$.
 Then the graph $G[\chi^{-1}(c),\chi^{-1}(c_i)]$ is biregular, because $\chi$ is stable with respect to $1$-WL.
 Moreover, this graph contains at least one edge since $(c,c_i) \in E(G[[\chi]])$, and it is not a complete bipartite graph since $(G,\chi)$ is flipped.
 So, after individualizing $u$ and performing $1$-WL, the color class $c_i$ is split.
 
 For $c = c_i$, the same argument applies.
 The graph $G[\chi^{-1}(c)]$ contains at least one edge, since $(c,c) \in E(G[[\chi]])$, and it is not a complete graph, since $(G,\chi)$ is flipped.
 So, after individualizing $u$ and performing $1$-WL, the color class $c$ is split one more time (following the split that occurs from individualizing $u$).
\end{proof}

\begin{corollary}
 \label{cor:wl-depth-robust}
 Let $(G,\chi)$ be a colored graph and let $n \coloneqq |V(G)|$. Then $\WLd{1}{G,\chi} \leq \frac{n-1}{2}$.
\end{corollary}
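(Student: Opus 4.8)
The plan is to derive the corollary as a direct application of Lemma~\ref{la:wl-depth-from-color-class-increase} with the choice $k = 1$ and $\xi = 2$. For this, I need to verify the hypothesis of that lemma, namely: for every $1$-robust graph $(G,\chi)$ with $n > 1$ vertices, there is a set $U \subseteq V(G)$ with $|\im(\chi_{1,U})| \geq |\im(\chi)| + 2|U|$. I will establish this with a singleton set $U = \{u\}$, so that the required inequality becomes $|\im(\chi_{1,u})| \geq |\im(\chi)| + 2$.

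First I would observe that any $1$-robust graph $(G,\chi)$ with $n \geq 2$ contains at least one edge: $1$-robustness requires $G$ to be connected, and a connected graph on at least two vertices has an edge. Consequently the auxiliary graph $G[[\chi]]$ has at least one edge (possibly a self-loop): if $vw \in E(G)$, then $(\chi(v),\chi(w)) \in E(G[[\chi]])$, whence $\deg_{G[[\chi]]}(\chi(v)) \geq 1$. So there is a color $c \in \im(\chi)$ with $\deg_{G[[\chi]]}(c) \geq 1$. Setting $\xi_c \coloneqq \deg_{G[[\chi]]}(c) + 1 \geq 2$ and applying Lemma~\ref{la:color-class-increase-general-robust} to this color $c$, we get $|\im(\chi_{1,u})| \geq |\im(\chi)| + \xi_c \geq |\im(\chi)| + 2$ for every $u \in \chi^{-1}(c)$. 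Taking $U = \{u\}$ then verifies the hypothesis of Lemma~\ref{la:wl-depth-from-color-class-increase} with $\xi = 2$.

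With the hypothesis confirmed, Lemma~\ref{la:wl-depth-from-color-class-increase} immediately yields $\WLd{1}{G,\chi} \leq \frac{n-1}{2}$ for every colored graph $(G,\chi)$, which is exactly the claim. I do not expect any real obstacle here: the only point requiring a moment of care is the observation that $1$-robustness forces connectivity, and hence the existence of an edge once $n \geq 2$, so that some color has positive degree in $G[[\chi]]$ and Lemma~\ref{la:color-class-increase-general-robust} becomes applicable; the rest is a mechanical substitution into the two preceding lemmas.
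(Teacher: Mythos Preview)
Your proposal is correct and follows essentially the same route as the paper: verify the hypothesis of Lemma~\ref{la:wl-depth-from-color-class-increase} with $k=1$, $\xi=2$ by applying Lemma~\ref{la:color-class-increase-general-robust} to a color $c$ with $\deg_{G[[\chi]]}(c)\geq 1$, then conclude. The only cosmetic difference is that the paper notes $\deg_{G[[\chi]]}(c)\geq 1$ for \emph{every} $c\in\im(\chi)$ (which follows since connectivity on $n\geq 2$ vertices forces every vertex to have a neighbor), whereas you only argue it for \emph{some} $c$; either suffices.
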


\begin{proof}
 We set $\xi \coloneqq 2$.
 Let $(G,\chi)$ be a $1$-robust graph such that $|V(G)|> 1$.
 Then $\deg_{G[[\chi]]}(c) \geq 1$ for all $c \in \im(\chi)$.
 So there is some vertex $u \in V(G)$ such that $|\im(\chi_{1,u})| \geq |\im(\chi)| + \xi$ by Lemma \ref{la:color-class-increase-general-robust}.

 So it holds that $\WLd{1}{G,\chi} \leq \frac{n-1}{2}$ for all colored graphs $(G,\chi)$ by Lemma \ref{la:wl-depth-from-color-class-increase}.
\end{proof}

In combination with Lemma \ref{la:wl-depth-to-dimension}, we obtain that the WL dimension of every $n$-vertex graph is at most $\frac{n+3}{2}$, which essentially recovers the result from \cite{PikhurkoVV06} using simpler arguments (but using counting quantifiers which are not needed in \cite{PikhurkoVV06}).
In the following, we extend this type of argument to obtain improved bounds.

\section{Bounding the WL Depth in the Vertex Cover Number}
\label{sec:vc}

In this section, we prove Theorem \ref{thm:main-vc}.
On a high level, the strategy for the proof is similar to the one for Corollary \ref{cor:wl-depth-robust}, in which we have argued that for every $1$-robust graph $(G,\chi)$, there is some vertex $u \in V(G)$ such that, after individualizing $u$ and performing $1$-WL, the number of colors in $G$ increases by at least $2$.

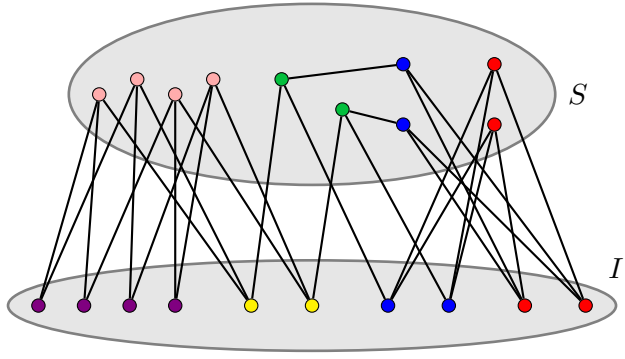
\begin{figure}
 \centering
 \begin{tikzpicture}

  \draw[gray,line width=1pt,fill=gray!20] (0,2.8) ellipse (3.2cm and 1.2cm);
  \node at (3.5,2.8) {$S$};
  \draw[gray,line width=1pt,fill=gray!20] (0,0) ellipse (4cm and 0.6cm);
  \node at (4.0,0.5) {$I$};

  \node[vertex,fill=red] (r1) at (2.4,3.2) {};
  \node[vertex,fill=red] (r2) at (2.4,2.4) {};
  \node[vertex,fill=red] (r3) at (2.8,0) {};
  \node[vertex,fill=red] (r4) at (3.6,0) {};

  \node[vertex,fill=blue] (b1) at (1.2,3.2) {};
  \node[vertex,fill=blue] (b2) at (1.2,2.4) {};
  \node[vertex,fill=blue] (b3) at (1.8,0) {};
  \node[vertex,fill=blue] (b4) at (1.0,0) {};

  \node[vertex,fill=darkpastelgreen] (g1) at (0.4,2.6) {};
  \node[vertex,fill=darkpastelgreen] (g2) at (-0.4,3.0) {};

  \node[vertex,fill=yellow] (y1) at (0,0) {};
  \node[vertex,fill=yellow] (y2) at (-0.8,0) {};

  \node[vertex,fill=pink!90!red] (o1) at (-1.3,3.0) {};
  \node[vertex,fill=pink!90!red] (o2) at (-1.8,2.8) {};
  \node[vertex,fill=pink!90!red] (o3) at (-2.3,3.0) {};
  \node[vertex,fill=pink!90!red] (o4) at (-2.8,2.8) {};

  \node[vertex,fill=violet] (v1) at (-1.8,0) {};
  \node[vertex,fill=violet] (v2) at (-2.4,0) {};
  \node[vertex,fill=violet] (v3) at (-3.0,0) {};
  \node[vertex,fill=violet] (v4) at (-3.6,0) {};

  \foreach \v/\w in {r1/r4,r2/r3,r1/b3,r1/b4,r2/b3,r2/b4,r3/b1,r3/b2,r4/b1,r4/b2,g1/b2,g1/b3,g2/b1,g2/b4,
                     g1/y1,g2/y2,y1/o1,y1/o2,y2/o3,y2/o4,v1/o1,v1/o2,v2/o1,v2/o3,v3/o2,v3/o4,v4/o3,v4/o4}{
   \draw[edge] (\v) -- (\w);
  }

 \end{tikzpicture}
 \caption{A colored graph $(G,\chi)$ and a vertex cover $S$ of $G$.
  We set $I \coloneqq V \setminus S$.
  The set $\chi(S)$ contains the colors \textsf{{\color{red}red}}, \textsf{{\color{blue}blue}}, \textsf{{\color{darkpastelgreen}green}} and \textsf{{\color{pink!90!red}pink}}.
  The set $\chi(S) \setminus \chi(I)$ contains only \textsf{{\color{darkpastelgreen}green}} and \textsf{{\color{pink!90!red}pink}}.
  Hence, $p = 4 + 2 = 6$ in this example.
  }
 \label{fig:vc-example}
\end{figure}

Let $(G,\chi)$ be a colored graph and let $S$ be a vertex cover of $G$.
An example is given in Figure~\ref{fig:vc-example}.
We intend to find a vertex $u \in V(G)$ such that, after individualizing $u$ and performing $1$-WL, the number of colors \emph{in the set $S$} grows as much as possible.
Indeed, as soon as all vertices in $S$ have pairwise different colors, one can prove easily that the $1$-WL depth of the graph is equal to $0$.

To actually implement this idea, we rely on some additional instruments.
First of all, note that a color appearing in the set $S$ may also appear outside of $S$ (i.e., there may be vertices $v \in S$ and $w \notin S$ with $\chi(v) = \chi(w)$).
Here, our intuition is that it is preferable if colors only appear in the set $S$.
This motivates us to consider a different progress measure which, instead of only counting colors in $S$, also gives additional weight to colors that only appear in $S$.
More precisely, we aim to increase the parameter $p \coloneqq |\chi(S)| + |\chi(S) \setminus \chi(V \setminus S)|$ as much as possible using a single individualization followed by performing $1$-WL.
Since the maximum value of $p$ is equal to $2|S|$, we need to increase $p$ by at least $3$ with a single individualization to achieve the desired bound.

To reach this goal, we distinguish two cases.
In the first case, $\chi(S) \cap \chi(V \setminus S) \neq \emptyset$ holds, and we prove that we have either reached the goal, or individualizing a single vertex and refining the coloring with $1$-WL renders the two color sets disjoint.
This is covered in Lemma \ref{la:vertex-cover-non-disjoint-colors} below.
Note that, as opposed to Lemma \ref{la:color-class-increase-general-robust}, it is not possible to restrict our analysis to $1$-robust graphs, since arbitrary flips in the edge set may change the vertex cover number.
For this reason, we restrict ourselves to flips that only remove edges and thus do not increase the vertex cover number.

The case $\chi(S) \cap \chi(V \setminus S) = \emptyset$ is then covered in Lemma \ref{la:vertex-cover-disjoint-colors}.
Here, we need to individualize up to two vertices to ensure that the number of colors increases by the desired value of $3$.

\begin{lemma}
 \label{la:vertex-cover-non-disjoint-colors}
 Let $(G,\chi)$ be a colored graph such that $|V(G)| > 1$, $G$ is connected and $\chi$ is stable with respect to $1$-WL.
 Also let $S \subseteq V(G)$ be a vertex cover of $G$ and define $I \coloneqq V(G) \setminus S$.

 Then there are colors $c,d \in \im(\chi)$ such that $|\chi^{-1}(c) \cup \chi^{-1}(d)| \geq 2$ and $vw \in E(G)$ for all distinct $v \in \chi^{-1}(c)$, $w \in \chi^{-1}(d)$, or there is some vertex $u \in V(G)$ such that, for $\chi' \coloneqq \WL{1}{G,\chi[u]}$, it holds that
 \begin{enumerate}[label=(\alph*)]
  \item\label{item:vc-non-disjoint-1} $|\chi'(S)| \geq |\chi(S)| + 2$ and $|\chi'(S) \setminus \chi'(I)| \geq |\chi(S) \setminus \chi(I)| + 1$, or
  \item\label{item:vc-non-disjoint-2} $|\chi'(S)| \geq |\chi(S)| + 1$ and $|\chi'(S) \setminus \chi'(I)| \geq |\chi(S) \setminus \chi(I)| + 2$, or
  \item\label{item:vc-non-disjoint-3} $\chi'(S) \cap \chi'(I) = \emptyset$.
 \end{enumerate}
\end{lemma}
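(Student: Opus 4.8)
The plan is to track the potential $p \coloneqq |\chi(S)| + |\chi(S) \setminus \chi(I)|$, which never exceeds $2|S|$, so that a gain of $3$ from a single individualisation is the target. First I would dispose of the case $\chi(S) \cap \chi(I) = \emptyset$: since $1$-WL only refines, for every $u$ the coloring $\chi' = \WL{1}{G,\chi[u]}$ keeps distinct colors distinct, so $\chi'(S) \cap \chi'(I) = \emptyset$ and \ref{item:vc-non-disjoint-3} holds. Hence I may fix a color $c^* \in \chi(S) \cap \chi(I)$ and write $C^* \coloneqq \chi^{-1}(c^*)$, $C^*_S \coloneqq C^* \cap S$, $C^*_I \coloneqq C^* \cap I$, both nonempty. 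Using that $\chi$ is $1$-WL-stable (so between any two color classes the bipartite graph is biregular, and every class induces a regular graph), I would then extract, under the assumption that the first alternative of the lemma fails, two structural facts: no color class is a singleton (a lone vertex $\{v\}$ would dominate each neighbouring class, producing a complete bipartite pair), and for every $v \in V(G)$ and every color $c$ with $E_G(\{v\},\chi^{-1}(c)) \neq \emptyset$, the vertex $v$ is adjacent to some but not all of $\chi^{-1}(c) \setminus \{v\}$, since biregularity would otherwise force a complete bipartite pair (or a monochromatic clique when $c = \chi(v)$).

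The core step is to individualise a vertex $u \in C^*_I$ and set $\chi' \coloneqq \WL{1}{G,\chi[u]}$. Since $u \in I$ and $I$ is independent, every neighbour of $u$ lies in $S$; hence for each neighbour-color $c_i$ of $c^*$ (a color $c_i$ with $E_G(C^*,\chi^{-1}(c_i)) \neq \emptyset$), individualising $u$ splits $\chi^{-1}(c_i)$ into the nonempty $u$-neighbour part $N_i$, which lies in $S$, and the nonempty non-neighbour part $R_i$, which contains all of $\chi^{-1}(c_i) \cap I$ (no $I$-vertex is adjacent to $u$). Crucially, the fresh color on $N_i$ lies in $\chi'(S) \setminus \chi'(I)$: every vertex of $N_i$ is adjacent to $u$ while no $I$-vertex is, so $N_i$'s color cannot reappear in $I$. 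The same reasoning applies to $C^*$ itself when it induces an edge: its $u$-neighbour part (a subset of $C^*_S$) splits off into a fresh color of $\chi'(S) \setminus \chi'(I)$.

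With these splits in hand I would run the bookkeeping. Partition $\im(\chi)$ into colors appearing only in $I$, only in $S$, or in both; the colors surviving in $\chi'(S) \cap \chi'(I)$ all descend from ``both'' colors whose class is neither entirely dominated nor entirely avoided by $u$. One then verifies, configuration by configuration, that either every such overlap is destroyed (case \ref{item:vc-non-disjoint-3}), or the splits above, after carefully attributing which new colors land in $\chi'(S)$ versus in $\chi'(S) \setminus \chi'(I)$, yield a net gain of at least $2$ in $|\chi'(S)|$ together with $1$ in $|\chi'(S) \setminus \chi'(I)|$ (case \ref{item:vc-non-disjoint-1}), or at least $1$ and $2$ respectively (case \ref{item:vc-non-disjoint-2}).

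I expect the main obstacle to be the rigid residual case where $c^*$ has essentially one neighbour-color $c_1 \neq c^*$, the $S$-part of $\chi^{-1}(c_1)$ is \emph{entirely} adjacent to $u$ (so the $c_1$-overlap disappears but $|\chi'(S)|$ does not grow through $c_1$), and $C^*$ induces no edge (no first-round split of $C^*$). Here the argument must push refinement past the first round: an $S$-vertex of $C^*$ may have color-$c_1$ neighbours in $I$, whereas an $I$-vertex of $C^*$ cannot, so running $1$-WL to stability separates $C^*_S$ from $C^*_I$ and removes the last overlap, giving \ref{item:vc-non-disjoint-3}; and if even this fails, the graph is so constrained (a near-monochromatic regular graph) that a direct computation of $\chi'$ settles one of the three options. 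Keeping $|\chi'(S)|$ and $|\chi'(S) \setminus \chi'(I)|$ simultaneously large --- rather than merely controlling their sum --- is what makes this bookkeeping delicate.
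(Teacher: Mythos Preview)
Your approach is genuinely different from the paper's, and the residual case you flag is a real gap, not just a bookkeeping nuisance.

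\medskip
\textbf{What the paper does.} The paper never individualises in $I$ and never fixes a mixed colour class. Instead it splits on whether $S$ spans an edge. If there is no edge inside $S$, then $G$ is bipartite with bipartition $(S,I)$; individualising any $u\in S$ makes the parity of $\dist_G(u,\cdot)$ visible to $1$-WL, which separates $S$ from $I$ outright and yields \ref{item:vc-non-disjoint-3}. If there is an edge $vw$ with $v,w\in S$, set $C=\chi^{-1}(\chi(v))$, $D=\chi^{-1}(\chi(w))$; one shows $G[C\cap S]$ (if $C=D$) or $G[C\cap S,D\cap S]$ (if $C\neq D$) is not complete, which produces a $u\in S$ whose individualisation splits $S$ enough to give \ref{item:vc-non-disjoint-1} or \ref{item:vc-non-disjoint-2}. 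The point is that individualising in $S$ immediately contributes a new colour to $\chi'(S)\setminus\chi'(I)$ (the singleton $\{u\}$), which your choice $u\in I$ does not do.

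\medskip
\textbf{Where your argument breaks.} In your residual case you claim that ``an $S$-vertex of $C^*$ may have colour-$c_1$ neighbours in $I$, whereas an $I$-vertex of $C^*$ cannot, so running $1$-WL to stability separates $C^*_S$ from $C^*_I$''. This inference is invalid: $1$-WL has no access to the $S/I$ partition, only to colours. After the first split of $\chi^{-1}(c_1)$ into $N_1\subseteq S$ and $R_1$, both a vertex $a\in C^*_S$ and a vertex $b\in C^*_I\setminus\{u\}$ have the same total number of $c_1$-neighbours (by stability of $\chi$), and nothing forces $|N(a)\cap N_1|\neq|N(b)\cap N_1|$; so $a$ and $b$ can remain indistinguishable by $\chi'$. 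Your fallback (``the graph is so constrained that a direct computation settles one of the three options'') is not an argument, and in fact it is not clear that a single individualisation in $I$ suffices in this configuration. The paper sidesteps the issue entirely by picking $u\in S$: then the singleton colour of $u$ already gives one unit of $|\chi'(S)\setminus\chi'(I)|$, and the presence of an $S$--$S$ edge incident to $u$'s class supplies the remaining splits inside $S$ directly.
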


\begin{proof}
 Suppose that there are no colors $c,d \in \im(\chi)$ such that $|\chi^{-1}(c) \cup \chi^{-1}(d)| \geq 2$ and $vw \in E(G)$ for all distinct $v \in \chi^{-1}(c)$, $w \in \chi^{-1}(d)$ (otherwise, we are already done).

 We distinguish two cases.
 First suppose there are no $v,w \in S$ such that $vw \in E(G)$.
 Then $G$ is bipartite with bipartition $(S,I)$.
 We choose an arbitrary vertex $u \in S$ and set $\chi' \coloneqq \WL{1}{G,\chi[u]}$.
 Since $G$ is connected, it follows for every $w \in V(G)$ that $w \in S$ if and only if $\dist_G(u,w)$ is even.
 Since this is detected by $1$-WL after individualizing $u$, we conclude that $\chi'(S) \cap \chi'(I) = \emptyset$, so Option \ref{item:vc-non-disjoint-3} holds.

 Otherwise, take $v,w \in S$ such that $vw \in E(G)$.
 Let $c \coloneqq \chi(v)$ and $d \coloneqq \chi(w)$.
 Also let $C \coloneqq \chi^{-1}(c)$ and $D \coloneqq \chi^{-1}(d)$.
 We distinguish the following subcases.
 \begin{itemize}[leftmargin=3ex]
  \item First suppose that $C = D$.
   We start by showing that $G[C \cap S]$ is not a complete graph.
   Suppose towards a contradiction that $G[C \cap S]$ is complete.
   Since $\chi$ is stable with respect to $1$-WL, there is a number $r \geq 0$ such that $|N(x) \cap C| = r$ for every $x \in C$.
   In fact, $r \geq |C \cap S| - 1 \geq 1$ since $G[C \cap S]$ is complete and $\{v,w\} \subseteq C \cap S$.
   Since, by assumption, $G[C]$ is not complete, there must be some vertex $y \in I \cap C$.
   Note that $N(y) \subseteq S$ and hence, $|N(y) \cap C \cap S| \geq r \geq 1$.
   But then, by degree arguments, $r \geq |C \cap S|$ since at least one vertex from $C \cap S$ is adjacent to $y$ (and all other vertices of $C \cap S$).
   So, in fact, $C \cap S \subseteq N(y)$ and $(C \cap S) \cup \{y\}$ is a clique.
   Since $G[C]$ is not complete, there must be a second vertex $z$ with $y \neq z \in I \cap C$.
   As before, we obtain that $C \cap S \subseteq N(z)$.
   But now $|N(v) \cap C| \geq |\{y,z\}| + |(C \cap S) \setminus \{v\}| > |C \cap S|$ while $N(y) \cap C \subseteq C \cap S$.
   This contradicts $|N(v) \cap C| = r = |N(y) \cap C|$.

   So $G[C \cap S]$ is not complete, which implies that there is some $u \in C \cap S$ such that $\emptyset \neq N(u) \cap C \cap S \subsetneq (C \cap S) \setminus \{u\}$.
   We set $\chi' \coloneqq \WL{1}{G,\chi[u]}$ and obtain $|\chi'(S)| \geq |\chi(S)| + 2$.
   Also, $|\chi'(S) \setminus \chi'(I)| \geq |\chi(S) \setminus \chi(I)| + 1$ since the individualized vertex $u$ is contained in $S$.
   Thus, Option \ref{item:vc-non-disjoint-1} holds.
  \item Otherwise, $C \neq D$.
   Using the same arguments as before, it follows that $G[C \cap S,D \cap S]$ is not a complete bipartite graph.
   Now, we distinguish two further subcases.
   \begin{itemize}[leftmargin=2.5ex]
    \item First suppose $|C \cap S| \geq 2$ and $|D \cap S| \geq 2$.
     Then there is some $u \in C \cap S$ such that $\emptyset \subsetneq N(u) \cap D \cap S \subsetneq D \cap S$, or there is some $u \in D \cap S$ such that $\emptyset \subsetneq N(u) \cap C \cap S \subsetneq C \cap S$.
     In both cases, we set $\chi' \coloneqq \WL{1}{G,\chi[u]}$ and obtain $|\chi'(S)| \geq |\chi(S)| + 2$.
     Also, $|\chi'(S) \setminus \chi'(I)| \geq |\chi(S) \setminus \chi(I)| + 1$ since the individualized vertex $u$ is contained in $S$ and was not a singleton before.
     Thus, Option \ref{item:vc-non-disjoint-1} holds.
    \item Otherwise $|C \cap S| = 1$ or $|D \cap S| = 1$.
     Without loss of generality, assume $|C \cap S| = 1$.
     Then $|D \cap S| \geq 2$ since $v \in C$, $w \in D$, $vw \in E(G)$ and $G[C \cap S,D \cap S]$ is not a complete bipartite graph.
     Observe that $C \cap S = \{v\}$ in this case.
     Since $G[C \cap S,D \cap S]$ is not a complete bipartite graph, we conclude that $\emptyset \subsetneq N(v) \cap D \cap S \subsetneq D \cap S$.
     Also, since $|C| \geq 2$, we conclude that $c \in \chi(S) \cap \chi(I)$.

     We choose $u \coloneqq v$ and set $\chi' \coloneqq \WL{1}{G,\chi[u]}$.
     Then $|\chi'(D \cap S)| \geq 2$ and hence, $|\chi'(S)| \geq |\chi(S)| + 1$.
     For Option \ref{item:vc-non-disjoint-2} to hold, it remains to show that $|\chi'(S) \setminus \chi'(I)| \geq |\chi(S) \setminus \chi(I)| + 2$.
     First observe that $\chi'(u) \in \chi'(S) \setminus \chi'(I)$ which means that the number of colors only appearing in $S$ already increases by one.

     For the second color, consider an arbitrary $x \in (D \cap S) \setminus N(u)$.
     Then $N(x) \cap (C \setminus \{u\}) \neq \emptyset$, and for every $y \in V(G)$ with $\chi'(x) = \chi'(y)$, it holds that $N(y) \cap (C \setminus \{u\}) \neq \emptyset$.
     Since $C \setminus \{u\} \subseteq I$, we conclude that $y \in S$ for every $y \in V(G)$ with $\chi'(x) = \chi'(y)$.
     So $\chi'(x) \in \chi'(S) \setminus \chi'(I)$.
     Also, if $\chi(x) \in \chi(S) \setminus \chi(I)$, then $\chi'(x),\chi'(w) \in \chi'(S) \setminus \chi'(I)$ and $\chi'(x) \neq \chi'(w)$, since $w \in N(u)$, but $x \notin N(u)$ and $u$ has been individualized.
     Overall, we obtain that $|\chi'(S) \setminus \chi'(I)| \geq |\chi(S) \setminus \chi(I)| + 2$.\qedhere
   \end{itemize}
 \end{itemize}
\end{proof}

Recall that we use the notation $\chi_{k,U}$ for $\WL{k}{G,\chi[U]}$, and we use $\chi_{k,u}$ for $\WL{k}{G,\chi[u]}$.

\begin{lemma}
 \label{la:vertex-cover-disjoint-colors}
 Let $(G,\chi)$ be a colored graph that is $1$-robust and suppose $|V(G)| > 1$.
 Also let $S \subseteq V(G)$ be a vertex cover of $G$ and define $I \coloneqq V(G) \setminus S$.
 Moreover, suppose that $\chi(S) \cap \chi(I) = \emptyset$.
 Then
 \begin{enumerate}[label=(\roman*)]
  \item there is a vertex $u \in V(G)$ such that $|\chi_{1,u}(S)| \geq |\chi(S)| + 2$, or
  \item there are vertices $u_1,u_2 \in V(G)$ such that $|\chi_{1,\{u_1,u_2\}}(S)| \geq |\chi(S)| + 3$.
 \end{enumerate}
\end{lemma}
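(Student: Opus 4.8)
The plan is to perform a dichotomy according to whether a single individualization already creates two new colors inside $S$. I would isolate a short list of structural features of $(G,\chi)$ whose presence lets one pick a single vertex $u$ and conclude alternative~(i) directly; when all of these features are absent, the graph is forced into a very rigid bipartite shape in which $S$ is a single color class, and there the second individualization is spent to obtain alternative~(ii). Throughout I would exploit that, since $(G,\chi)$ is $1$-robust, $1$-WL-stability makes the bipartite graph $G[C,D]$ biregular for any two color classes $C,D$, while flippedness forbids $G[C,D]$ from being complete bipartite and $G[C]$ from being complete; moreover $G[C,D]$ carries an edge exactly when $C,D$ are adjacent in $G[[\chi]]$. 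Hence, just as in Lemma~\ref{la:color-class-increase-general-robust}, individualizing $u\in C$ and running $1$-WL splits off the singleton $\{u\}$, splits $C\setminus\{u\}$ once more if $C$ has a self-loop in $G[[\chi]]$, and splits every color class adjacent to $C$ into a proper nonempty part and the rest.

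\textbf{The single-individualization case.} I would use the above to dispatch the following situations, each giving $\lvert\chi_{1,u}(S)\rvert\ge\lvert\chi(S)\rvert+2$, hence alternative~(i): (a) some class $C\subseteq S$ has a self-loop in $G[[\chi]]$ (individualize $u\in C$; the singleton $\{u\}$ and the extra split of $C\setminus\{u\}$ already give two new colors in $S$); (b) some class $C\subseteq S$ is adjacent in $G[[\chi]]$ to another class $D\subseteq S$ (individualize $u\in C$; $\{u\}$ and the split of $D$ are two new $S$-colors); (c) some class $E\subseteq I$ is adjacent to two distinct classes $C_1,C_2\subseteq S$ (individualize $u\in E$; both $C_1$ and $C_2$ split, giving two new $S$-colors; recall $I$ is independent because $S$ is a vertex cover).

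\textbf{The rigid bipartite case.} If none of (a)--(c) holds, then $G[S]$ is edgeless, so $G$ is bipartite with bipartition $(S,I)$; moreover every color class meeting $I$ has exactly one neighbor class, which is contained in $S$, so the quotient $G[[\chi]]$ is a disjoint union of stars, and since $G$ (hence $G[[\chi]]$) is connected, $S$ must consist of a single color class. Now I would push flippedness quantitatively: a class $E\subseteq I$ whose vertices each have $r$ neighbors in $S$ contributes $r\lvert E\rvert$ edges against at most $\tfrac12\lvert S\rvert\lvert E\rvert$ allowed, forcing $r\le\lvert S\rvert/2$; combined with connectivity (which prevents all $I$-vertices from being pendants, hence from all having $r=1$) this already forces $\lvert S\rvert\ge 4$, so that the target $\lvert\chi(S)\rvert+3=4$ is not out of reach. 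For alternative~(ii) I would individualize two vertices $u_1,u_2\in S$ (or, in degenerate sub-configurations, one vertex of $S$ and one vertex of an adjacent class $E\subseteq I$ with $u_1\notin N(u_2)$, which exists since $G[S,E]$ is not complete bipartite). Individualizing both already produces the three colors $\{u_1\}$, $\{u_2\}$, $S\setminus\{u_1,u_2\}$ inside $S$; the missing fourth one I would extract from the refinement of $S\setminus\{u_1,u_2\}$ via a double-counting argument on the triple intersections $\lvert N(v)\cap N(u_1)\cap N(u_2)\cap E\rvert$ over $v\in S$: were all of these constant, the resulting design-like identities would, together with the bound $r\le\lvert S\rvert/2$ and $1$-WL-stability, be contradictory, so some $v$ is singled out and $\lvert\chi_{1,\{u_1,u_2\}}(S)\rvert\ge\lvert\chi(S)\rvert+3$.

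\textbf{Main obstacle.} The delicate step is this last one: proving that in the rigid bipartite regime two individualizations deposit \emph{three} new colors inside $S$ rather than dissipating into $I$. This requires tracking the cascade of $1$-WL refinements triggered by the two individualized vertices and a careful case analysis on $\lvert S\rvert$, on the sizes of the $I$-classes, and on the biregularity parameters, with the small or degenerate configurations (an $S$-class barely above the minimum size, near-pendant $I$-classes, bipartite patterns forced to be complete) eliminated through the inequality $r\le\lvert S\rvert/2$. A useful sanity check is that a properly $2$-colored $C_6$, the obvious candidate obstruction, is \emph{not} flipped, hence not $1$-robust, so it does not threaten the statement.
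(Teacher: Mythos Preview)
Your cases (a)--(c) are correct and match the paper's first two cases (edge inside $S$; no edge inside $S$ but $|\chi(S)|\ge 2$). The gap is in the ``rigid bipartite'' case, where your plan to individualize $u_1,u_2\in S$ (or one in $S$ and one in $I$) does not work in general, and the design-counting argument you sketch cannot be completed.

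Here is a concrete counterexample. Take $S=[2k]$ for some $k\ge 2$, let $I=\binom{[2k]}{k}$, and join $s\in S$ to $T\in I$ iff $s\in T$; color $S$ and $I$ with two distinct colors. This graph is connected, $1$-WL-stable, and flipped (the $S$--$I$ density is exactly $\tfrac12$ and there are no edges inside $S$ or inside $I$), so it is $1$-robust with $\chi(S)\cap\chi(I)=\emptyset$ and $|\chi(S)|=1$. For any distinct $u_1,u_2\in S$, the pointwise stabilizer $\mathrm{Sym}([2k]\setminus\{u_1,u_2\})$ acts as automorphisms of $(G,\chi[u_1,u_2])$ and is transitive on $S\setminus\{u_1,u_2\}$; hence $|\chi_{1,\{u_1,u_2\}}(S)|=3$, not $4$. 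The mixed choice $u_1\in S$, $u_2\in I$ fails for the same reason: the stabilizer is $\mathrm{Sym}(u_2\setminus\{u_1\})\times\mathrm{Sym}(S\setminus(u_2\cup\{u_1\}))$ (or $\mathrm{Sym}(u_2)\times\mathrm{Sym}(S\setminus(u_2\cup\{u_1\}))$ if $u_1\notin u_2$), which again leaves only three colors on $S$. So your ``design-like identities would be contradictory'' claim is false here.

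The paper's fix is to individualize in $I$, not in $S$. Using that $|N(v)|\le\tfrac12|S|$ for all $v\in I$ (flippedness) and connectivity, one can find $u_1,u_2\in I$ with $N(u_1)\cap N(u_2)\neq\emptyset$, $N(u_1)\not\subseteq N(u_2)$, and $N(u_2)\not\subseteq N(u_1)$. Then $N(u_1)\cap N(u_2)$, $N(u_1)\setminus N(u_2)$, $N(u_2)\setminus N(u_1)$, and $S\setminus(N(u_1)\cup N(u_2))$ are four nonempty subsets of $S$ separated already by adjacency to $u_1$ and $u_2$, so $|\chi_{1,\{u_1,u_2\}}(S)|\ge 4=|\chi(S)|+3$ with no further refinement needed.
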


\begin{proof}
 We distinguish three cases.
 \begin{itemize}[leftmargin=3ex]
  \item First suppose there are $u,w \in S$ such that $uw \in E(G)$.
   We claim that $|\chi_{1,u}(S)| \geq |\chi(S)| + 2$.
   Let $c \coloneqq \chi(u)$, $d \coloneqq \chi(w)$ and define $C \coloneqq \chi^{-1}(c)$, $D \coloneqq \chi^{-1}(d)$.
   We have that $|C|,|D| \geq 2$ since G is $1$-robust (see Lemma \ref{la:robust-no-singleton}).

   If $C \neq D$, then $D \nsubseteq N(u)$ since $G$ is flipped.
   Hence, $|\chi_{1,u}(C)| \geq 2$ and $|\chi_{1,u}(D)| \geq 2$, which implies that $|\chi_{1,u}(S)| \geq |\chi(S)| + 2$.

   Otherwise, $C = D$.
   We have $C \cap N(u) \neq \emptyset$ and $C \setminus (N[u]) \neq \emptyset$.
   So $|\chi_{1,u}(C)| \geq 3$, which implies that $|\chi_{1,u}(S)| \geq |\chi(S)| + 2$.
  \item Next, suppose there are no $u,w \in S$ such that $uw \in E(G)$, and $|\chi(S)| \geq 2$.
   Then there is some $u \in I$ such that $|\chi(N(u))| \geq 2$, i.e., $u$ is adjacent to least two color classes within $S$ (observe that $N(u) \subseteq S$).
   Pick distinct $c,d \in \chi(N(u))$ and let $C \coloneqq \chi^{-1}(c)$, $D \coloneqq \chi^{-1}(d)$.
   Since $G$ is flipped, we conclude that $C \setminus N(u) \neq \emptyset$ and $D \setminus N(u) \neq \emptyset$.
   So $|\chi_{1,u}(C)| \geq 2$ and $|\chi_{1,u}(D)| \geq 2$, which implies that $|\chi_{1,u}(S)| \geq |\chi(S)| + 2$.
  \item Finally, suppose there are no $u,w \in S$ such that $uw \in E(G)$, and $|\chi(S)| = 1$.
   Since $G$ is flipped, we get that $|N(v)| \leq \frac{1}{2}|S|$ for all $v \in I$.
   Since $G$ is connected, there must be vertices $u_1,u_2 \in I$ such that $N(u_1) \cap N(u_2) \neq \emptyset$ and $N(u_1) \nsubseteq N(u_2)$ and $N(u_2) \nsubseteq N(u_1)$.
   Then $N(u_1) \cup N(u_2) \neq S$ using that $|N(u_i)| \leq \frac{1}{2}|S|$ for both $i \in \{1,2\}$.
   It follows that $|\chi_{1,\{u_1,u_2\}}(S)| \geq 4 = |\chi(S)| + 3$.\qedhere
 \end{itemize}
\end{proof}

With Lemmas \ref{la:vertex-cover-non-disjoint-colors} and \ref{la:vertex-cover-disjoint-colors} at our disposal, we can now bound the $1$-WL depth in terms of the vertex cover number.

\begin{theorem}
 \label{thm:wl-depth-vertex-cover}
  Let $(G,\chi)$ be a colored graph with vertex cover number $r$. Then
  \[\WLd{1}{G,\chi} \leq \frac{2}{3} \cdot r + 1.\]
\end{theorem}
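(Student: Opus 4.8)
The strategy is to repeat the well-founded-induction scheme of Lemma~\ref{la:wl-depth-from-color-class-increase}/Corollary~\ref{cor:wl-depth-robust}, but with a progress measure tailored to a \emph{fixed} vertex cover, as suggested before Lemma~\ref{la:vertex-cover-non-disjoint-colors}. So I fix a vertex cover $S\subseteq V(G)$ of $G$ — crucially \emph{not} required to be minimum, so that the measure behaves well under splitting into components — put $I\coloneqq V(G)\setminus S$, and set $p\coloneqq|\chi(S)|+|\chi(S)\setminus\chi(I)|$. I will prove the refined invariant
\[
 \WLd{1}{G,\chi}\ \leq\ \frac{2|S|-p}{3}+\frac{2}{3},
\]
with the improvement that the summand $\tfrac23$ may be dropped whenever $\chi(S)\cap\chi(I)=\emptyset$. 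Applying this to a \emph{minimum} vertex cover $S$ (so $|S|=r$) and using $p\geq 1$ for $S\neq\emptyset$ (the case $r=0$ being immediate, as then $G$ is edgeless) gives $\WLd{1}{G,\chi}\leq\tfrac{2r-1}{3}+\tfrac23=\tfrac{2r+1}{3}$, which implies Theorem~\ref{thm:wl-depth-vertex-cover}. The invariant is proven by induction on the tuple $(|V(G)|,\,2|S|-p,\,|\im(\chi)|,\,|E(G)|)$, ordered lexicographically with the third component compared in \emph{reverse} (more colors counts as smaller), exactly as in the proof of Lemma~\ref{la:wl-depth-from-color-class-increase}; the only reduction that changes $S$ is component splitting, and it strictly decreases $|V(G)|$, so within a fixed $|V(G)|$ the cover $S$ is fixed and the measure is well-founded.

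If $\chi$ is not stable with respect to $1$-WL, replace $\chi$ by $\chi^\ast(v)\coloneqq\WL{1}{G,\chi}(v)$: this does not increase $\WLd{1}{\cdot}$ by Lemma~\ref{la:wl-depth-bound}\ref{item:wl-depth-bound-1}, does not decrease $p$, and strictly increases $|\im(\chi)|$, so we may recurse. If $G$ is disconnected with components $A_1,\dots,A_\ell$, recurse on each $(G[A_i],\chi|_{A_i})$ with vertex cover $S\cap A_i$ (strictly smaller $|V|$) and combine via Lemma~\ref{la:wl-depth-bound}\ref{item:wl-depth-bound-3}. For this to be consistent with the claimed bound, I verify the elementary near-subadditivity $p(G,\chi,S)-p(G[A_i],\chi|_{A_i},S\cap A_i)\leq 2(|S|-|S\cap A_i|)$; it follows from $|\chi(S)\setminus\chi(S\cap A_i)|\leq\sum_{j\neq i}|S\cap A_j| = |S|-|S\cap A_i|$ together with the observation that any color lying in $\chi(S)\setminus\chi(I)$ and in $\chi(S\cap A_i)$ also lies in $\chi(S\cap A_i)\setminus\chi(I\cap A_i)$.

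It remains to treat the case that $G$ is connected and $\chi$ is stable. If $\chi(S)\cap\chi(I)=\emptyset$, then every color class is contained in $S$ or in $I$, and since $I$ is independent one checks that $\flip(G,\chi)$ again has $S$ as a vertex cover (the defining inequality of $f^*_{G,\chi}$ is never satisfied for a pair of classes inside $I$, which carry no edges); so if $(G,\chi)$ is not flipped we move to $\flip(G,\chi)$ (strictly fewer edges, same $\chi$, same regime) via Lemma~\ref{la:wl-depth-bound}\ref{item:wl-depth-bound-2} and recurse, and otherwise $(G,\chi)$ is $1$-robust and Lemma~\ref{la:vertex-cover-disjoint-colors} applies: either one individualization raises $|\chi(S)|$ — hence $p=2|\chi(S)|$ — by at least $2$, or two individualizations raise it by at least $3$, and since disjointness is preserved this is a gain of at least $3$ in $p$ per individualization, which exactly absorbs the $+1$ per individualization in Lemma~\ref{la:wl-depth-bound}\ref{item:wl-depth-bound-4}. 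If instead $\chi(S)\cap\chi(I)\neq\emptyset$, apply Lemma~\ref{la:vertex-cover-non-disjoint-colors}: if its first alternative holds, there is a complete bipartite graph (or a clique) between two color classes, whose removal is an \emph{edge-decreasing} flip — so $S$ survives as a vertex cover — and we recurse with fewer edges; otherwise one of options \ref{item:vc-non-disjoint-1}, \ref{item:vc-non-disjoint-2}, \ref{item:vc-non-disjoint-3} holds. Options \ref{item:vc-non-disjoint-1} and \ref{item:vc-non-disjoint-2} raise $p$ by at least $3$ with a single individualization, again absorbing the $+1$; option \ref{item:vc-non-disjoint-3} moves us into the regime $\chi'(S)\cap\chi'(I)=\emptyset$ with one individualization, and here the accounting is tight: the $+\tfrac23$ slack available in the non-disjoint regime — which exists precisely because $\chi(S)\cap\chi(I)\neq\emptyset$ forces $|\chi(S)\setminus\chi(I)|\leq|\chi(S)|-1$, hence $p\leq 2|\chi(S)|-1$ — is exactly what pays for this last transition (using only $|\chi'(S)|\geq|\chi(S)|$, which holds because $\chi'$ refines $\chi$).

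The part I expect to be most delicate is not any single lemma — Lemmas~\ref{la:vertex-cover-non-disjoint-colors} and~\ref{la:vertex-cover-disjoint-colors} do the structural work — but the choice of the progress measure and the verification that the constants align. Three points deserve care: one must use only edge-decreasing flips in the non-disjoint regime (arbitrary flips can raise the vertex cover number, breaking the induction) while still being free to use the full flip $\flip(\cdot)$ in the disjoint regime, where the argument above shows it is harmless; one must know that option~\ref{item:vc-non-disjoint-3} is invoked at most once along any root-to-leaf path, which is automatic since $\chi(S)\cap\chi(I)=\emptyset$ is preserved by all four IRC operations; and one must confirm the component-step near-subadditivity. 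Once these are in place, the proof is a routine case analysis assembling Lemma~\ref{la:wl-depth-bound} with the two main lemmas.
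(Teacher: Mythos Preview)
Your proposal is correct and follows essentially the same approach as the paper: the same progress measure $p=|\chi(S)|+|\chi(S)\setminus\chi(I)|$, the same well-founded induction, and the same deployment of Lemmas~\ref{la:vertex-cover-non-disjoint-colors} and~\ref{la:vertex-cover-disjoint-colors}, including the crucial restriction to edge-decreasing flips in the non-disjoint regime. The only notable difference is that you carry a slack of $\tfrac{2}{3}$ in the non-disjoint regime where the paper uses $\delta_{\chi,S}=1$; your sharper accounting for option~\ref{item:vc-non-disjoint-3} (observing that non-disjointness forces $p\leq 2|\chi(S)|-1$, so the jump to $p'=2|\chi'(S)|\geq 2|\chi(S)|$ already gains $1$) is correct and yields $\WLd{1}{G,\chi}\leq\tfrac{2r+1}{3}$, slightly better than the paper's $\tfrac{2}{3}r+1$.
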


\begin{proof}
 Let $(G,\chi)$ be a colored graph and let $S$ be a vertex cover of $G$.
 Also, define $I \coloneqq V(G) \setminus S$.
 We prove that
 \begin{equation}
  \label{eq:vertex-cover-bound}
  \WLd{1}{G,\chi} \leq \delta_{\chi,S} + \frac{1}{3}\Big(2 |S| - |\chi(S)| - |\chi(S) \setminus \chi(I)|\Big)
 \end{equation}
 where
 \[\delta_{\chi,S} \coloneqq \begin{cases}
                              1 &\text{if }\chi(S) \cap \chi(I) \neq \emptyset,\\
                              0 &\text{otherwise.}
                             \end{cases}
 \]
 First observe that this implies the desired bound since
 \[\delta_{\chi,S} + \frac{1}{3}\Big(2 |S| - |\chi(S)| - |\chi(S) \setminus \chi(I)|\Big) \leq 1 + \frac{2}{3}|S|\]
 and we can choose $S$ to be a vertex cover of minimal size.

 We prove \eqref{eq:vertex-cover-bound} by induction on the tuple $(|V(G)|,|\im(\chi)|,|E(G)|)$.
 Consider the set $M \coloneqq \{(n,\ell,m) \in \ZZ_{\geq 0}^3 \mid \ell \leq n\}$ and observe that $(|V(G)|,|\im(\chi)|,|E(G)|) \in M$.
 For the induction, we define a linear order $\prec$ on $M$ via $(n,\ell,m) \prec (n',\ell',m')$ if $n < n'$, or $n = n'$ and $\ell > \ell'$, or $n = n'$ and $\ell = \ell'$ and $m < m'$.
 Note that we use the inverse order on the second component, i.e., for $\ell \neq \ell'$, we have $(n,\ell,m) \prec (n,\ell',m')$ if $\ell > \ell'$.
 Still, since $\ell \leq n$ for every $(n,\ell,m) \in M$, there are no infinite decreasing chains in $M$.

 For the base case, suppose that $|V(G)| = 1$.
 Then $\WLd{1}{G,\chi} = 0 \leq \delta_{\chi,S} + \frac{1}{3}(2 |S| - |\chi(S)| - |\chi(S) \setminus \chi(I)|)$.

 So suppose $|V(G)| > 1$.
 We distinguish several cases.
 \begin{itemize}[leftmargin=3ex]
  \item First assume that $\chi$ is not stable with respect to $1$-WL and let $\chi' \coloneqq \WL{1}{G,\chi}$.
   Note that $|\im(\chi')| > |\im(\chi)|$ and thus, $(|V(G)|,|\im(\chi')|,|E(G)|) \prec (|V(G)|,|\im(\chi)|,|E(G)|)$.
   Hence, by the induction hypothesis, we get
   \[\WLd{1}{G,\chi'} \leq \delta_{\chi',S} + \frac{1}{3}\Big(2 |S| - |\chi'(S)| - |\chi'(S) \setminus \chi'(I)|\Big).\]
   Together with Lemma \ref{la:wl-depth-bound}\ref{item:wl-depth-bound-1}, we get that
   \begin{align*}
    \WLd{1}{G,\chi} \leq \WLd{1}{G,\chi'} &\leq \delta_{\chi',S} + \frac{1}{3}\Big(2 |S| - |\chi'(S)| - |\chi'(S) \setminus \chi'(I)|\Big)\\
                                          &\leq \delta_{\chi,S} + \frac{1}{3}\Big(2 |S| - |\chi(S)| - |\chi(S) \setminus \chi(I)|\Big)
   \end{align*}
   where the last inequality holds because $\chi' \preceq \chi$.
  \item Next assume that $G$ is not connected.
   Let $A_1,\dots,A_\ell$ denote the vertex sets of the connected components of $G$.
   For $i \in [\ell]$, let $(G_i,\chi_i) \coloneqq (G[A_i],\chi|_{A_i})$.
   Observe that $S_i \coloneqq S \cap A_i$ is a vertex cover of $G_i$ and define $I_i \coloneqq A_i \setminus S_i$.

   Note that $(|V(G_i)|,|\im(\chi_i)|,|E(G_i)|) \prec (|V(G)|,|\im(\chi)|,|E(G)|)$ for all $i \in [\ell]$ (because $|V(G_i)| < |V(G)|$).
   Hence, by the induction hypothesis, for every $i \in [\ell]$, we get
   \[\WLd{1}{G_i,\chi_i} \leq \delta_{\chi_i,S_i} + \frac{1}{3}\Big(2 |S_i| - |\chi_i(S_i)| - |\chi_i(S_i) \setminus \chi_i(I_i)|\Big) \eqqcolon d_i.\]
   For every $i \in [\ell]$, it holds that
   \begin{enumerate}
    \item $\delta_{\chi_i,S_i} \leq \delta_{\chi,S}$,
    \item $|S_i| - |\chi_i(S_i)| \leq |S| - |\chi(S)|$, and
    \item $|S_i| - |\chi_i(S_i)\setminus \chi_i(I_i)| \leq |S_i| - |\chi(S_i)\setminus \chi(I)| \leq |S| - |\chi(S)\setminus \chi(I)|$.
   \end{enumerate}
   Together, we get that
   \[d_i \leq \delta_{\chi,S} + \frac{1}{3}\Big(2 |S| - |\chi(S)| - |\chi(S) \setminus \chi(I)|\Big)\]
   for all $i \in [\ell]$.
   In combination with Lemma \ref{la:wl-depth-bound}\ref{item:wl-depth-bound-3} we conclude that
   \begin{align*}
    \WLd{1}{G,\chi} \leq \max_{i \in [\ell]}\WLd{k}{G_i,\chi_i} \leq \delta_{\chi,S} + \frac{1}{3}\Big(2 |S| - |\chi(S)| - |\chi(S) \setminus \chi(I)|\Big).
   \end{align*}
  \item Now suppose that $\chi$ is stable with respect to $1$-WL, $G$ is connected, and $\chi(S) \cap \chi(I) \neq \emptyset$.
   We consider the following subcases.
   \begin{itemize}[leftmargin=2.5ex]
    \item First suppose that there are $c,d \in C \coloneqq \im(\chi)$ such that $|\chi^{-1}(c) \cup \chi^{-1}(d)| \geq 2$ and $vw \in E(G)$ for all distinct $v \in \chi^{-1}(c), w \in \chi^{-1}(d)$.
     Consider the flip function $f\colon C \times C \to \{0,1\}$ defined via $f(c,d) = f(d,c) \coloneqq 1$ and $f(c',d') \coloneqq 0$ for all $c',d' \in C$ such that $\{c',d'\} \neq \{c,d\}$.
     Let $(G',\chi') \coloneqq \flipf_f(G,\chi)$.
     Note that $V(G) = V(G')$, $\chi' = \chi$, $E(G') \subsetneq E(G)$.
     So $(|V(G')|,|\im(\chi')|,|E(G')|) \prec (|V(G)|,|\im(\chi)|,|E(G)|)$ and $S$ is also a vertex cover of $G'$.
     Hence, by the induction hypothesis, we get
     \[\WLd{1}{G',\chi'} \leq \delta_{\chi',S} + \frac{1}{3}\Big(2 |S| - 2 |\chi'(S)| - |\chi'(S) \setminus \chi'(I)|\Big).\]
     Together with Lemma \ref{la:wl-depth-bound}\ref{item:wl-depth-bound-2} we get that
     \begin{align*}
      \WLd{1}{G,\chi} \leq \WLd{1}{G',\chi'} &\leq \delta_{\chi',S} + \frac{1}{3}\Big(2 |S| - |\chi'(S)| - |\chi'(S) \setminus \chi'(I)|\Big)\\
                                             &= \delta_{\chi,S} + \frac{1}{3}\Big(2 |S| - |\chi(S)| - |\chi(S) \setminus \chi(I)|\Big)
     \end{align*}
     where the last equality holds because $\chi' = \chi$.
    \item Otherwise, by Lemma \ref{la:vertex-cover-non-disjoint-colors}, there is a vertex $u \in V(G)$ such that, for $\chi' \coloneqq \WL{1}{G,\chi[u]}$, one of the Options \ref{item:vc-non-disjoint-1}-\ref{item:vc-non-disjoint-3} is satisfied.
     In all three cases, we get that $\chi' \preceq \chi$ and $|\im(\chi')| > |\im(\chi)|$.
     Thus $(|V(G)|,|\im(\chi')|,|E(G)|) \prec (|V(G)|,|\im(\chi)|,|E(G)|)$.
     Hence, by the induction hypothesis, we get
     \[\WLd{1}{G',\chi'} \leq \delta_{\chi',S} + \frac{1}{3}\Big(2 |S| - |\chi'(S)| - |\chi'(S) \setminus \chi'(I)|\Big).\]
     Using Lemma \ref{la:wl-depth-bound}, Items \ref{item:wl-depth-bound-1} and \ref{item:wl-depth-bound-4} we conclude that
     \[\WLd{1}{G,\chi} \leq 1 + \WLd{1}{G,\chi[u]} \leq 1 + \WLd{1}{G,\chi'}.\]
     In order to bound the term $1 + \WLd{1}{G,\chi'}$, we need to consider which of the Options \ref{item:vc-non-disjoint-1}-\ref{item:vc-non-disjoint-3} is satisfied.
     If Option \ref{item:vc-non-disjoint-1} is satisfied, then
     \begin{align*}
      1 + \WLd{1}{G,\chi'} &\leq 1 + \delta_{\chi',S} + \frac{1}{3}\Big(2 |S| - |\chi'(S)| - |\chi'(S) \setminus \chi'(I)|\Big)\\
                           &\leq 1 + \delta_{\chi,S} + \frac{1}{3}\Big(2 |S| - (|\chi(S)| + 2) - (|\chi(S) \setminus \chi(I)| + 1)\Big)\\
                           &= \delta_{\chi,S} + \frac{1}{3}\Big(2 |S| - |\chi(S)| - |\chi(S) \setminus \chi(I)|\Big)
     \end{align*}
     where the last inequality holds because $\chi' \preceq \chi$ and $|\chi'(S)| \geq |\chi(S)| + 2$ and $|\chi'(S) \setminus \chi'(I)| \geq |\chi(S) \setminus \chi(I)| + 1$.
     If Option \ref{item:vc-non-disjoint-2} is satisfied, then
     \begin{align*}
      1 + \WLd{1}{G,\chi'} &\leq 1 + \delta_{\chi',S} + \frac{1}{3}\Big(2 |S| - |\chi'(S)| - |\chi'(S) \setminus \chi'(I)|\Big)\\
                           &\leq 1 + \delta_{\chi,S} + \frac{1}{3}\Big(2 |S| - (|\chi(S)| + 1) - (|\chi(S) \setminus \chi(I)| + 2)\Big)\\
                           &= \delta_{\chi,S} + \frac{1}{3}\Big(2 |S| - |\chi(S)| - |\chi(S) \setminus \chi(I)|\Big)
     \end{align*}
     where the last inequality holds because $\chi' \preceq \chi$, $|\chi'(S)| \geq |\chi(S)| + 1$ and $|\chi'(S) \setminus \chi'(I)| \geq |\chi(S) \setminus \chi(I)| + 2$.
     Finally, if Option \ref{item:vc-non-disjoint-3} is satisfied, then $\delta_{\chi',S} = 0$.
     Also $\delta_{\chi,S} = 1$ by assumption.
     So together
     \begin{align*}
      1 + \WLd{1}{G,\chi'} &\leq 1 + \delta_{\chi',S} + \frac{1}{3}\Big(2 |S| - |\chi'(S)| - |\chi'(S) \setminus \chi'(I)|\Big)\\
                           &= \delta_{\chi,S} + \frac{1}{3}\Big(2 |S| - |\chi'(S)| - |\chi'(S) \setminus \chi'(I)|\Big)\\
                           &\leq \delta_{\chi,S} + \frac{1}{3}\Big(2 |S| - |\chi(S)| - |\chi(S) \setminus \chi(I)|\Big)
     \end{align*}
     where the last equality holds because $\chi' \preceq \chi$.

     In all three cases, we obtain
     \[\WLd{1}{G,\chi} \leq \delta_{\chi,S} + \frac{1}{3}\Big(2 |S| - |\chi(S)| - |\chi(S) \setminus \chi(I)|\Big)\]
     as desired.
    \end{itemize}
  \item Finally, suppose that $\chi$ is stable with respect to $1$-WL, $G$ is connected, and $\chi(S) \cap \chi(I) = \emptyset$.
   We again need to distinguish some subcases.
   \begin{itemize}[leftmargin=2.5ex]
    \item First suppose that $(G,\chi)$ is not $1$-robust, i.e., $(G,\chi)$ is not flipped.
     Let $(G',\chi') \coloneqq \flip(G,\chi)$.
     Note that $V(G) = V(G')$, $\chi' = \chi$, $|E(G')| < |E(G)|$.
     So we conclude that $(|V(G')|,|\im(\chi')|,|E(G')|) \prec (|V(G)|,|\im(\chi)|,|E(G)|)$.
     Also, since $\chi(S) \cap \chi(I) = \emptyset$, no edges between vertices in $I$ are added in the flip.
     So $S$ is still a vertex cover of $G'$.
     Hence, by the induction hypothesis, we get
     \[\WLd{1}{G',\chi'} \leq \delta_{\chi',S} + \frac{1}{3}\Big(2 |S| - |\chi'(S)| - |\chi'(S) \setminus \chi'(I)|\Big).\]
     Together with Lemma \ref{la:wl-depth-bound}\ref{item:wl-depth-bound-2}, we get that
     \begin{align*}
      \WLd{1}{G,\chi} \leq \WLd{1}{G',\chi'} &\leq \delta_{\chi',S} + \frac{1}{3}\Big(2 |S| - |\chi'(S)| - |\chi'(S) \setminus \chi'(I)|\Big)\\
                                             &= \delta_{\chi,S} + \frac{1}{3}\Big(2 |S| - |\chi(S)| - |\chi(S) \setminus \chi(I)|\Big)
     \end{align*}
     where the last equality holds because $\chi' = \chi$.
    \item Next, suppose there $u \in V(G)$ such that $|\chi'(S)| \geq |\chi(S)| + 2$ where $\chi' \coloneqq \WL{1}{G,\chi[u]}$.
     Since $\chi(S) \cap \chi(I) = \emptyset$ by assumption, we also get $|\chi'(S) \setminus \chi'(I)| \geq |\chi(S) \setminus \chi(I)| + 2 \geq |\chi(S) \setminus \chi(I)| + 1$.
     So this case is identical to the corresponding subcase above.
    \item Otherwise, by Lemma \ref{la:vertex-cover-disjoint-colors}, there are $u_1,u_2 \in V(G)$ such that $|\chi'(S)| \geq |\chi(S)| + 3$ where $\chi' \coloneqq \WL{1}{G,\chi[u_1,u_2]}$.
     We get that $\chi' \preceq \chi$ and $|\im(\chi')| > |\im(\chi)|$.
     So we get $(|V(G)|,|\im(\chi')|,|E(G)|) \prec (|V(G)|,|\im(\chi)|,|E(G)|)$.
     Hence, by the induction hypothesis, we get
     \[\WLd{1}{G,\chi'} \leq \delta_{\chi',S} + \frac{1}{3}\Big(2 |S| - |\chi'(S)| - |\chi'(S) \setminus \chi'(I)|\Big).\]
     Applying Lemma \ref{la:wl-depth-bound}\ref{item:wl-depth-bound-4} twice and after that Lemma \ref{la:wl-depth-bound}\ref{item:wl-depth-bound-1} we obtain
     \begin{equation*}
      \WLd{1}{G,\chi} \leq 2 + \WLd{1}{G,\chi'}.
     \end{equation*}
     Note that $\chi(S) \cap \chi(I) = \emptyset$ which means that $\delta_{\chi,S} = \delta_{\chi',S} = 0$.
     It also implies that $\chi(S) \setminus \chi(I) = \chi(S)$ and $\chi'(S) \setminus \chi'(I) = \chi'(S)$.
     Altogether, we have
     \begin{align*}
      \WLd{1}{G,\chi} &\leq 2 + \delta_{\chi',S} + \frac{1}{3}\Big(2 |S| - |\chi'(S)| - |\chi'(S) \setminus \chi'(I)|\Big)\\
                      &\leq 2 + \delta_{\chi,S} + \frac{1}{3}\Big(2 |S| - (|\chi(S)| + 3) - (|\chi(S) \setminus \chi(I)| + 3)\Big)\\
                      &= 2 + \delta_{\chi,S} + \frac{1}{3}\Big(2 |S| - |\chi(S)| - |\chi(S) \setminus \chi(I)| - 6\Big)\\
                      &\leq \delta_{\chi,S} + \frac{1}{3}\Big(2 |S| - |\chi(S)| - |\chi(S) \setminus \chi(I)|\Big).\qedhere
     \end{align*}
   \end{itemize}
 \end{itemize}
\end{proof}

\begin{corollary}
 Let $(G,\chi)$ be a colored graph with vertex cover number $r$. Then $(G,\chi)$ has WL dimension at most $\frac{2}{3} \cdot r + 3$.
\end{corollary}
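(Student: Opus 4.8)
The plan is to derive the statement as an immediate consequence of the two facts established earlier: the bound on the $1$-WL depth in terms of the vertex cover number (Theorem~\ref{thm:wl-depth-vertex-cover}) and the translation from WL depth to WL dimension (Lemma~\ref{la:wl-depth-to-dimension}). Concretely, I would first invoke Theorem~\ref{thm:wl-depth-vertex-cover} to obtain $\WLd{1}{G,\chi} \leq \tfrac{2}{3}r + 1$, where $r$ is the vertex cover number of $G$ (this is legitimate since the theorem is stated for arbitrary colored graphs, and $r$ is attained by a minimum vertex cover).

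Next, since the $k$-WL depth is by definition a non-negative integer (the minimum individualization depth over all $k$-IRC trees), I would set $d \coloneqq \WLd{1}{G,\chi} \in \mathbb{N}$, so that $d \leq \tfrac{2}{3}r + 1$, and apply Lemma~\ref{la:wl-depth-to-dimension} with $k = 1$ and this value of $d$. This gives that $(\max\{2,1\}+d)$-WL, that is $(2+d)$-WL, identifies $(G,\chi)$, hence the WL dimension of $(G,\chi)$ is at most $2 + d \leq 2 + \tfrac{2}{3}r + 1 = \tfrac{2}{3}r + 3$, as claimed.

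I do not anticipate any real obstacle here: this is a corollary in the strict sense, and the only points requiring a moment of care are (i) checking that Theorem~\ref{thm:wl-depth-vertex-cover} is applied with a minimum vertex cover so that the bound is in terms of the vertex cover number $r$, and (ii) noting that $\WLd{1}{G,\chi}$ is a natural number so that Lemma~\ref{la:wl-depth-to-dimension}, which requires $d \in \mathbb{N}$, is applicable without rounding issues. No case analysis, induction, or combinatorial argument is needed beyond what has already been done in Sections~\ref{sec:wl-depth} and~\ref{sec:vc}.
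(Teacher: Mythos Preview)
Your proposal is correct and matches the paper's own proof, which simply states that the corollary follows from Lemma~\ref{la:wl-depth-to-dimension} and Theorem~\ref{thm:wl-depth-vertex-cover}. The only (harmless) redundancy is your remark about applying Theorem~\ref{thm:wl-depth-vertex-cover} ``with a minimum vertex cover'': the theorem is already phrased in terms of the vertex cover number $r$, so no extra care is needed there.
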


\begin{proof}
 This follows from Lemma \ref{la:wl-depth-to-dimension} and Theorem \ref{thm:wl-depth-vertex-cover}.
\end{proof}

\section{Advanced Bounds Using 2-WL}\label{sec:advanced}

In this section, we improve on the bound on the WL depth of $n$-vertex graphs stated in Corollary \ref{cor:wl-depth-robust} and prove Theorem \ref{thm:main}.
Towards this end, we need to rely on $2$-WL.
More precisely, we analyze the $2$-WL depth of $n$-vertex graphs.
The following theorem is the main result of this section.

\begin{theorem}
 \label{thm:wl-depth-general}
 Let $(G,\chi)$ be a colored graph and let $n \coloneqq |V(G)|$. Then $\WLd{2}{G,\chi} \leq \frac{n}{4} + o(n)$.
\end{theorem}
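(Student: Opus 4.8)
The plan is to reuse the inductive scaffolding of the proof of Theorem~\ref{thm:wl-depth-vertex-cover}, but now with the number of colours $|\im(\chi)|$ as the progress measure and a target gain of $4$ new colours per individualization instead of $3$. Concretely, I would prove a bound of roughly the form $\WLd{2}{G,\chi} \le \frac{|V(G)|}{4} + g(|V(G)|)$ for a fixed non-decreasing $g$ with $g(n) = o(n)$, by induction on the triple $(|V(G)|,|\im(\chi)|,|E(G)|)$ ordered by the same relation $\prec$ used in Lemma~\ref{la:wl-depth-from-color-class-increase} and Theorem~\ref{thm:wl-depth-vertex-cover}. The cases in which $(G,\chi)$ is not $2$-robust are dispatched exactly as before via Lemma~\ref{la:wl-depth-bound}: refine if $\chi$ is unstable (Item~\ref{item:wl-depth-bound-1}), flip if $(G,\chi)$ is not flipped (Item~\ref{item:wl-depth-bound-2}), and split into connected components if $G$ is disconnected (Item~\ref{item:wl-depth-bound-3}), using superadditivity of $|V(G)|-|\im(\chi)|$ over components. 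I would add one further reduction: if $G$ has a non-trivial twin partition $\pi$, then $\WLd{2}{G,\chi} \le \WLd{2}{G/\pi,\chi/\pi}$ by Lemma~\ref{la:remove-twins}, and since $|V(G/\pi)| < |V(G)|$ the induction hypothesis handles $(G/\pi,\chi/\pi)$ (in a stable colouring twins share a colour, which makes the bookkeeping routine). After these reductions the entire theorem reduces to a colour-increase statement for twin-free $2$-robust graphs.

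That core statement would be, roughly: for every twin-free $2$-robust colored graph $(G,\chi)$ with $n>1$ vertices, there is a set $U\subseteq V(G)$ with $|\im(\chi_{2,U})| \ge |\im(\chi)| + 4|U| - r(n)$ for some fixed $r(n)=o(n)$; equivalently, one may individualize an $o(n)$-size ``core'' of vertices outright and gain a factor $4$ on the rest. Granting this, the computation in the last case of the induction mirrors that of Lemma~\ref{la:wl-depth-from-color-class-increase}: individualize $U$ (apply Lemma~\ref{la:wl-depth-bound}\ref{item:wl-depth-bound-4} $|U|$ times, then Lemma~\ref{la:wl-depth-bound}\ref{item:wl-depth-bound-1}), observe that the colour count jumps by at least $4|U| - r(n)$, and close the recursion, sweeping the accumulated $r(n)/4$ losses along any root-to-leaf path into $g$. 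Theorem~\ref{thm:wl-depth-general} then follows, and together with Lemma~\ref{la:wl-depth-to-dimension} it yields Theorem~\ref{thm:main}.

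For the core statement I would exploit that $2$-WL-stability of $\chi$ makes the colour classes together with the $2$-WL colours of pairs a coherent configuration: between any two colour classes $C,D$ the constituent graph $G[C,D]$ is biregular and inside a class it is regular, twin-freeness forbids the degenerate biregular patterns (empty graphs, complete bipartite graphs, unions of stars centred on one side, and the analogous intra-class extremes), and flippedness forbids the dense extremes. The factor $4$ should be assembled from one carefully chosen individualization $u$ in a class $C$ by accounting four disjoint refinements: (i)~$C$ splits off the singleton $\{u\}$; (ii)~some neighbouring class $D$ splits, since $G[C,D]$ is biregular but neither empty, complete, nor a union of stars rooted in $C$; (iii)~a further split is harvested from the pair-colour information unavailable to $1$-WL (common-neighbour counts along a second constituent, or distance-$2$ relations with respect to $u$), which cannot be trivial everywhere in a twin-free configuration; and (iv)~one more split, either from $C$ again (regular but neither complete nor empty) or from one additional $2$-WL round. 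The configurations in which fewer than four new classes can be produced this way would be classified, and each one shown to be either of size $o(n)$ (then individualized entirely), or reducible by a flip or component split after one refinement round (contradicting $2$-robustness at the current node), or twin-generating after one refinement (then handled by the twin-contraction case); the requirement that these exceptional pieces have total size $o(n)$ is the source of the error term.

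The main obstacle is precisely this structural analysis of twin-free $2$-robust coherent configurations: pinning down exactly when a single individualization fails to create four new colour classes, and verifying that every such configuration is small, reducible, or twin-generating after refinement. Extracting the \emph{fourth} new colour is the delicate point — the first two are essentially forced by flippedness and twin-freeness and the third by the expressive power of $2$-WL, but the fourth requires a genuine accounting of how the constituents of the configuration overlap — and it is also what degrades a clean $\frac{n-1}{4}$ to $\frac{n}{4}+o(n)$. It is plausible that the right bookkeeping uses a two-component potential in the spirit of the proof of Theorem~\ref{thm:wl-depth-vertex-cover} (tracking, say, colours together with colours whose classes meet a distinguished part), in which case the case analysis would be organized around that potential rather than around a bare colour count.
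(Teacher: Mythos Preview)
Your inductive scaffolding, the reduction cases (refine, flip, split into components, contract twins via Lemma~\ref{la:remove-twins}), and the formulation of the core problem as a colour-increase statement for twin-free $2$-robust graphs all match the paper exactly. The paper even proves precisely the bound $\WLd{2}{G,\chi} \le \frac{|V(G)|-|\im(\chi)|}{4} + f(|V(G)|)$ with $f(n) = O(n^{3/4}\log n)$ that your outline suggests.

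The genuine gap is in your treatment of the core statement. Your proposed trichotomy for the ``exceptional'' configurations --- those where one individualization fails to produce four new colours --- is that each such piece is either of size $o(n)$, or reducible by a flip/split after refinement, or twin-generating after refinement. This trichotomy is false. A primitive strongly regular graph on $n$ vertices (say a Paley graph) is twin-free and $2$-robust with $|\im(\chi)|=1$; individualizing any vertex yields exactly three colours, not four; and the graph is neither small, nor reducible (it is already $2$-robust), nor does it acquire twins after refinement. More generally, the configurations with $\xi_{G,\chi}(c) \le 2$ for every colour $c$ (equivalently, $G[[\chi]]$ has maximum degree~$2$ and every pair of colour classes supports at most two $2$-WL pair-colours) can occupy the entire vertex set, and no local argument of the kind you sketch in (iii)--(iv) extracts a fourth colour from a single individualization there.

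What the paper does instead in this regime is qualitatively different: it does not try to gain four colours per individualization, but shows that such a constrained configuration admits a set $U$ of size $O(n^{3/4}\log n)$ for which $\chi_{2,U}$ is \emph{discrete}, terminating the recursion outright. This relies on nontrivial external tools you do not invoke: Babai's bound for uniprimitive coherent configurations (Theorem~\ref{thm:uniprimitive-individualization}), the distinguishing-set lemma (Lemma~\ref{la:distinguishing-set-from-fractional-cover}), and the Fisher-type hypergraph inequality (Lemma~\ref{la:hypergraph-bounds}). The case analysis (Lemmas~\ref{la:split-with-4-vertex-colors}, \ref{la:split-with-2-or-3-vertex-colors}, \ref{la:split-with-1-vertex-color}) is organized by $|\im(\chi)|$ and repeatedly reduces to one of these tools; a residual sporadic family ($L_{2,m}$) is handled by hand. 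Without these ingredients your plan does not close, and the ``fourth colour'' you are trying to squeeze out simply does not exist in the hardest cases.
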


Together with Lemma \ref{la:wl-depth-to-dimension}, this immediately gives the following corollary.

\begin{corollary}
 Let $(G,\chi)$ be a colored graph and let $n \coloneqq |V(G)|$. Then the WL dimension of $(G,\chi)$ is at most $\frac{n}{4} + o(n)$.
\end{corollary}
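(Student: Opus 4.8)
The plan is to reduce, exactly as in the proof of Lemma~\ref{la:wl-depth-from-color-class-increase}, to a single local statement about $2$-robust graphs, and then to prove that statement by a case analysis on the coherent structure that a stable $2$-WL coloring imposes. More precisely, I would first record a mild strengthening of Lemma~\ref{la:wl-depth-from-color-class-increase} in which the hypothesis is weakened in two ways: the individualized sets $U$ may have size up to $o(n)$ (so that individualizations come in large batches), and it suffices to either find such a $U$ with $|\im(\chi_{2,U})| \geq |\im(\chi)| + 4|U|$ \emph{or} to bound $\WLd{2}{G,\chi}$ directly by $o(n)$; the reduction steps (refine by $2$-WL, move to the flip, split into connected components) are verbatim as before, using Lemma~\ref{la:wl-depth-bound}. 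Since $2 \geq 2$, Lemma~\ref{la:remove-twins} lets us additionally assume $(G,\chi)$ is \emph{twin-free}. The whole of Theorem~\ref{thm:wl-depth-general} thus reduces to: for every $2$-robust, twin-free $(G,\chi)$ with $n \coloneqq |V(G)| > 1$, either $\WLd{2}{G,\chi} = o(n)$, or there is a set $U$ of bounded size with $|\im(\chi_{2,U})| \geq |\im(\chi)| + 4|U|$. (The $o(n)$ additive slack in the final bound absorbs the sum of all the ``direct'' contributions arising along a root-to-leaf path.)

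For the core statement, the key object is the coherent configuration $\mathfrak{X}$ on $V(G)^2$ whose cell coloring is the stable coloring $\WL{2}{G,\chi}$: its fibers are precisely the color classes of $\chi$, all of size at least $2$ by Lemma~\ref{la:robust-no-singleton}, and no two vertices in $\mathfrak{X}$ are twins. The basic ``split lemma'' I would establish is: individualizing one vertex $u$ in a color class $C$ and running $2$-WL produces the new color $\{u\}$, refines $C$ along the distance-type partition of the homogeneous sub-scheme of $\mathfrak{X}$ on $C$, and refines every color class $C'$ with $E_G(C,C') \neq \emptyset$ into its (nonempty) $u$-neighborhood and $u$-non-neighborhood parts. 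Here flippedness is exactly what forces every bipartite graph $G[C,C']$ (and the internal graph $G[C]$) to be biregular but neither empty nor complete, so each of these splits is proper; twin-freeness additionally rules out the degenerate sub-schemes on $C$ (disjoint unions of cliques or co-cliques) that would otherwise give no internal split. Summing these contributions, one individualization gains $4$ colors whenever, e.g., $C$ has an internal split and at least two adjacent classes, or no internal split but at least three adjacent classes, and so on.

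This leads to a dichotomy. In the \emph{primitive regime} — $\mathfrak{X}$ homogeneous and primitive, in particular when $\chi$ is constant and $G$ is a connected non-complete strongly-regular-type graph — classical individualization bounds for primitive coherent configurations (obtained by individualizing a single vertex and iterating) show that $O(\sqrt{n}\log n)$ individualizations make the configuration discrete, hence $\WLd{2}{G,\chi} = o(n)$, and we are done. In the \emph{structured regime} — $\mathfrak{X}$ has several fibers or a nontrivial system of imprimitivity — one picks a suitable color class $C$ and individualizes a constant number of carefully chosen vertices of $C$; the split lemma, applied jointly to these vertices and to all classes reachable from $C$ in $G[[\chi]]$, is shown to gain at least $4$ new colors per individualized vertex.

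The main obstacle, and the bulk of the technical work, is the boundary between the two regimes: color classes $C$ whose internal scheme is strongly regular but which are attached to too few other classes (a single individualization then gains only $3$), together with similar ``locally generic but globally thin'' configurations. For these I would individualize \emph{two} vertices $u_1,u_2 \in C$ at once and argue that the four ``distance classes'' $\{v \in C : v\in N_G(u_1)\cap N_G(u_2)\}$, $\{v \in C : v\in N_G(u_1)\setminus N_G(u_2)\}$, and so on, are all nonempty — which follows from the coherence (intersection-number) equations of $\mathfrak{X}$ except in a bounded list of exceptional parameter ranges — so that $C$ splits into enough parts and, together with the refinement of the attached classes, two individualizations produce at least $8$ new colors, i.e.\ rate $4$. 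The remaining exceptional configurations, where these distance classes can collapse, must be shown to be excluded by twin-freeness, to fall back into the primitive regime (hence $o(n)$), or to be amenable to an ad hoc individualization of $o(n)$ vertices. Producing a clean, exhaustive enumeration of these exceptional cases and verifying the nonemptiness claims from the structure constants of the coherent configuration is where I expect essentially all of the difficulty to lie.
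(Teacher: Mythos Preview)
Your proposal is correct and follows essentially the same architecture as the paper: the Corollary itself is immediate from Theorem~\ref{thm:wl-depth-general} via Lemma~\ref{la:wl-depth-to-dimension}, and your sketch of Theorem~\ref{thm:wl-depth-general} matches the paper's --- reduce to $2$-robust twin-free graphs (Lemmas~\ref{la:wl-depth-bound} and~\ref{la:remove-twins}), then either find a bounded $U$ with $|\im(\chi_{2,U})| \geq |\im(\chi)| + 4|U|$ (your split lemma is Lemma~\ref{la:split-many-pair-colors}) or individualize $o(n)$ vertices to reach a discrete coloring via Babai-type bounds (Theorem~\ref{thm:uniprimitive-individualization} and its bipartite analogues, Lemmas~\ref{la:bipartite-sr-individualization} and~\ref{la:bipartite-individualization-weak}). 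The one place your outline diverges is the organization of the hard case: the paper splits on $|\im(\chi)|$ (Lemmas~\ref{la:split-with-4-vertex-colors}, \ref{la:split-with-2-or-3-vertex-colors}, \ref{la:split-with-1-vertex-color}) rather than on primitivity, and in particular the multi-fiber case with $\xi_{G,\chi}(c) \leq 2$ everywhere is handled entirely by the Babai route (Lemma~\ref{la:split-with-4-vertex-colors}) rather than by a rate-$4$ argument --- so your ``boundary'' is in fact the whole $\xi \leq 2$ regime, not a short list of exceptions --- but this does not affect correctness since your framework already allows the $o(n)$-direct fallback.
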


In particular, we obtain Theorem \ref{thm:main}.
Also, together with Theorem \ref{thm:eq-wl-ck}, we obtain Corollary \ref{cor:main}.

On a high level, the proof of Theorem \ref{thm:wl-depth-general}, which covers the rest of this section, again follows a similar strategy as the one for Corollary \ref{cor:wl-depth-robust}.
However, to obtain the improved bound, the details become significantly more intricate and we have to distinguish several cases.

Let $(G,\chi)$ be a vertex-colored graph and suppose that $\chi$ is stable with respect to $2$-WL.
We define
\[\mu_{G,\chi}(c_1,c_2) \coloneqq |\{\WL{2}{G,\chi}(v_1,v_2) \mid v_1 \in \chi^{-1}(c_1),v_2 \in \chi^{-1}(c_2), v_1 \neq v_2\}|\]
for all $c_1,c_2 \in \im(\chi)$.
Also, similarly as in Lemma \ref{la:color-class-increase-general-robust}, as a lower bound on the progress, i.e., how many new colors we get from a $2$-WL refinement of a coloring after a vertex individualization, we set
\[\xi_{G,\chi}(c_1) \coloneqq \sum_{c_2 \in \im(\chi)}(\mu_{G,\chi}(c_1,c_2) - 1).\]
Note that $\xi_{G,\chi}(c_1) \geq \deg_{G[[\chi]]}(c_1)$ for all colors $c_1 \in \im(\chi)$.

As before, we can restrict our attention to $2$-robust graphs $(G,\chi)$ for which $|V(G)| > 1$.
Towards this end, a vertex-colored graph $(G,\chi)$ is called \emph{nice} if it is connected, flipped, $|V(G)| > 1$, and $\chi$ is stable with respect to $2$-WL.

The next lemma, which is similar in nature to Lemma \ref{la:color-class-increase-general-robust}, forms the starting point of our analysis.

\begin{lemma}
 \label{la:split-many-pair-colors}
 Let $(G,\chi)$ be a nice graph.
 Let $c \in \im(\chi)$ and set $\xi \coloneqq \xi_{G,\chi}(c) + 1$.
 Then $|\im(\chi_{2,u})| \geq |\im(\chi)| + \xi$ for every $u \in \chi^{-1}(c)$.
\end{lemma}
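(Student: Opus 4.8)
The plan is to follow the template of Lemma~\ref{la:color-class-increase-general-robust}, replacing the adjacency-based reasoning there by one that harvests the finer information contained in the stable $2$-WL pair colouring. Abbreviate $\lambda \coloneqq \WL{2}{G,\chi}$ and recall $\chi_{2,u}(v) = \WL{2}{G,\chi[u]}(v,v)$. The crucial ingredient will be the \emph{separation claim}: for all $v_1,v_2 \in V(G)$, if $\lambda(u,v_1) \neq \lambda(u,v_2)$ then $\chi_{2,u}(v_1) \neq \chi_{2,u}(v_2)$. To prove it I would combine two standard facts. First, $\chi[u] \preceq \chi$ as vertex colourings, so by the usual monotonicity of $2$-WL (refining the input colouring refines the output) the pair colouring $\WL{2}{G,\chi[u]}$ refines $\lambda$; hence $\lambda(u,v_1) \neq \lambda(u,v_2)$ already forces $\WL{2}{G,\chi[u]}(u,v_1) \neq \WL{2}{G,\chi[u]}(u,v_2)$. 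Second, since $u$ is the unique vertex carrying the individualization colour, in $(G,\chi[u])$ the pair $(u,w)$ is recognisable already from the isomorphism type of its ordered underlying subgraph; consequently the refinement multiset of the diagonal pair $(v,v)$ contains the $2$-WL colour of $(u,v)$ as a distinguished entry, so $\chi_{2,u}(v)$ determines $\WL{2}{G,\chi[u]}(u,v)$. Chaining the two gives the separation claim, and the same reasoning shows that $\chi_{2,u}(u)$ is a colour attained by no other vertex and that $\chi_{2,u} \preceq \chi$, so $\chi_{2,u}$-classes coming from distinct $\chi$-colours never merge.

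With the separation claim in hand, I would pass to a colour-class count, treating each $\chi$-colour class separately and summing. Fix $c_2 \in \im(\chi)$. By the separation claim, the number of $\chi_{2,u}$-colours occurring on $\chi^{-1}(c_2) \setminus \{u\}$ is at least the number of distinct values $\lambda(u,w)$ with $w \in \chi^{-1}(c_2)$, $w \neq u$. Here I would exploit that $\lambda$ is stable: for every $x \in \chi^{-1}(c)$ the multiset $\{\!\{\lambda(x,w) \mid w \in \chi^{-1}(c_2)\}\!\}$ depends only on $\lambda(x,x)$, hence only on $c$, and $\lambda$ records the $\chi$-colour of each coordinate. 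Since these multisets coincide for all $x \in \chi^{-1}(c)$, while the union of their supports over $x \in \chi^{-1}(c)$ (after deleting the common diagonal entry $\lambda(x,x)$ in the case $c_2 = c$) is by definition a set of exactly $\mu_{G,\chi}(c,c_2)$ colours, each individual support already has $\mu_{G,\chi}(c,c_2)$ elements. Therefore $\chi^{-1}(c_2)$ contributes at least $\mu_{G,\chi}(c,c_2)$ classes to $\chi_{2,u}$ when $c_2 \neq c$, and $\chi^{-1}(c)$ contributes at least $1 + \mu_{G,\chi}(c,c)$ classes, the extra $1$ being the singleton class of $u$. Lemma~\ref{la:robust-no-singleton}, applicable since $(G,\chi)$ is nice and $|V(G)| > 1$, gives $\mu_{G,\chi}(c,c) \geq 1$, so all these estimates are consistent.

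Summing over all colours, and using $\chi_{2,u} \preceq \chi$, then yields
\begin{align*}
 |\im(\chi_{2,u})| &\;\geq\; \bigl(1+\mu_{G,\chi}(c,c)\bigr)+\sum_{c_2\in\im(\chi)\setminus\{c\}}\mu_{G,\chi}(c,c_2)
 \;=\; 1+\sum_{c_2\in\im(\chi)}\mu_{G,\chi}(c,c_2),\\
 |\im(\chi)| &\;=\; \sum_{c_2\in\im(\chi)}1,
\end{align*}
so that $|\im(\chi_{2,u})| - |\im(\chi)| \geq 1 + \sum_{c_2\in\im(\chi)}\bigl(\mu_{G,\chi}(c,c_2)-1\bigr) = \xi_{G,\chi}(c) + 1 = \xi$, which is exactly the claimed inequality. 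I expect the main obstacle to be the middle paragraph: one has to verify with care that the restricted out-colour multisets are genuinely identical across an entire $\chi$-colour class — a consequence of $2$-WL stability, but one worth spelling out — and that amalgamating their supports over the class produces precisely $\mu_{G,\chi}(c,c_2)$ colours, neither fewer (the supports are all equal) nor more (their union is that set by definition), so that no over- or under-counting creeps in. Everything else is routine bookkeeping.
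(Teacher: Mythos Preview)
Your proposal is correct and follows essentially the same approach as the paper's proof. The paper's argument is a two-line computation that uses $\chi_{2,u} \preceq \chi$ to write $|\im(\chi_{2,u})| = \sum_{c'} |\chi_{2,u}(\chi^{-1}(c'))|$ and then bounds each summand by $\mu_{G,\chi}(c,c')$ (with an extra $+1$ for the class of $u$), citing only ``$\chi$ being stable with respect to $2$-WL'' for the inequality; your separation claim and equal-supports argument are precisely the details that justify this step, and your final summation is identical to theirs.
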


\begin{proof}
 Clearly, $\chi_{2,u} \preceq \chi$.
 So
 \begin{align*}
  |\im(\chi_{2,u})| = \sum_{c' \in \im(\chi)} |\chi_{2,u}(\chi^{-1}(c'))| \geq 1 + \sum_{c' \in \im(\chi)} \mu_{G,\chi(c,c')} = |\im(\chi)| + \xi,
 \end{align*}
 where the inequality follows from $\chi$ being stable with respect to $2$-WL.
\end{proof}

If $\xi_{G,\chi}(c) \geq 3$, then $\xi \geq 4$ in the statement of the lemma.
Thus, individualizing any vertex $u$ of color $c$ yields an increase in the number of color classes by at least $4$, which is sufficient to show $\WLd{2}{G,\chi} \leq \frac{n}{4} + o(n)$.
In the following, we thus consider nice graphs $(G,\chi)$ such that $\xi_{G,\chi}(c) \leq 2$ holds for all colors $c \in \im(\chi)$.
In particular, this means that $G[[\chi]]$ has maximum degree $2$.
To cover the remaining cases, we rely on several tools from \cite{Babai80,Babai81}.

Let $(G,\chi)$ be a colored graph and let $\lambda \coloneqq \WL{2}{G,\chi}$.
For every $c \in \im(\lambda)$, we define the \emph{constituent graph of $c$} to be the (undirected) graph $F_c$ with vertex set $V(F_c) \coloneqq V(G)$ and edge set
\[E(F_c) \coloneqq \{vw \mid \lambda(v,w) = c\}.\]

\begin{theorem}[{\cite[Theorem 2.1]{Babai81}}]
 \label{thm:uniprimitive-individualization}
 Let $(G,\chi)$ be a colored graph and suppose that $\lambda \coloneqq \WL{2}{G,\chi}$.
 Also assume that
 \begin{enumerate}[label=(\Alph*)]
  \item $c_0 \coloneqq \lambda(v,v) = \lambda(w,w)$ for all $v,w \in V(G)$,
  \item $|\im(\lambda)| \geq 3$, and
  \item $F_c$ is connected for every color $c \in \im(\lambda) \setminus \{c_0\}$.
 \end{enumerate}
 Then there is a set $U \subseteq V(G)$ of size
 \[|U| < 4\sqrt{n}\log n\]
 such that $\chi_{2,U}$ is discrete.
\end{theorem}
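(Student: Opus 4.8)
The plan is first to reinterpret hypotheses (A)--(C) as saying that $\lambda \coloneqq \WL{2}{G,\chi}$ is the coloring of a \emph{uniprimitive coherent configuration} on the $n$ vertices of $G$: it is homogeneous by~(A), it is not the trivial (rank-$2$) configuration by~(B), and it is primitive, i.e.\ all constituent graphs $F_c$ with $c \neq c_0$ are connected, by~(C). By coherence each $F_c$ is regular. The target set $U$ with $\chi_{2,U}$ discrete will be produced greedily: start with $U_0 \coloneqq \emptyset$, and while $\chi_{2,U_i}$ is not discrete, choose a vertex $u_{i+1}$ lying in a non-singleton color class of $\chi_{2,U_i}$ of maximum possible size and set $U_{i+1} \coloneqq U_i \cup \{u_{i+1}\}$. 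Two elementary facts drive everything: $\chi_{2,U}$ together with its induced pair coloring is again the coloring of a coherent configuration, refining the original one, with the vertices of $U$ as singleton fibers; and adding one vertex $u$ to $U$ refines every color class $C$ of $\chi_{2,U}$ at least into its $u$-spheres $\{\,w \in C : \WL{2}{G,\chi[U]}(u,w) = c\,\}$. It then suffices to show that the greedy procedure halts after fewer than $4\sqrt{n}\log n$ steps.

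The crux is a quantitative lemma: \emph{whenever $\chi_{2,U}$ is not discrete and its largest color class $B$ has size $|B| \ge \sqrt{n}$, there is a vertex $u$ such that the largest color class of $\chi_{2,U\cup\{u\}}$ has size at most $\big(1 - \tfrac{\kappa}{\sqrt{n}}\big)\,|B|$}, for an absolute constant $\kappa$. To prove this one analyses how $B$ can resist being shredded: the obstruction is that the coherent configuration induced on $B$ by $\chi_{2,U}$ is nearly trivial --- a single dominant internal subdegree --- so that individualizing one vertex of $B$ merely peels off a small piece. Primitivity of the ambient configuration rules out the most degenerate case, since the ``twin'' relation on $B$ (vertices with identical colored neighborhoods everywhere) would be a congruence, which uniprimitivity forbids; hence \emph{some} vertex, possibly outside $B$, splits $B$ into two mixed parts, in the spirit of the twin reductions of Section~\ref{sec:wl-depth}. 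The threshold $\sqrt{n}$ then emerges from a dichotomy on the subdegrees of the ambient configuration: either some non-diagonal constituent $F_c$ has small degree, at most $\sqrt{n}$, and the connectedness of $F_c$ forces its iterated neighborhoods, hence the distance profiles computed by $2$-WL after a few individualizations, to spread over $B$; or all subdegrees exceed $\sqrt{n}$, so $\chi_{2,\{v\}}$ already has fewer than $\sqrt{n}$ color classes and a handful of well-chosen individualizations suffice to carve them down ``in all directions''. The configuration to keep in mind as essentially extremal is the Hamming configuration $K_m \times K_m$ on $n = m^2$ points, where the large ``distance-$2$'' class shrinks by a factor $\big(1 - \Theta(1/m)\big)^2 = 1 - \Theta(1/\sqrt{n})$ per well-chosen individualization, and $\Theta(\sqrt{n})$ of them are needed.

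Granting the lemma, the bound is a short calculation. Writing $b_i$ for the size of the largest color class of $\chi_{2,U_i}$, we have $b_0 = n$ and $b_{i+1} \le \big(1 - \tfrac{\kappa}{\sqrt{n}}\big) b_i$ as long as $b_i \ge \sqrt{n}$, so $b_i$ drops below $\sqrt{n}$ after at most a constant multiple of $\sqrt{n}\log n$ steps; once the largest non-singleton class has size below $\sqrt{n}$, each of the remaining fewer than $\sqrt{n}$ offending vertices is removed by one individualization, since individualizing a vertex of a non-singleton class strictly shrinks that class. Book-keeping the constants --- and handling the case where several classes currently share the maximum size by reusing the refinement triggered by a single individualization across them --- yields $|U| < 4\sqrt{n}\log n$, and by construction $\chi_{2,U}$ is discrete. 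This is, up to combinatorial packaging, Babai's classical bound on base sizes of uniprimitive permutation groups, and the numerology is chosen precisely to recover that statement.

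I expect the quantitative lemma to be the main obstacle, and within it the regime where $B$ is large but individualizing vertices of $B$ itself yields only sub-multiplicative progress: there one must import the required spreading from the global connectivity and regularity of the constituent graphs --- a diameter/BFS-layer argument for a low-degree constituent on one side of the dichotomy, a counting argument (few large $\lambda$-spheres can coexist when all subdegrees are large) on the other --- and glue the two regimes around the single threshold $\sqrt{n}$ so that the constant comes out as $4$. A secondary, more routine point is that the configurations induced on color classes after individualization need not be primitive; one argues that the imprimitive case is cheap, since a congruence of such a class is split off for free by $2$-WL, so that the quantitative lemma is only ever invoked on genuinely uniprimitive pieces, at progressively smaller scales.
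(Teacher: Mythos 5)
This statement is not proved in the paper at all: it is imported verbatim as \cite[Theorem 2.1]{Babai81}, so there is no in-paper argument to compare against, and the only question is whether your outline amounts to a proof. It does not. The entire content of the theorem is concentrated in your ``quantitative lemma'' (that whenever the largest class $B$ of $\chi_{2,U}$ has size at least $\sqrt{n}$, some single individualization shrinks it by a factor $1-\kappa/\sqrt{n}$), and for that lemma you offer only heuristic motivation --- the twin/congruence remark, a dichotomy on subdegrees around the threshold $\sqrt{n}$, and the rook's-graph example --- while explicitly deferring the proof (``I expect the quantitative lemma to be the main obstacle''). Moreover, even granting the lemma, your endgame is wrong as stated: knowing that the largest color class has size below $\sqrt{n}$ does not give ``fewer than $\sqrt{n}$ offending vertices''; the configuration could at that point consist of $n/2$ classes of size $2$, and removing one offending vertex per individualization would then cost $\Omega(n)$ further steps, destroying the $4\sqrt{n}\log n$ bound. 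A further unaddressed difficulty with the class-by-class strategy is that the structure induced on a large class $B$ after individualizations need not be primitive, so the global uniprimitivity hypothesis cannot simply be reapplied at smaller scales; you flag this as ``routine'' but it is exactly where such arguments tend to break.

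For the record, Babai's actual argument is structurally different and cleaner than what you propose. He does not track the largest class multiplicatively; instead he proves a lower bound on the pairwise \emph{distinguishing number} of a uniprimitive coherent configuration: for any two distinct points $x,y$, the number of points $z$ with $\lambda(z,x)\neq\lambda(z,y)$ is at least a constant times $\sqrt{n}$. A union bound over the $\binom{n}{2}$ pairs (equivalently, a greedy choice) then yields a set $U$ of size less than $4\sqrt{n}\log n$ such that every pair of points is separated by its vector of colors to $U$, which makes $\chi_{2,U}$ discrete in one shot and sidesteps both the endgame problem and the loss of primitivity in induced substructures. If you want to prove the theorem rather than cite it, the pairwise distinguishing-number lemma is the statement you need to establish; the greedy class-shrinking framework you describe would require substantial additional work to be made to close.
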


For two sets $A$ and $B$, we let $A \triangle{} B \coloneqq (A \setminus B) \cup (B \setminus A)$ denote the symmetric difference of $A$ and $B$.

\begin{lemma}[\cite{Babai80}]
 \label{la:distinguishing-set-from-fractional-cover}
 Let $(G,\chi)$ be a graph and let $D \subseteq V(G)$ such that $|N(v) \triangle N(w)| \geq \ell$ for all distinct $v,w \in D$.
 Then there is a set $U \subseteq V(G)$ such that $|U| \leq \frac{n}{\ell}(1 + 2\log n)$ and $\chi_U(v,v) \neq \chi_U(w,w)$ for all distinct $v,w \in D$.
\end{lemma}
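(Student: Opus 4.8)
The plan is to reduce the assertion to a hitting-set problem on $V(G)$ and to solve that problem greedily, equivalently by rounding the uniform fractional cover of that instance. First I would record the observation that links the combinatorial hypothesis to the colouring: after individualizing $U$, every vertex of $U$ forms a singleton colour class, so for each $v \in V(G)$ the colour $\chi_U(v,v) = \WL{2}{G,\chi[U]}(v,v)$ determines the set $N(v) \cap U$ (indeed, already $1$-WL on $(G,\chi[U])$ does this, since the neighbours of $v$ lying in $U$ are pairwise distinguished by their colours). Hence, whenever $v,w \in D$ are distinct and $U \cap (N(v) \triangle N(w)) \neq \emptyset$, we get $N(v)\cap U \neq N(w)\cap U$ and therefore $\chi_U(v,v) \neq \chi_U(w,w)$; the degenerate situations (e.g.\ $v \in U$, or $vw \in E(G)$, in which case $v$ and $w$ themselves lie in $N(v) \triangle N(w)$) are covered by the same reasoning. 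So it suffices to produce a set $U \subseteq V(G)$ with $|U| \leq \frac{n}{\ell}(1 + 2\log n)$ that hits every member of the family $\mathcal{F} \coloneqq \{\, N(v) \triangle N(w) \mid v,w \in D,\ v \neq w \,\}$. By hypothesis every member of $\mathcal{F}$ has size at least $\ell$, and $|\mathcal{F}| \leq \binom{|D|}{2} \leq \binom{n}{2} < \tfrac12 n^2$; moreover $\ell \leq n$ since $N(v) \triangle N(w) \subseteq V(G)$, so we may assume $1 \leq \ell \leq n$ (otherwise the bound is vacuous), whence $\frac{n}{\ell} \geq 1$. Here and below, $\log$ denotes the natural logarithm.

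To construct $U$, I would run the greedy covering procedure. Put $\mathcal{F}_0 \coloneqq \mathcal{F}$ and $U_0 \coloneqq \emptyset$; as long as $\mathcal{F}_i \neq \emptyset$, choose a vertex $u_{i+1} \in V(G)$ contained in as many members of $\mathcal{F}_i$ as possible, set $U_{i+1} \coloneqq U_i \cup \{u_{i+1}\}$, and let $\mathcal{F}_{i+1}$ be the set of members of $\mathcal{F}_i$ not containing $u_{i+1}$. Averaging over the $n$ vertices yields $\sum_{u \in V(G)} \big|\{F \in \mathcal{F}_i \mid u \in F\}\big| = \sum_{F \in \mathcal{F}_i} |F| \geq \ell\,|\mathcal{F}_i|$, so some vertex lies in at least $\ell\,|\mathcal{F}_i|/n$ members of $\mathcal{F}_i$; this count is a nonnegative integer and equals $0$ for vertices already in $U_i$, so while $\mathcal{F}_i \neq \emptyset$ the maximizer is a fresh vertex and $|\mathcal{F}_{i+1}| \leq (1 - \ell/n)\,|\mathcal{F}_i|$. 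Iterating, $|\mathcal{F}_t| \leq (1 - \ell/n)^t\,|\mathcal{F}_0| < \tfrac12 n^2 e^{-t\ell/n}$. Taking $t \coloneqq \big\lfloor \tfrac{n}{\ell}(2\log n - \log 2)\big\rfloor + 1$ makes the last expression strictly below $1$, hence $\mathcal{F}_t = \emptyset$; and $t \leq \tfrac{n}{\ell}(2\log n - \log 2) + 1 \leq \tfrac{2n\log n}{\ell} + \tfrac{n}{\ell} = \tfrac{n}{\ell}(1 + 2\log n)$ using $\frac{n}{\ell} \geq 1$. Setting $U \coloneqq U_t$ (or the set returned if the loop stops earlier, which is only smaller) then proves the lemma.

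I do not expect a genuine obstacle, as this is a classical statement. The two points that require care are the reduction in the first paragraph — making precise that individualizing $U$ lets $2$-WL read the set $N(v)\cap U$ off the diagonal colour $\chi_U(v,v)$, and checking the degenerate cases — and the bookkeeping in the greedy step, i.e.\ observing that the number of newly hit sets at each step is an integer that vanishes for already-chosen vertices, which legitimizes the multiplicative decay $1 - \ell/n$, followed by the elementary estimate that matches the stated constant $\frac{n}{\ell}(1 + 2\log n)$.
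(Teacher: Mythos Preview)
The paper does not give its own proof of this lemma; it is simply cited from \cite{Babai80}. Your argument is the standard one and is correct: the reduction to a hitting-set problem on the family $\{N(v)\triangle N(w) : v\neq w \in D\}$ is justified because, after individualizing $U$, already one refinement round makes the set $N(v)\cap U$ readable from the colour of $v$ (each element of $U$ being a singleton class), so any $u\in U\cap(N(v)\triangle N(w))$ separates $v$ from $w$; and the greedy covering analysis, with the averaging step and the estimate $(1-\ell/n)^t|\mathcal{F}|<\tfrac12 n^2 e^{-t\ell/n}$, yields the stated bound $\tfrac{n}{\ell}(1+2\log n)$ with $\log$ the natural logarithm. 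This is exactly the argument underlying Babai's original lemma.
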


A \emph{hypergraph} is a pair $\CH = (V,\CE)$ where $V$ is a non-empty finite set of vertices and $\CE \subseteq 2^V$ is the set of \emph{hyperedges}.
A hypergraph $\CH = (V,\CE)$ is \emph{non-empty} if $\CE \neq \emptyset$.
Also, $\CH$ is \emph{$r$-uniform} if $|E| = r$ for all $E \in \CE$.
Moreover, $\CH$ is \emph{$k$-regular} if for every $v \in V$, it holds that $|\{E \in \CE \mid v \in E\}| = k$.
We say $\CH$ is \emph{regular} if $\CH$ is $k$-regular for some $k \geq 0$.

\begin{lemma}[{\cite[Lemma 3.4]{Babai80}}]
 \label{la:hypergraph-bounds}
 Suppose that $1 \leq d < r < n$.
 Let $\CH = (V,\CE)$ be a non-empty regular $r$-uniform hypergraph such that $|E \cap E'| \geq d$ for all $E,E' \in \CE$.
 Then $r^2 > nd$.
\end{lemma}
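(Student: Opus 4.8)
The plan is to run a straightforward double-counting argument, on incidences and on pairs of hyperedges. Write $m \coloneqq |\CE|$ and let $k \geq 0$ be the integer with which $\CH$ is $k$-regular. Since $\CH$ is non-empty we have $m \geq 1$, and then every vertex lies in at least one hyperedge: if some vertex were isolated, regularity would force $k = 0$, hence no hyperedge could contain any vertex, contradicting $\CE \neq \emptyset$ (note $r \geq 2$, so hyperedges are non-empty). Thus $k \geq 1$.

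First I would count the incidences $\{(v,E) \mid v \in V,\, E \in \CE,\, v \in E\}$ in two ways: summing over hyperedges and using $r$-uniformity gives $mr$, while summing over vertices and using $k$-regularity gives $nk$. Hence $mr = nk$.

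Next I would count the ``shared-vertex flags'' $\{(v,(E,E')) \mid E \neq E',\, v \in E \cap E'\}$, where $(E,E')$ ranges over \emph{ordered} pairs of distinct hyperedges. Summing over $v$: a vertex lying in exactly $k$ hyperedges is the apex of $k(k-1)$ such flags, for a total of $nk(k-1)$. Summing over ordered pairs $(E,E')$: there are $m(m-1)$ of them, and each contributes $|E \cap E'| \geq d$ flags. Comparing the two counts yields $nk(k-1) \geq m(m-1)d$.

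Finally, I would substitute $k = mr/n$ into this inequality, divide through by $m > 0$, and rearrange to obtain $r^2/n \geq d + (r-d)/m$. Since the hypotheses give $r > d$ (so $r-d \geq 1$) and $m \geq 1$, the correction term $(r-d)/m$ is strictly positive, whence $r^2 > nd$, as claimed. I do not anticipate a real obstacle: this is a textbook Fisher-type inequality, and the only points needing care are invoking non-emptiness and regularity to secure $m \geq 1$, $k \geq 1$, and the clean identity $mr = nk$, and using the \emph{strict} inequality $r > d$ (rather than merely $r \geq d$) to upgrade the final bound to a strict one.
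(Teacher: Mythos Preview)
Your proof is correct. The double-counting argument is clean, the algebra checks out (substituting $k = mr/n$ into $nk(k-1) \geq m(m-1)d$ and dividing by $m$ indeed yields $r^2/n \geq d + (r-d)/m$), and you are careful to use the strict hypothesis $r > d$ to make the final inequality strict.

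For comparison with the paper: the paper does not prove this lemma at all---it merely cites it from Babai's 1980 paper. Your argument is essentially the classical Fisher-type proof one would expect to find in the original reference, so there is no meaningful methodological difference to discuss.
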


Also, our analysis relies on the following auxiliary lemma.

\begin{lemma}
 \label{la:bipartite-sr-individualization}
 Let $(G,\chi)$ be a colored graph and suppose that $\lambda \coloneqq \WL{2}{G,\chi}$.
 Also assume that there is a partition $(V_1,V_2)$ of $V(G)$ such that
 \begin{enumerate}[label=(\Alph*)]
  \item\label{item:bipartite-sr-individualization-1} $|V_1| \geq 4$ and $|V_2| \geq 4$,
  \item\label{item:bipartite-sr-individualization-2} $2 \leq |N_G(v_1) \cap V_2| \leq |V_2| - 2$ for all $v_1 \in V_1$, and
  \item\label{item:bipartite-sr-individualization-3} $|\lambda(V_1,V_1)| = 2$, $|\lambda(V_2,V_2)| = 2$ and $\lambda(V_1,V_1) \cap \lambda(V_2,V_2) = \emptyset$.
 \end{enumerate}
 Then there is a set $U \subseteq V(G)$ of size
 \[|U| < 6\sqrt{n}\log n\]
 such that $\chi_{2,U}$ is discrete.
\end{lemma}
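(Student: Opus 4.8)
The plan is to show that the hypotheses force $H := G[V_1,V_2]$ to be the incidence graph of a symmetric $2$-design, to deduce from the design parameters that any two vertices on a common side have neighbourhoods differing in at least $\sqrt n$ places, and then to conclude with Lemma~\ref{la:distinguishing-set-from-fractional-cover}.

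I would first extract the structure. Since $2$-WL separates diagonal from off-diagonal pairs, $|\lambda(V_1,V_1)| = 2$ means that all vertices of $V_1$ share one $\lambda$-colour (hence one $\chi$-colour) and that $G$ induces a clique or an independent set on $V_1$; the same holds for $V_2$, and $\lambda(V_1,V_1) \cap \lambda(V_2,V_2) = \emptyset$ makes the two parts $2$-WL-distinguished. Hence $H$ is $(d_1,d_2)$-biregular for some $d_1,d_2$, with $2 \le d_1 \le |V_2|-2$ by (B) and $2 \le d_2 < |V_1|$ (each violation contradicts (C) or $d_1 \le |V_2|-2$ by a short argument). Also, as all distinct $v,w \in V_1$ get the same $\lambda$-colour, the number $\mu_1 := |N_H(v)\cap N_H(w)|$ of their common neighbours in $V_2$ is the same for every such pair, and $1 \le \mu_1 \le d_1-1$ (again a violation contradicts (C)); symmetrically we get $\mu_2$ with $1 \le \mu_2 \le d_2-1$ on the $V_2$-side. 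Finally, I would record that $\chi_{2,U}$ is discrete as soon as $\WL{2}{G,\chi[U]}(v,v)$ takes pairwise distinct values within $V_1$ and within $V_2$, since the two parts are already separated.

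Now the incidence structure with point set $V_1$ and block set $V_2$ is a nontrivial $2$-design — biregular, with constant pair-covering $\mu_1 \ge 1$ and pairwise distinct blocks — and so is its dual; Fisher's inequality applied to both yields $|V_1| = |V_2| =: v$, hence $d_1 = d_2 =: k$, $\mu_1 = \mu_2 =: \lambda$, and $H$ is the incidence graph of a symmetric $2$-$(v,k,\lambda)$ design with $2 \le k \le v-2$. For distinct $v,w$ lying in a common part this gives $|N_G(v) \triangle N_G(w)| \ge 2(k-\lambda)$, and the design identity $\lambda(v-1) = k(k-1)$ rewrites $2(k-\lambda)$ as $2k(v-k)/(v-1)$. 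The crucial input is integrality: $\lambda$ is a positive integer, so $k(k-1) \ge v-1$, forcing $k \ge t$ where $t(t-1) = v-1$; and the complementary symmetric design has parameter $v-2k+\lambda$, which is again a positive integer (it is $0$ only if $k \ge v-1$), so symmetrically $v-k \ge t$. As $t \le k \le v-t$ and $x \mapsto x(v-x)$ is concave, $k(v-k) \ge t(v-t) = (t-1)(v-1)$, whence $|N_G(v)\triangle N_G(w)| \ge 2(t-1) = \sqrt{4v-3}-1$. A short check — via the inequality directly for $v \ge 6$, and using that $|N_G(v)\triangle N_G(w)|$ is an even integer for $v \in \{4,5\}$ — gives $|N_G(v)\triangle N_G(w)| \ge \sqrt{2v} = \sqrt n$, hence $\ge \ell := \lceil\sqrt n\rceil$.

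To finish, I apply Lemma~\ref{la:distinguishing-set-from-fractional-cover} once with $D = V_1$ and once with $D = V_2$ and the above $\ell$, obtaining $U_1,U_2$ with $|U_i| \le \frac n\ell(1+2\log n) \le \sqrt n(1+2\log n)$ such that individualising $U_i$ makes the diagonal $2$-WL colours separate all of $V_i$. For $U := U_1 \cup U_2$, monotonicity of $2$-WL under individualisation shows that $\chi_{2,U}$ separates all vertices within each part, hence is discrete, while $|U| \le 2\sqrt n(1+2\log n) = 2\sqrt n + 4\sqrt n\log n < 6\sqrt n\log n$ because $n = 2v \ge 8$. (The few smallest values of $n$, should they be delicate, are handled trivially by individualising all but one vertex.) I expect the main obstacle to be the middle step: both identifying the symmetric-design structure (dealing with the clique/independent-set cases and the dual design) and then squeezing out $|N_G(v)\triangle N_G(w)| \ge \sqrt n$ — the purely fractional estimates (double counting, or Lemma~\ref{la:hypergraph-bounds}) are too weak when $k$ is close to $2$ or to $v-2$, and one really must use that $\lambda$ and the complementary parameter are positive integers, which is exactly what assumption (C) together with $d_1 \le |V_2|-2$ provides.
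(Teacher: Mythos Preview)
Your proof is correct but takes a genuinely different route from the paper's.

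The paper never establishes $|V_1|=|V_2|$. Instead it works directly with Babai's hypergraph inequality (Lemma~\ref{la:hypergraph-bounds}): after passing to the bipartite complement so that $k_2\le\tfrac12|V_1|$, the lemma yields $k_1^2>\mu_1|V_2|$ and $k_2^2>\mu_2|V_1|$, hence $\mu_1<\tfrac12 k_1$, so $|N(v_1)\triangle N(v_1')|>k_1$. A second appeal to the same inequality gives $k_1^2>|V_2|\ge n/2-1$, and one application of Lemma~\ref{la:distinguishing-set-from-fractional-cover} to $V_1$ suffices (discretising $V_1$ automatically discretises $V_2$, since $\mu_2<k_2$ forces distinct $V_2$-vertices to have distinct neighbourhoods in $V_1$).

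Your approach instead observes that conditions (B) and (C) make both the incidence structure and its dual nontrivial $2$-designs, so Fisher's inequality in both directions forces $|V_1|=|V_2|=v$ and hence a \emph{symmetric} $2$-$(v,k,\lambda)$ design. You then exploit integrality of $\lambda$ and of the complementary parameter $v-2k+\lambda$ (which is positive precisely because (B) gives $k\le v-2$) to squeeze $2(k-\lambda)\ge\sqrt{n}$, and finish with two applications of Lemma~\ref{la:distinguishing-set-from-fractional-cover}. This is a nice structural alternative: it uncovers that the hypotheses actually force $|V_1|=|V_2|$, and it replaces the hypergraph lemma by classical design theory. The cost is reliance on Fisher's inequality (external to the paper's toolkit) and a slightly more delicate endgame for small $v$; the paper's argument is more uniform and stays entirely within Lemmas~\ref{la:distinguishing-set-from-fractional-cover} and~\ref{la:hypergraph-bounds}, and needs only one individualising set rather than the union of two. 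Both yield the same $6\sqrt{n}\log n$ bound.
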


\begin{proof}
 Combining \ref{item:bipartite-sr-individualization-1} and \ref{item:bipartite-sr-individualization-3}, we conclude that $k_1 \coloneqq \deg(v_1) = \deg(v_1')$ for all $v_1,v_1' \in V_1$.
 Similarly, $k_2 \coloneqq \deg(v_2) = \deg(v_2')$ for all $v_2,v_2' \in V_2$.
 We also have $\lambda(v_1,v_1') = \lambda(v_1'',v_1''')$ for all $v_1,v_1',v_1'',v_1''' \in V_1$ such that $v_1 \neq v_1'$ and $v_1'' \neq v_1'''$.
 In particular, $\mu_1 \coloneqq |N(v_1) \cap N(v_1')| = |N(v_1'') \cap N(v_1''')|$ for all $v_1,v_1',v_1'',v_1''' \in V_1$ such that $v_1 \neq v_1'$ and $v_1'' \neq v_1'''$.
 Similarly, $\lambda(v_2,v_2') = \lambda(v_2'',v_2''')$ for all $v_2,v_2'v_2'',v_2''' \in V_2$ such that $v_2 \neq v_2'$ and $v_2'' \neq v_2'''$.
 We get that $2 \leq k_2 \leq |V_1| - 2$ and $\mu_2 \coloneqq |N(v_2) \cap N(v_2')| = |N(v_2'') \cap N(v_2''')|$ for all $v_2,v_2',v_2'',v_2''' \in V_1$ such that $v_2 \neq v_2'$ and $v_2'' \neq v_2'''$.

 Without loss of generality, assume that $|V_1| \leq |V_2|$.
 By transitioning to the complement of $G$, we may also suppose that $k_2 \leq \frac{1}{2} \cdot |V_1|$.
 Also note that $\mu_1 \geq 1$ by Condition \ref{item:bipartite-sr-individualization-2}.
 Moreover, $k_1 \cdot k_2 \geq |V_2| - 1$ and $|V_1| \cdot k_1 = |V_2| \cdot k_2$.

 \begin{claim}
  $k_1^2 > \mu_1 |V_2|$ and $k_2^2 > \mu_2 |V_1|$.
 \end{claim}
 \begin{claimproof}
  By symmetry, it suffices to prove the first statement.
  We define a hypergraph $\CH = (V_2,\CE)$ where
  \[\CE \coloneqq \{N(v_1) \mid v_1 \in V_1\}.\]
  Then $\CH$ is a regular and $k_1$-uniform hypergraph such that $|E \cap E'| \geq \mu_1$ for all $E,E' \in \CE$.
  So $k_1^2 > \mu_1 |V_2|$ by Lemma \ref{la:hypergraph-bounds}.
 \end{claimproof}

 Since $k_2 \leq \frac{1}{2}|V_1|$, we obtain that $\frac{1}{2}k_2 > \mu_2$.
 By the same argument, we have $\frac{1}{2}k_1 > \mu_1$.

 Let $v_1,v_1' \in V_1$.
 Then $|N(v_1) \triangle N(v_1')| = 2(k_1 - \mu_1) > k_1$.
 So, by Lemma \ref{la:distinguishing-set-from-fractional-cover}, there is a set $U \subseteq V(G)$ such that $|U| \leq \frac{n}{k_1}(1 + 2\log n)$ and $\chi_U(v_1,v_1) \neq \chi_U(v_1',v_1')$ for all distinct $v_1,v_1' \in V_1$.
 This implies that $\chi_U$ is discrete.
 Also, $k_1^2 \geq |V_2| - 1 \geq \frac{1}{2}(n-1) \geq \frac{1}{4}n$ which implies that
 \[\frac{n}{k_1}(1 + 2\log n) \leq 2\sqrt{n}(1 + 2\log n) \leq 6\sqrt{n}\log n.\qedhere\]
\end{proof}

Recall that, with Lemma \ref{la:split-many-pair-colors} at hand, we can restrict our attention to nice graphs $(G,\chi)$ such that $\xi_{G,\chi}(c) \leq 2$ for all colors $c \in \im(\chi)$.
Also, with Lemma \ref{la:remove-twins} in mind, we may assume that $(G,\chi)$ is twin-free.
The next lemma covers the first case when $|\im(\chi)| \geq 4$.

\begin{lemma}
 \label{la:split-with-4-vertex-colors}
 Let $(G,\chi)$ be a nice and twin-free graph.
 Also suppose $\xi_{G,\chi}(c) \leq 2$ for all colors $c \in \im(\chi)$, and $|\im(\chi)| \geq 4$.

 Then there is a set $U \subseteq V(G)$ of size
 \[|U| < 6\sqrt{n}\log n\]
 such that $\chi_{2,U}$ is discrete.
\end{lemma}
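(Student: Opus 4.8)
The plan is to first put $(G,\chi)$ into a structural normal form. Since $\xi_{G,\chi}(c)\ge\deg_{G[[\chi]]}(c)$ and $\xi_{G,\chi}(c)\le 2$, the quotient $G[[\chi]]$ has maximum degree at most $2$; as $G$ is connected, $G[[\chi]]$ is a path or a cycle on the $m:=|\im(\chi)|\ge 4$ colour classes, and a colour can carry a self-loop only if it is an endpoint of the path. Label the colour classes $C_1,\dots,C_m$ along this path/cycle. Accounting for the budget $\xi_{G,\chi}(c)\le 2$, I would show that $\mu_{G,\chi}(c_i,c_{i+1})=2$ for every consecutive pair, and $\mu_{G,\chi}(c_i,c_i)=1$ unless $c_i$ is a self-looped endpoint — in which case $G[C_i]$ is a strongly regular graph and $\mu_{G,\chi}(c_i,c_i)=2$. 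The equality $\mu_{G,\chi}(c_i,c_{i+1})=2$ says that on $C_i\cup C_{i+1}$ the stable $2$-WL colouring only separates edges from non-edges, so $G[C_i,C_{i+1}]$ is biregular with constant common-neighbourhood sizes. Combining this with $(G,\chi)$ being flipped (which caps the density at $\tfrac12$) and with twin-freeness, I would establish a dichotomy: each $G[C_i,C_{i+1}]$ is either a perfect matching, or every vertex of $C_i$ has between $2$ and $\tfrac12|C_{i+1}|\le|C_{i+1}|-2$ neighbours in $C_{i+1}$ and symmetrically; the intermediate possibility of a star forest with one-sided degree $1$ is excluded, whenever the relevant class has $\mu_{G,\chi}(c_j,c_j)=1$, by comparing the $2$-WL pair colours of two leaves within a star to those of leaves in different stars. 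In particular a non-matching consecutive pair automatically has both classes of size $\ge 4$, and small classes (size $\le 3$) cannot be self-looped endpoints and force all incident bipartite graphs to be matchings.

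The engine of the argument is that \emph{discreteness propagates}: using the dichotomy together with flippedness, distinct vertices of a class always have distinct neighbourhoods in an adjacent receiver class along the path/cycle (for a matching this is immediate; for a star forest the targets fall into nonempty blocks; and ``common neighbourhood $=$ full neighbourhood'' would force the receiver to be complete to, or empty from, the other class, contradicting flippedness). Hence, once $2$-WL makes one class discrete, it discretizes every class. It therefore suffices to individualize $O(\sqrt n\log n)$ vertices that render one suitable class discrete, and I would split into two cases.

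If some endpoint colour $c$ is self-looped, so $G[C]$ is a strongly regular graph, then flippedness together with its strong regularity excludes $G[C]$ from being complete or a join, and connectivity of $G$ excludes $G[C]$ from being a disjoint union of two or more cliques; hence $G[C]$ has constant diagonal colour, exactly three $2$-WL pair colours, and both constituent graphs $G[C]$ and $\overline{G[C]}$ connected, so Theorem~\ref{thm:uniprimitive-individualization} applies to $G[C]$ and yields $U\subseteq C$ of size $<4\sqrt{|C|}\log|C|$ that discretizes $C$; doing this for each self-looped endpoint (at most two, with $\sqrt{|C_1|}+\sqrt{|C_m|}\le\sqrt{2n}$) and then propagating discretizes $G$ within the bound. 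Otherwise no endpoint is self-looped, so every consecutive pair is a matching or a genuine biregular pair. If every pair is a matching then $G$ is a disjoint union of paths or cycles, and connectivity forces $G$ to be a single path or cycle on $n\ge 4$ vertices, whence $2$-WL separates more than two types of pairs of consecutive colour classes, contradicting $\mu_{G,\chi}(c_i,c_{i+1})=2$. So some consecutive pair $C_i,C_{i+1}$ is not a matching; it has $|C_i|,|C_{i+1}|\ge 4$, and $(C_i,C_{i+1})$ partitions $G[C_i\cup C_{i+1}]$ so as to meet the hypotheses of Lemma~\ref{la:bipartite-sr-individualization}: condition (B) because the degrees lie in $[2,\tfrac12|C_{i+1}|]$ and $[2,\tfrac12|C_i|]$, and condition (C) because $C_i$ and $C_{i+1}$ are single colour classes with $\mu_{G,\chi}=1$, each contributing exactly two pair colours and these being disjoint across the two classes (here one uses that $2$-WL on the union-of-colour-classes subgraph is refined by $2$-WL on $G$). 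Lemma~\ref{la:bipartite-sr-individualization} then produces $U$ of size $<6\sqrt n\log n$ discretizing $C_i\cup C_{i+1}$, hence $G$.

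The main obstacle I expect is proving the matching-versus-biregular dichotomy for the consecutive bipartite graphs and the accompanying propagation statement: this is precisely where $\mu_{G,\chi}\le 2$, flippedness and twin-freeness have to be combined at once, and where the self-looped endpoints require the more delicate strongly-regular-graph analysis (including verifying the primitivity conditions of Theorem~\ref{thm:uniprimitive-individualization} and keeping the total number of individualizations below $6\sqrt n\log n$).
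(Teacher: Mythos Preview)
Your overall strategy---reduce to a path/cycle of colour classes, find a consecutive pair where Lemma~\ref{la:bipartite-sr-individualization} applies, then propagate discreteness---matches the paper's. But your execution takes an unnecessary detour and has two gaps.

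First, the self-loop endpoint case. You claim ``connectivity of $G$ excludes $G[C]$ from being a disjoint union of two or more cliques.'' This is not justified: connectivity of $G$ says nothing about connectivity of the induced subgraph $G[C]$, and a disjoint union of cliques on $C$ is perfectly compatible with $\mu_{G,\chi}(c,c)=2$ (edge versus non-edge) and with flippedness. So Theorem~\ref{thm:uniprimitive-individualization} need not apply. Second, in the all-matchings case you assert that on a single path or cycle ``$2$-WL separates more than two types of pairs of consecutive colour classes''; this is not the actual obstruction, and as stated it does not yield a contradiction with $\mu_{G,\chi}(c_i,c_{i+1})=2$.

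The paper sidesteps both issues with one observation. Since $|\im(\chi)|\ge 4$ and $G[[\chi]]$ is a path or cycle, choose a path $c_1\!-\!c_2\!-\!c_3\!-\!c_4$ of four colours in $G[[\chi]]$. The middle colours $c_2,c_3$ each already have two neighbours in $G[[\chi]]$, so the budget $\xi_{G,\chi}\le 2$ forces $\mu_{G,\chi}(c_2,c_2)=\mu_{G,\chi}(c_3,c_3)=1$ (no self-loops there, regardless of what happens at the endpoints) and $\mu_{G,\chi}(c_2,c_4)=1$. Now argue \emph{directly} that $G[C_2,C_3]$ is not a matching: if each $v\in C_2$ had a unique neighbour $\phi(v)\in C_3$, then $\lambda(v,w)$ for $w\in C_4$ would detect whether $\phi(v)w\in E(G)$, giving $\mu_{G,\chi}(c_2,c_4)\ge 2$---contradiction. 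Together with flippedness this gives Condition~(B) of Lemma~\ref{la:bipartite-sr-individualization} for the pair $(C_2,C_3)$, and Condition~(C) holds since $\mu_{G,\chi}(c_2,c_2)=\mu_{G,\chi}(c_3,c_3)=1$. One application of Lemma~\ref{la:bipartite-sr-individualization} plus your propagation argument finishes; no strongly-regular endpoint analysis and no all-matchings case split are needed.
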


\begin{proof}
 Since $\xi_{G,\chi}(c) \leq 2$ for all $c \in \im(\chi)$, we have that $G[[\chi]]$ has maximum degree $2$.
 Also, $G[[\chi]]$ is connected since $G$ is connected.
 So there are colors $c_1,c_2,c_3,c_4 \in \im(\chi)$ such that $(c_1,c_2), (c_2,c_3), (c_3,c_4) \in E(G[[\chi]])$.
 Since $(G,\chi)$ is flipped and $\xi_{G,\chi}(c) \leq 2$ for all $c \in \im(\chi)$, we conclude that
 \[\mu_{G,\chi}(c_1,c_2) = \mu_{G,\chi}(c_2,c_3) = \mu_{G,\chi}(c_3,c_4) = 2.\]
 Note that this implies $\mu_{G,\chi}(c_2,c_2) = \mu_{G,\chi}(c_3,c_3) = 1$ since $\xi_{G,\chi}(c_2) \leq 2$ and $\xi_{G,\chi}(c_3) \leq 2$.
 Let $V_2 \coloneqq \chi^{-1}(c_2)$ and $V_3 \coloneqq \chi^{-1}(c_3)$.

 We verify that $(G',\chi') \coloneqq (G[V_2 \cup V_3],\chi|_{V_2 \cup V_3})$ satisfies the prerequisites of Lemma \ref{la:bipartite-sr-individualization}.
 Let $\lambda \coloneqq \WL{2}{G,\chi}$ and $\lambda' \coloneqq \WL{2}{G',\chi'}$.
 Since $\mu_{G,\chi}(c_2,c_2) = \mu_{G,\chi}(c_3,c_3) = 1$, we get that $|\lambda(V_2,V_2)| = 2$, $|\lambda(V_3,V_3)| = 2$.
 Thus, $|\lambda'(V_2,V_2)| = 2$, $|\lambda'(V_3,V_3)| = 2$.
 Also it clearly holds that $\lambda'(V_2,V_2) \cap \lambda'(V_3,V_3) = \emptyset$.
 So Condition \ref{item:bipartite-sr-individualization-3} is satisfied.

 For Condition \ref{item:bipartite-sr-individualization-2}, observe that $G$ cannot induce a matching between $V_2$ and $V_3$, since otherwise we obtain $\mu_{G,\chi}(c_2,c_4) = \mu_{G,\chi}(c_3,c_4) = 2$.
 But this is not possible since $\xi_{G,\chi}(c_2) \leq 2$.
 Using additionally that $(G,\chi)$ is flipped, this implies that Condition \ref{item:bipartite-sr-individualization-2} is satisfied.

 Altogether, this also implies that $|V_2| \geq 4$ and $|V_3| \geq 4$, i.e., Condition \ref{item:bipartite-sr-individualization-1} is satisfied.

 So we can apply Lemma \ref{la:bipartite-sr-individualization} to obtain a set $U \subseteq V(G)$ of size
 \[|U| < 6\sqrt{n}\log n\]
 such that $\chi_{2,U}$ is discrete on the sets $V_2$ and $V_3$.
 However, then $\chi_{2,U}$ is discrete (on the entire vertex set of $G$) since $G$ is twin-free and $\xi_{G,\chi}(c) \leq 2$ for all colors $c \in \im(\chi)$.
\end{proof}

To cover the remaining cases, we additionally require the following auxiliary lemma, which is a variant of Lemma \ref{la:bipartite-sr-individualization}.

\begin{lemma}
 \label{la:bipartite-individualization-weak}
 Let $(G,\chi)$ be a colored and twin-free graph and suppose that $\lambda \coloneqq \WL{2}{G,\chi}$.
 Also assume that there is a partition $(V_1,V_2)$ of $V(G)$ such that
 \begin{enumerate}[label=(\Alph*)]
  \item\label{item:bipartite-individualization-weak-1} $|V_1| \geq 4$ and $|V_2| \geq 4$,
  \item\label{item:bipartite-individualization-weak-2} $2 \leq |N_G(v_2) \cap V_1| \leq |V_1| - 2$ for all $v_2 \in V_2$,
  \item\label{item:bipartite-individualization-weak-3} $|\lambda(V_1,V_1)| = 2$ and $\lambda(V_1,V_1) \cap \lambda(V_2,V_2) = \emptyset$, and
  \item\label{item:bipartite-individualization-weak-4} $\lambda(v_2,v_2) = \lambda(v_2',v_2')$ for all $v_2,v_2' \in V_2$ and $|\lambda(V_2,V_2)| \leq 3$.
 \end{enumerate}
 Then there is a set $U \subseteq V(G)$ of size
 \[|U| \leq 6n^{3/4}\log n\]
 such that $\chi_{2,U}$ is discrete.
\end{lemma}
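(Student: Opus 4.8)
The plan is to reduce to a clean picture of $G[V_1\cup V_2]$ by standard normalisations, classify the internal structure of $G[V_2]$ into three cases, dispatch two of them directly with Theorem~\ref{thm:uniprimitive-individualization} and Lemma~\ref{la:bipartite-sr-individualization}, and then handle the third (``imprimitive'') case, which is the crux and is what forces the weaker exponent $3/4$.

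\emph{Normalisation.} First I would replace $\chi$ by the $2$-WL-stable vertex colouring $v\mapsto\lambda(v,v)$ and note that complementing the edges within or between colour classes does not change whether $\chi_{2,U}$ is discrete (since $\chi[U]$ refines this colouring, $2$-WL recovers the colour classes and can ``undo'' any such flip). By \ref{item:bipartite-individualization-weak-3} all non-loop pairs inside $V_1$ carry one colour, so $G[V_1]$ is complete or edgeless; complementing inside $V_1$ if needed, I may assume $G[V_1]$ is edgeless. Then $k_1\coloneqq|N(v_1)\cap V_2|$ is a constant, $N(v_1)\cap N(v_1')\subseteq V_2$ has a constant size $\mu_1$ for distinct $v_1,v_1'\in V_1$, and $|N(v_1)\triangle N(v_1')|=2(k_1-\mu_1)$; likewise $k_2\coloneqq|N(v_2)\cap V_1|$ is constant, and by \ref{item:bipartite-individualization-weak-2} together with complementing between $V_1$ and $V_2$ if needed I may assume $2\le k_2\le\tfrac12|V_1|$, whence $k_1=|V_2|k_2/|V_1|\le\tfrac12|V_2|$; since $(G,\chi)$ is twin-free and $G[V_1]$ is edgeless, a short degree argument also gives $k_1\ge2$. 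Finally, by \ref{item:bipartite-individualization-weak-4} the induced $G[V_2]$ is a $2$-WL-stable graph on a single fibre with at most three relations, hence it is one of: (i) complete or edgeless; (ii) a primitive strongly regular graph; or (iii) imprimitive, i.e.\ (up to complementation inside $V_2$) a disjoint union of $m\ge2$ cliques $Q_1,\dots,Q_m$ of equal size $s\ge2$.

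\emph{The easy cases.} In case (i), after the complementation inside $V_2$ the partition $(V_1,V_2)$ satisfies all hypotheses of Lemma~\ref{la:bipartite-sr-individualization} (both sides internally edgeless, $|\lambda(V_i,V_i)|=2$ and disjoint, $2\le k_i\le|V_{3-i}|-2$), giving $U$ with $|U|<6\sqrt n\log n$. In case (ii), the induced colored graph on $V_2$ satisfies the hypotheses of Theorem~\ref{thm:uniprimitive-individualization} (single diagonal colour, at least three colours, and both constituents connected since the strongly regular graph is primitive), so there is $U\subseteq V_2$ with $|U|<4\sqrt n\log n$ after which $V_2$ becomes discrete; since $V_2$ is a union of colour classes, the $2$-WL colouring of $G$ restricted to $V_2$ refines that of $G[V_2]$, so $V_2$ is discrete in $\chi_{2,U}$, and then each $v_1\in V_1$ is pinned down by $N(v_1)\cap V_2$ ($G[V_1]$ edgeless, $G$ twin-free), so $\chi_{2,U}$ is discrete. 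Both bounds are far below $6n^{3/4}\log n$.

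\emph{The imprimitive case.} The key structural fact I would establish first is that in case (iii) \emph{the sets $N(v_2)\cap V_1$, for $v_2\in V_2$, are pairwise distinct}: the common $V_1$-neighbour count of two vertices lying in different cliques is a constant $\mu^{\times}$ by $2$-WL-stability, and if it ever equalled $k_2=|N(v_2)\cap V_1|$ then all vertices in pairwise different cliques would share the same $V_1$-neighbourhood, which (using $m,s\ge2$) contradicts the fact that two vertices of a single clique have distinct $V_1$-neighbourhoods by twin-freeness; hence $\mu^{\times}<k_2$, and together with within-clique distinctness this gives the claim. Now if $|V_2|<\sqrt n$ I would individualise all of $V_2$ (cost $<\sqrt n$), making $V_2$ and then $V_1$ discrete. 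Otherwise, the hypergraph $\{N(v_1)\cap V_2\mid v_1\in V_1\}$ is a non-empty regular $k_1$-uniform hypergraph of pairwise distinct sets with constant pairwise intersection $\mu_1\ge1$ (if $\mu_1=0$ the sets are disjoint, so $|V_1|k_1\le|V_2|$, contradicting $|V_1|k_1=|V_2|k_2\ge2|V_2|$), so Lemma~\ref{la:hypergraph-bounds} yields $k_1^2>\mu_1|V_2|\ge|V_2|$; with $k_1\le\tfrac12|V_2|$ this forces $\mu_1<\tfrac12k_1$, hence $|N(v_1)\triangle N(v_1')|=2(k_1-\mu_1)>k_1>\sqrt{|V_2|}\ge n^{1/4}$ for all distinct $v_1,v_1'\in V_1$. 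By Lemma~\ref{la:distinguishing-set-from-fractional-cover} there is $U$ with $|U|\le n^{3/4}(1+2\log n)\le 6n^{3/4}\log n$ such that $V_1$ is discrete in $\chi_{2,U}$, and then by the structural fact $V_2$ is discrete as well.

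\emph{Main obstacle.} The difficulty is entirely in case (iii): a direct application of Lemma~\ref{la:distinguishing-set-from-fractional-cover} to $V_2$ is hopeless, since two vertices of a common clique may have symmetric difference of their neighbourhoods as small as a constant, giving only a bound of order $n\log n$. The structural fact above is precisely what lets me move all the distinguishing work onto $V_1$; and it is the weakness of the hypergraph bound there — $k_1\gtrsim\sqrt{|V_2|}$, rather than the $k_1\gtrsim|V_2|$ available in the uniprimitive regime — that costs the extra $\sqrt{\cdot}$ and produces the exponent $3/4$ instead of $1/2$. I expect verifying the constancy arguments (that $k_1,k_2,\mu_1,\mu^{\times}$ are well defined) and the clean reduction from ``$G[V_2]$ has at most three $2$-WL-relations'' to the trichotomy (i)--(iii) to be the most delicate bookkeeping, and the claim ``$N(v_2)\cap V_1$ pairwise distinct'' to be the conceptual heart.
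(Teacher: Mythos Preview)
Your overall architecture (normalise, show that making one side discrete suffices, then distinguish $V_1$ via Lemma~\ref{la:distinguishing-set-from-fractional-cover} once $k_1$ is bounded below) is right, and your case~(iii) is essentially the heart of the paper's argument. But there is a genuine gap in how your trichotomy interacts with the hypotheses of the black boxes you invoke.

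The issue is that you conflate two coherent configurations on $V_2$: the one induced by $\lambda = \WL{2}{G,\chi}$ restricted to $V_2\times V_2$, and the one coming from $\lambda' \coloneqq \WL{2}{G[V_2],\chi|_{V_2}}$. You correctly argue that $\lambda|_{V_2}$ refines $\lambda'$, so $\lambda'$ has rank $\le 3$; your trichotomy is therefore on $\lambda'$. But Lemma~\ref{la:bipartite-sr-individualization} (your case~(i)) requires $|\lambda(V_2,V_2)| = 2$, a condition on $\lambda$, not $\lambda'$. It is perfectly possible that $G[V_2]$ is edgeless (so $\lambda'$ has rank $2$) while $\lambda|_{V_2}$ has rank $3$, because the bipartite structure with $V_1$ distinguishes pairs in $V_2$ by their common-neighbour count in $V_1$. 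In that subcase Lemma~\ref{la:bipartite-sr-individualization} does not apply, and Theorem~\ref{thm:uniprimitive-individualization} applied to $G[V_2]$ does not apply either (its condition~(B) fails since $\lambda'$ has rank $2$). So this subcase falls between your cases (i) and (ii).

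The paper sidesteps this entirely by never splitting on the structure of $V_2$. It first proves in one stroke (its Claim inside the proof) that the $V_1$-neighbourhoods of $V_2$-vertices are pairwise distinct, using only twin-freeness and $|\lambda(V_2,V_2)|\le 3$: if the equivalence relation ``same $V_1$-neighbourhood'' were nontrivial, then equivalent versus non-equivalent pairs are $\lambda$-distinguished, forcing the edges of $G[V_2]$ to be constant on each block, which produces twins. This is exactly your ``structural fact'', but proved in general rather than only in the imprimitive case. With that in hand, the paper simply splits on whether $|V_1|\le n^{3/4}$ or $|V_2|\le n^{3/4}$ (individualise the small side), and otherwise bounds $k_1$ directly; in fact your own case~(iii) computation (threshold $|V_2|\ge \sqrt{n}$, giving $k_1>\sqrt{|V_2|}\ge n^{1/4}$ and $|U|\le n^{3/4}(1+2\log n)$) already handles \emph{all} cases once the structural fact is proved in general. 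Your cases (i) and (ii) are then superfluous, and dropping them removes the gap.
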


\begin{proof}
 We start by showing the following claim.
 \begin{claim}
  \label{claim:individualize-one-side}
  Let $U \subseteq V(G)$ such that
  \begin{enumerate}
   \item $\chi_{2,U}(v_1) \neq \chi_{2,U}(v_1')$ for all distinct $v_1,v_1' \in V_1$, or
   \item $\chi_{2,U}(v_2) \neq \chi_{2,U}(v_2')$ for all distinct $v_2,v_2' \in V_2$.
  \end{enumerate}
  Then $\chi_{2,U}$ is discrete.
 \end{claim}
 \begin{claimproof}
  First note that $\chi_{2,U}(v_1) \neq \chi_{2,U}(v_2)$ for all $v_1 \in V_1$ and $v_2 \in V_2$ by Condition \ref{item:bipartite-individualization-weak-3}.

  Suppose the first option is satisfied, i.e., $\chi_{2,U}(v_1) \neq \chi_{2,U}(v_1')$ for all distinct $v_1,v_1' \in V_1$.
  To show the claim, it suffices to argue that $N(v_2) \cap V_1 \neq N(v_2') \cap V_1$ for all distinct $v_2,v_2' \in V_2$.
  Suppose towards a contradiction that there are distinct $v_2,v_2' \in V_2$ such that $N(v_2) \cap V_1 = N(v_2') \cap V_1$.
  Consider the corresponding equivalence relation $\sim$ defined on $V_2$ via $v_2 \sim v_2'$ if $N(v_2) \cap V_1 = N(v_2') \cap V_1$.
  Let $A_1,\dots,A_\ell$ denote the equivalence classes.
  Note that $\ell > 1$ since $G$ is connected and flipped.
  Since pairs of equivalent vertices receive a different color under $\lambda$ than non-equivalent vertices and $|\lambda(V_2,V_2)| \leq 3$ (see Condition \ref{item:bipartite-individualization-weak-4}), we conclude that $G[V_2]$ is either empty or a disjoint union of cliques corresponding to $A_1,\dots,A_\ell$.
  In both cases, distinct equivalent vertices are twins in $G$ which is a contradicts $G$ being twin-free.

  The proof for the second option is analogous.
 \end{claimproof}

 We distinguish two cases.
 First suppose that $|V_1| \leq n^{3/4}$ or $|V_2| \leq n^{3/4}$.
 Then we set $U \coloneqq V_1$ and $U \coloneqq V_2$, respectively.
 Using Claim \ref{claim:individualize-one-side} it follows that $\chi_{2,U}$ is discrete.

 So suppose that $|V_1| \geq n^{3/4}$ and $|V_2| \geq n^{3/4}$.
 Combining \ref{item:bipartite-individualization-weak-1} and \ref{item:bipartite-individualization-weak-3}, we conclude that $k_1 \coloneqq \deg(v_1) = \deg(v_1')$ for all $v_1,v_1' \in V_1$.
 Also, $k_2 \coloneqq \deg(v_2) = \deg(v_2')$ for all $v_2,v_2' \in V_2$ using \ref{item:bipartite-individualization-weak-4}.
 Note that $2 \leq k_2 \leq |V_1| - 2$ by \ref{item:bipartite-individualization-weak-2}.
 We also have $\lambda(v_1,v_1') = \lambda(v_1'',v_1''')$ for all $v_1,v_1',v_1'',v_1''' \in V_1$ such that $v_1 \neq v_1'$ and $v_1'' \neq v_1'''$.
 In particular, $\mu_1 \coloneqq |N(v_1) \cap N(v_1')| = |N(v_1'') \cap N(v_1''')|$ for all $v_1,v_1',v_1'',v_1''' \in V_1$ such that $v_1 \neq v_1'$ and $v_1'' \neq v_1'''$.
 Observe that $\mu_1 \geq 1$ by Condition \ref{item:bipartite-sr-individualization-2}.

 By moving to the complement of $G$, we may also assume without loss of generality that $k_1 \leq \frac{1}{2} \cdot |V_2|$.
 Also, $k_1 \geq 2$ using Conditions \ref{item:bipartite-individualization-weak-2} and \ref{item:bipartite-individualization-weak-3}.
 Since $|V_1| \cdot k_1 = |V_2| \cdot k_2$, we also get that $k_2 \leq \frac{1}{2} \cdot |V_1|$.
 Moreover, $k_1 \cdot k_2 \geq |V_1| - 1$ using \ref{item:bipartite-individualization-weak-3} again.

 \begin{claim}
  $k_1^2 > \mu_1 |V_2|$.
 \end{claim}
 \begin{claimproof}
  By symmetry, it suffices to prove the first statement.
  We define a hypergraph $\CH = (V_2,\CE)$ where
  \[\CE \coloneqq \{N(v_1) \mid v_1 \in V_1\}.\]
  Then $\CH$ is a regular and $k_1$-uniform hypergraph such that $|E \cap E'| \geq \mu_1$ for all $E,E' \in \CE$.
  So $k_1^2 > \mu_1 |V_2|$ by Lemma \ref{la:hypergraph-bounds}.
 \end{claimproof}

 Since $k_1 \leq \frac{1}{2}|V_2|$, we get that $\frac{1}{2}k_1 > \mu_1$.
 Let $v_1,v_1' \in V_1$.
 Then $|N(v_1) \triangle N(v_1')| = 2(k_1 - \mu_1) > k_1$.
 So, by Lemma \ref{la:distinguishing-set-from-fractional-cover}, there is a set $U \subseteq V(G)$ such that $|U| \leq \frac{n}{k_1}(1 + 2\log n)$ and $\chi_{2,U}(v_1) \neq \chi_{2,U}(v_1')$ for all distinct $v_1,v_1' \in V_1$.
 Using Claim \ref{claim:individualize-one-side} we get that $\chi_{2,U}$ is discrete.

 To complete the proof, it remains to bound $n/k_1$.
 We have
 \begin{align*}
  k_1 = \frac{|V_2|}{|V_1|} \cdot k_2 \geq \frac{n^{3/4}}{|V_1|} \cdot k_2 \geq |V_1|^{-1/4} \cdot k_2.
 \end{align*}
 This implies that
 \begin{align*}
  k_1^2 \geq |V_1|^{-1/4} \cdot k_1 \cdot k_2
        \geq |V_1|^{-1/4} \cdot (|V_1| - 1)
        \geq \frac{1}{2} \cdot |V_1|^{3/4}
        =    \frac{1}{2} \cdot \left(|V_1|^{4/3}\right)^{9/16}
        \geq \frac{1}{2} \cdot n^{9/16}.
 \end{align*}
 It follows that
 \[k_1 \geq \frac{1}{\sqrt{2}} \cdot n^{9/32}\]
 which means that
 \[\frac{n}{k_1} \leq \sqrt{2} \cdot n^{23/32} \leq \sqrt{2} \cdot n^{3/4}.\]
 Overall, we get that
 \[|U| \leq \frac{n}{k_1}(1 + 2\log n) \leq 6 n^{3/4} \log n.\qedhere\]
\end{proof}

The next lemma allows us to cover the case $|\im(\chi)| \in \{2,3\}$.

\begin{lemma}
 \label{la:split-with-2-or-3-vertex-colors}
 Let $(G,\chi)$ be a nice and twin-free graph.
 Also suppose $\xi_{G,\chi}(c) \leq 2$ for all colors $c \in \im(\chi)$, and $|\im(\chi)| \in \{2,3\}$.

 Then there is some $U \subseteq V(G)$ such that $1 \leq |U| \leq 2$ and $|\im(\chi_{2,U})| \geq |\im(\chi)| + 4 \cdot |U|$, or there is a set $U \subseteq V(G)$ of size
 \[|U| \leq 6n^{3/4}\log n\]
 such that $\chi_{2,U}$ is discrete.
\end{lemma}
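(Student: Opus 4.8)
The plan is to run a case analysis on the shape of the colour graph $G[[\chi]]$. Since $\xi_{G,\chi}(c)\le 2$ for every $c$, $G[[\chi]]$ has maximum degree at most $2$, and it is connected because $G$ is; as $|\im(\chi)|\in\{2,3\}$ it is therefore a single edge, a path on three colours, or a triangle, each possibly carrying loops subject to the degree bound (a triangle carries none). Every spine edge $(c,c')$ incident to a colour of $G[[\chi]]$-degree $2$ must have $\mu_{G,\chi}(c,c')=2$: it is at least $2$ because $(G,\chi)$ is flipped and biregular with the edge present, and at most $2$ by the bound on $\xi_{G,\chi}(c)$. A loop $(c,c)$ likewise forces the configuration induced on $\chi^{-1}(c)$ to be either edgeless ($\mu_{G,\chi}(c,c)=1$) or strongly-regular-like ($\mu_{G,\chi}(c,c)=2$), and a spine colour without a loop has $\mu_{G,\chi}(c,c)=1$, i.e.\ trivial internal structure. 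Recall that by Lemma~\ref{la:remove-twins} we may (and do) assume $(G,\chi)$ is twin-free, and that Lemma~\ref{la:split-many-pair-colors} only \emph{guarantees} three new colours from a single individualisation, so the first alternative must exploit the further refinement that $2$-WL performs after individualising.

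First I would treat the generic configuration: some spine-adjacent pair $V_1\coloneqq\chi^{-1}(c_1)$, $V_2\coloneqq\chi^{-1}(c_2)$ of size at least $4$ such that $G[V_1,V_2]$ is neither a matching nor the complement of one, so $2\le|N_G(v)\cap V_{3-i}|\le|V_{3-i}|-2$ for all $v\in V_i$. If one class, say $V_1$, has trivial internal structure, I feed $(V_1,V_2)$ into Lemma~\ref{la:bipartite-individualization-weak} (whose condition $|\lambda(V_2,V_2)|\le 3$ on the other side is guaranteed by $\xi_{G,\chi}(c_2)\le 2$), or into Lemma~\ref{la:bipartite-sr-individualization} when both classes are trivial; in the three-colour case, where $(V_1,V_2)$ does not partition $V(G)$, I apply the lemma to the induced subgraph on $V_1\cup V_2$ after checking that it is twin-free — if it is not, the offending pair of vertices (equal neighbourhoods inside $V_1\cup V_2$ but not in $G$, hence differing only on the third class) either contradicts $\mu_{G,\chi}(c_1,c_2)=2$ together with twin-freeness of $G$, or its individualisation already delivers the required colour increase. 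If instead both spine colours carry a strongly-regular-like loop, I invoke Theorem~\ref{thm:uniprimitive-individualization} on a (primitive) constituent graph inside a colour class, dispatching the imprimitive subcase by a direct twin-freeness argument. In every branch I obtain a set $U$ of size at most $6n^{3/4}\log n$ with $\chi_{2,U}$ discrete on $V_1\cup V_2$ (or on a whole colour class). I then \emph{propagate}: since $G[[\chi]]$ has maximum degree $2$, its colour classes form a path or a cycle, and once two consecutive classes are discrete, each newly-singleton class forces the next one to become discrete — two distinct vertices there would otherwise have equal neighbourhoods in the adjacent discrete class and, because that bipartite graph is ``$\mu=2$'', equal neighbourhoods everywhere, contradicting twin-freeness. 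Inducting around the path or cycle yields the second alternative of the statement.

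It remains to handle the degenerate configurations: a colour class of size at most $3$, or the case where every spine bipartite graph and every loop's internal graph is a matching. In the all-matching case a degree count shows that $G$ is $2$-regular away from degree-$1$ end classes, hence, being connected, a single cycle or path; then individualising two vertices already makes $\chi_{2,U}$ discrete, so $|U|=2\le 6n^{3/4}\log n$. In the remaining mixed configurations — a matching on one spine edge but genuine structure on the other, or a tiny class forcing a matching nearby — I individualise an endpoint $u$ of a matching edge: its matched partner becomes a singleton, which splits the next colour class by at least $2$, and twin-freeness prevents this split from stalling, so it feeds back and further refines the remaining classes. Tracking this cascade shape by shape should show that individualising $u$ alone, or $u$ together with one further well-chosen vertex in the presence of a size-$2$ class, yields at least $|\im(\chi)|+4$, respectively $|\im(\chi)|+8$, colours, which is the first alternative with $|U|\le 2$.

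The main obstacle is exactly this last, degenerate/mixed case: because the first alternative caps $|U|$ at $2$, one cannot afford even a logarithmic individualisation set there, so a sharp shape-by-shape count of the post-individualisation colour classes is unavoidable, and twin-freeness must be invoked at precisely the right moments, since a stalled cascade is exactly a pair of twins and the combinatorics therefore lives on the boundary between the two alternatives. A secondary difficulty is the three-colour path, where the two classes fed to the bipartite lemmas do not partition $V(G)$ and twin-freeness of the induced subgraph on $V_1\cup V_2$ is not automatic, so one must either inherit it or peel off the colour increase when it fails.
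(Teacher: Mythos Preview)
Your case structure---split on the shape of $G[[\chi]]$, feed the generic bipartite pairs into Lemma~\ref{la:bipartite-individualization-weak} or Lemma~\ref{la:bipartite-sr-individualization}, and fall back on Theorem~\ref{thm:uniprimitive-individualization} when a colour class carries a connected strongly regular structure---matches the paper's approach, as does your propagation step (the paper packages it as a single claim: once one colour class is discrete, twin-freeness forces all of $\chi_{2,U}$ to be discrete).

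Where your plan diverges, and where it has a genuine gap, is in the ``imprimitive subcase'' that you dismiss with a ``direct twin-freeness argument''. In the two-colour case with $\mu_{G,\chi}(c_1,c_1)=\mu_{G,\chi}(c_2,c_2)=2$ and the non-matching bipartite piece present, both non-diagonal constituent graphs on each side may be \emph{disconnected} (disjoint unions of cliques of sizes $a_1,a_2$), so Theorem~\ref{thm:uniprimitive-individualization} does not apply and twin-freeness alone does not produce a small $U$. The paper spends the bulk of its proof here and splits further: when $a_1=2$ (or $a_2=2$) it proves a quantitative bound $|N(v)\triangle N(v')|\ge n_2/2$ and invokes Lemma~\ref{la:distinguishing-set-from-fractional-cover}; when $a_1,a_2\ge 3$ it picks $u_1\in V_1$, $u_2\in V_2$ with $u_1u_2\notin E(G)$ and argues $|\im(\chi_{2,\{u_1,u_2\}})|\ge 10$. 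This last branch is the \emph{only} place in the proof where the first alternative of the lemma is used, and it is not a cascade from a matching---it is a careful count using the clique block structure on both sides.

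Conversely, the matching configurations you flag as the ``main obstacle'' are handled more cheaply than you expect. If a spine pair induces a matching (or, in the three-colour path, both do), connectedness of $G$ forces some end colour class $V_i$ to carry a connected strongly regular graph $G[V_i]$, and Theorem~\ref{thm:uniprimitive-individualization} applied there (plus the propagation claim) already gives the second alternative; no shape-by-shape cascade count toward $|\im(\chi)|+4$ is needed. So the effort in your plan is misallocated: the delicate work is not in the degenerate matching shapes but in the doubly imprimitive two-colour case.
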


\begin{proof}
 We start by proving the following claim.
 \begin{claim}
  \label{claim:individualize-one-color-class}
  Let $U \subseteq V(G)$ such that there is some color $c \in \im(\chi)$ such that $\chi_{2,U}(v) \neq \chi_{2,U}(v')$ for all distinct $v,v' \in \chi^{-1}(c)$.
  Then $\chi_{2,U}$ is discrete.
 \end{claim}
 \begin{claimproof}
  Let $\lambda \coloneqq \WL{2}{G,\chi}$.
  First suppose that $|\im(\chi)| = 2$ and let $d$ denote the second color in $\im(\chi)$ apart from $c$.
  Let $V \coloneqq \chi^{-1}(c)$ and $W \coloneqq \chi^{-1}(d)$.
  To show the claim, it suffices to argue that $N(w) \cap V \neq N(w') \cap V$ for all distinct $w,w' \in W$.
  Suppose towards a contradiction that there are distinct $w,w' \in W$ such that $N(w) \cap V = N(w') \cap V$.
  Consider the corresponding equivalence relation $\sim$ defined on $W$ via $w \sim w'$ if $N(w) \cap V = N(w') \cap V$.
  Let $A_1,\dots,A_\ell$ denote the equivalence classes.
  Note that $\ell > 1$ since $G$ is connected and flipped.
  Since pairs of equivalent vertices receive a different color under $\lambda$ than non-equivalent vertices and $|\lambda(W,W)| \leq 3$ (because $\mu_{G,\chi}(d,d) \leq 2$), we conclude that $G[W]$ is either empty or a disjoint union of cliques corresponding to $A_1,\dots,A_\ell$.
  In both cases, distinct equivalent vertices are twins in $G$ which contradicts $G$ being twin-free.

  The proof for the case $|\im(\chi)| = 3$ is similar.
 \end{claimproof}

 First suppose that $|\im(\chi)| = 3$.
 Assume that $\im(\chi) = \{c_1,c_2,c_3\}$ and $\mu_{G,\chi}(c_1,c_2) = \mu_{G,\chi}(c_2,c_3) = 2$.
 We say a pair $(c_i,c_j)$ of distinct colors induces a matching if $G$ induces a matching between $\chi^{-1}(c_i)$ and $\chi^{-1}(c_j)$.
 If $(c_1,c_2)$ or $(c_2,c_3)$ does not induce a matching, then the lemma follows from Lemma \ref{la:bipartite-individualization-weak} and Claim \ref{claim:individualize-one-color-class}.

 So suppose that both $(c_1,c_2)$ and $(c_2,c_3)$ induce a matching.
 If $\mu_{G,\chi}(c_1,c_3) = 2$ then it cannot induced a matching.
 So the lemma again follows from Lemma \ref{la:bipartite-individualization-weak} and Claim \ref{claim:individualize-one-color-class}.

 Since $\xi_{G,\chi}(c) \leq 2$ for all $c \in \im(\chi)$, we get that $\mu_{G,\chi}(c_2,c_2) = 1$.
 Also, it holds that $\mu_{G,\chi}(c_1,c_1),\mu_{G,\chi}(c_3,c_3) \in \{1,2\}$.
 Since $(G,\chi)$ is connected and flipped, it follows that there is some $i \in \{1,3\}$ such that $G[\chi^{-1}(c_i)]$ is connected and strongly regular.
 In this case, the lemma follows from Theorem \ref{thm:uniprimitive-individualization} and Claim \ref{claim:individualize-one-color-class}.

 So it remains to cover the case $|\im(\chi)| = 2$.
 Suppose that $\im(\chi) = \{c_1,c_2\}$.
 We have $\mu_{G,\chi}(c_1,c_1),\mu_{G,\chi}(c_2,c_2) \in \{1,2\}$.
 First suppose that $(c_1,c_2)$ induces a matching.
 Then the graph $G[\chi^{-1}(c_1)]$ is connected and strongly regular, because $(G,\chi)$ is connected and flipped.
 As before, the lemma follows from Theorem \ref{thm:uniprimitive-individualization} and Claim \ref{claim:individualize-one-color-class}.

 So we may assume that $(c_1,c_2)$ does not induce a matching.
 If we have $\mu_{G,\chi}(c_1,c_1) = 1$ or $\mu_{G,\chi}(c_2,c_2) = 1$, the lemma follows from Lemma \ref{la:bipartite-individualization-weak} and Claim \ref{claim:individualize-one-color-class}.
 So suppose that additionally $\mu_{G,\chi}(c_1,c_1) = \mu_{G,\chi}(c_2,c_2) = 2$.

 Let us denote $V_1 \coloneqq \chi^{-1}(c_1)$ and $V_2 \coloneqq \chi^{-1}(c_2)$.
 Also let $\lambda \coloneqq \WL{2}{G,\chi}$.
 Let $p_1,q_1$ be the non-diagonal colors in $\lambda(V_1,V_1)$ and $p_2,q_2$ be the non-diagonal colors in $\lambda(V_2,V_2)$.
 If $F_{p_1}$ and $F_{q_1}$ are both connected, then the lemma follows from Theorem \ref{thm:uniprimitive-individualization} and Claim \ref{claim:individualize-one-color-class}.
 The same applies if $F_{p_2}$ and $F_{q_2}$ are both connected.

 So suppose without loss of generality that $F_{p_1}$ and $F_{p_2}$ are not connected.
 In this case, both of these graphs are disjoint union of cliques of the same size.
 Let $a_1$ denote the clique size in $F_{p_1}$ and $a_2$ the clique size in $F_{p_2}$.
 Finally, suppose that $\lambda(V_1,V_2) = \{p,q\}$.

 First suppose that $a_1 = 2$ or $a_2 = 2$.
 By symmetry, we may assume without loss of generality that $a_1 = 2$.
 Let $n_2 \coloneqq |V_2|$.

 \begin{claim}
  $|N(v) \triangle N(v')| \geq n_2/2$ for all distinct $v,v' \in V_1$.
 \end{claim}
 \begin{claimproof}
  Suppose that $A_1,\dots,A_{\ell_1}$ denote the vertex sets of the connected components of $F_{p_1}$, and also assume that $A_i = \{v_i,w_i\}$ for all $i \in [\ell_1]$.

  We first argue that $N(v_i) \cap N(w_i) \cap V_2 = \emptyset$ and $V_2 \subseteq N(v_i) \cup N(w_i)$ for all $i \in [\ell_1]$.
  Indeed, there is some $x \in V_2$ such that $\lambda(v_i,x) \neq \lambda(w_i,x)$, because $v_i$ and $w_i$ are no twins.
  However, this implies that $\lambda(v_i,x) \neq \lambda(w_i,x)$ for all $x \in V_2$.
  Since $|\lambda(V_1,V_2)| = 2$, it follows that $v_ix \in E(G)$ if and only if $w_ix \notin E(G)$ for all $x \in V_2$.

  Since $|N(v_i) \cap V_2| = |N(w_i) \cap V_2|$, we also get that $|N(v_i) \cap V_2| = |N(w_i) \cap V_2| = n_2/2$ for all $i \in [\ell_1]$.
  In particular, $|N(v_i) \triangle N(w_i)| \geq n_2/2$ for all $i \in [\ell_1]$.

  Now, consider two distinct $i,j \in [\ell_1]$.
  Since
  \[\lambda(v_i,v_j) = \lambda(w_i,v_j) = \lambda(v_i,w_j) = \lambda(w_i,w_j)\]
  we conclude that
  \begin{align*}
      &|N(v_i) \cap N(v_j) \cap V_2| = |N(w_i) \cap N(v_j) \cap V_2|\\
   =~ &|N(v_i) \cap N(w_j) \cap V_2| = |N(w_i) \cap N(w_j) \cap V_2|.
  \end{align*}
  Since
  \begin{align*}
      &|N(v_i) \cap N(v_j) \cap V_2| + |N(w_i) \cap N(v_j) \cap V_2|\\
   +~ &|N(v_i) \cap N(w_j) \cap V_2| + |N(w_i) \cap N(w_j) \cap V_2| = n_2
  \end{align*}
  it follows that
  \begin{align*}
      &|N(v_i) \cap N(v_j) \cap V_2| = |N(w_i) \cap N(v_j) \cap V_2|\\
   =~ &|N(v_i) \cap N(w_j) \cap V_2| = |N(w_i) \cap N(w_j) \cap V_2| = \frac{n_2}{4}.
  \end{align*}
  This implies the claim.
 \end{claimproof}

 Combining the last claim and Lemma \ref{la:distinguishing-set-from-fractional-cover} we obtain a set $U' \subseteq V(G)$ of size $|U'| \leq 2 \frac{n}{n_2} (1 + 2 \log n)$ such that $\chi_{2,U'}(v) \neq \chi_{2,U'}(v')$ for all $v,v' \in V_1$.
 Then $\chi_{2,U'}$ is discrete by Claim \ref{claim:individualize-one-color-class}.
 If $n_2 \geq \sqrt{n}$ then $|U'| \leq 6 \cdot \sqrt{n} \log n$ and we set $U \coloneqq U'$.
 Otherwise, $n_2 \leq \sqrt{n}$ and we simply set $U \coloneqq V_2$.
 Note that $\chi_{2,U}$ is discrete by Claim \ref{claim:individualize-one-color-class}.

 It remains to consider the case that $a_1 \geq 3$ and $a_2 \geq 3$.
 Let $C_{1,1},\dots,C_{1,\ell_1}$ denote the vertex sets of the connected components of $F_{p_1}$.
 Similarly, let $C_{2,1},\dots,C_{2,\ell_2}$ denote the vertex sets of the connected components of $F_{p_2}$.

 \begin{claim}
  \label{claim:colors-incident-to-vertex}
  Let $i_1 \in [\ell_1]$, $i_2 \in [\ell_2]$, $v_1 \in V_1$ and $v_2 \in V_2$.
  Then $\lambda(v_1,C_{2,i_2}) = \lambda(C_{1,i_1},v_2) = \{p,q\}$.
 \end{claim}

 \begin{claimproof}
  Suppose towards a contradiction that $|\lambda(v_1,C_{2,i_2})| = 1$, i.e., $\lambda(v_1,w_2) = \lambda(v_1,w_2')$ for all $w_2,w_2' \in C_{2,i_2}$.
  Without loss of generality suppose that $\lambda(v_1,w_2) = p$ for all $w_2 \in C_{2,i_2}$.

  Let $w_2,w_2' \in C_{2,i_2}$ be distinct elements.
  Since $G$ is twin-free, we conclude that there is some $u_1 \in V_1$ such that
  \[\{\lambda(u_1,w_2),\lambda(u_1,w_2')\} = \{p,q\}.\]
  We may assume without loss of generality that $\lambda(u_1,w_2) = p$ and $\lambda(u_1,w_2') = q$.
  In particular, there is some vertex $x_2 \in V_2$ such that $\lambda(u_1,w_2) = p$,  $\lambda(u_1,x_2) = q$ and $\lambda(w_2,x_2) = p_2$.

  By the properties of $2$-WL, there also is some $y_2 \in V_2$ such that $\lambda(v_1,w_2) = p$,  $\lambda(v_1,y_2) = q$ and $\lambda(w_2,y_2) = p_2$.
  This is a contradiction since $y_2$ needs to be contained in $C_{2,i_2}$.
 \end{claimproof}

 Now, we pick two arbitrary elements $u_1 \in V_1$ and $u_2 \in V_2$ such that $u_1u_2 \notin E(G)$.
 We set $U \coloneqq \{u_1,u_2\}$.
 We claim that $|\im(\chi_{2,U})| \geq 10$.

 Let $i_1 \in [\ell_1]$, $i_2 \in [\ell_2]$ such that $u_1 \in C_{1,i_1}$ and $u_2 \in C_{2,i_2}$.
 Clearly, the color sets $\chi_{2,U}(C_{1,i_1})$, $\chi_{2,U}(C_{2,i_2})$, $\chi_{2,U}(V_1 \setminus C_{1,i_1})$ and $\chi_{2,U}(V_2 \setminus C_{2,i_2})$ are pairwise disjoint.
 We show that the first two sets contain at least three colors and the last two sets contain at least two colors.
 The second part follows immediately from Claim \ref{claim:colors-incident-to-vertex}.

 So consider the set $C_{1,i_1}$.
 Clearly, vertices in $N(u_2) \cap C_{1,i_1}$ are assigned a different color than vertices from $C_{1,i_1} \setminus N(u_2)$.
 Also,
 \[|N(u_2) \cap C_{1,i_1}| \leq \frac{1}{2}|C_{1,i_1}| \leq |C_{1,i_1}| - 2.\]
 The first inequality holds since $u_2$ has the same number of neighbors in every set $C_{1,j_1}$ for $j_1 \in [\ell_1]$, and $G$ is flipped.
 The second inequality holds because $|C_{1,i_1}| = a_1 \geq 3$.
 Since $u_1 \notin N(u_2)$ it follows that $|\chi_{2,U}(C_{1,i_1})| \geq 3$.
 We obtain $|\chi_{2,U}(C_{2,i_2})| \geq 3$ analogously.
\end{proof}

Finally, it remains to the cover the case $|\im(\chi)| = 1$.
For $m \geq 1$ define the graph $L_{2,m}$ with vertex set $L_{2,m} \coloneqq [2] \times [m]$ and edge set
\[E(L_{2,m}) \coloneqq \{(i,j)(i',j') \mid i = i' \vee j = j'\}.\]

\begin{lemma}
 \label{la:split-with-1-vertex-color}
 Let $(G,\chi)$ be a nice and twin-free graph.
 Also suppose that $|\im(\chi)| = 1$ and $\xi_{G,\chi}(c_0) \leq 2$ for the unique color $c_0 \in \im(\chi)$.

 Then one of the following holds:
 \begin{enumerate}[label=(\roman*)]
  \item\label{item:split-with-1-vertex-color-1} there is a set $U \subseteq V(G)$ of size $1 \leq |U| \leq 2$ such that $|\im(\chi_{2,U})| \geq 1 + 4 \cdot |U|$,
  \item\label{item:split-with-1-vertex-color-2} there is a set $U \subseteq V(G)$ of size
   \[|U| < \max(6\sqrt{n}\log n,24 \log n)\]
   such that $\chi_{2,U}$ is discrete, or
  \item\label{item:split-with-1-vertex-color-3} $G$ is isomorphic to $L_{2,m}$ for some $m \geq 3$.
 \end{enumerate}
\end{lemma}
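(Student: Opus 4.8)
The plan is to exploit that, since $|\im(\chi)| = 1$ and $\chi$ is stable with respect to $2$-WL, the coloring $\lambda \coloneqq \WL{2}{G,\chi}$ assigns the same color to all diagonal pairs and, by definition of $\xi_{G,\chi}$, at most $\xi_{G,\chi}(c_0) + 1 \le 3$ colors to the ordered pairs of distinct vertices; in other words, $(G,\lambda)$ is a homogeneous coherent configuration of rank at most $4$ in which $E(G)$ is a union of off-diagonal basis relations. First I would rule out $\mu_{G,\chi}(c_0,c_0) = 1$: then all distinct pairs are $2$-WL-equivalent, so $G$ is edgeless or complete, hence (being connected with $|V(G)| > 1$) equal to $K_n$ --- but then any two vertices are strong twins, contradicting twin-freeness. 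So $\mu_{G,\chi}(c_0,c_0) \in \{2,3\}$, and the argument splits accordingly.

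If $\mu_{G,\chi}(c_0,c_0) = 2$, the two off-diagonal relations are $E(G)$ and its complement, so $G$ is a connected strongly regular graph. A disconnected strongly regular graph is a disjoint union of cliques, and one with disconnected complement is complete multipartite; both possess strong or weak twins and are therefore excluded. Hence both constituent graphs $F_c$ with $c \ne \lambda(v,v)$ are connected and $\lambda$ has three color classes, so Theorem \ref{thm:uniprimitive-individualization} applies and yields $U \subseteq V(G)$ with $|U| < 4\sqrt{n}\log n < 6\sqrt{n}\log n$ and $\chi_{2,U}$ discrete, which is option \ref{item:split-with-1-vertex-color-2}. (For the finitely many small $n$ for which this bound is not yet below the claimed one, taking $U = V(G)$ works and is covered by the $24\log n$ term.)

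The case $\mu_{G,\chi}(c_0,c_0) = 3$ is the core. Let $R_1,R_2,R_3$ be the off-diagonal basis relations with constituent graphs $F_1,F_2,F_3$. If all three $F_i$ are connected, Theorem \ref{thm:uniprimitive-individualization} again applies and yields option \ref{item:split-with-1-vertex-color-2}. Otherwise some $F_i$ is disconnected; its connected components all have a common size $s$ and form a non-trivial block system $B_1,\dots,B_t$ with $st = n$, whose blocks are the classes of an equivalence relation equal to the diagonal together with a nonempty proper subset $S \subsetneq \{R_1,R_2,R_3\}$. I would analyze this structure by cases: whether $|S| = 1$ (the blocks are cliques in $G$ or in $\bar G$) or $|S| = 2$; whether the between-block relations are symmetric or form a pair of mutually transposed relations; and the sizes $s$ and $t$. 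In each branch the aim is to reach one of three outcomes. Option \ref{item:split-with-1-vertex-color-1}: a single vertex --- or a suitable pair --- whose individualization followed by $2$-WL splits its own block together with at least two further color classes, pushing the number of colors up by at least $4$ (respectively at least $8$). Option \ref{item:split-with-1-vertex-color-2}: either some block has size at most $\sqrt n$, so that individualizing that entire block and propagating discreteness through twin-freeness and $\xi_{G,\chi}(c)\le 2$ suffices; or, after first distinguishing all blocks with $O(\sqrt n\log n)$ individualizations, the resulting multi-fiber coloring contains two classes to which Lemma \ref{la:bipartite-sr-individualization} or Theorem \ref{thm:uniprimitive-individualization} applies, all within the budget. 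Option \ref{item:split-with-1-vertex-color-3}: the combined constraints --- twin-freeness, $\xi_{G,\chi}(c) \le 2$ for all $c$, $G$ flipped and hence of degree at most $\tfrac{n-1}{2}$, connectedness --- leave no other possibility and force $G \cong L_{2,m}$ for some $m \ge 3$.

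The main obstacle is precisely this last case distinction: enumerating the block structures of a rank-$4$ homogeneous coherent configuration and checking that the constraints are strong enough in each one. The delicate point is the two-block system whose blocks are cliques, where $L_{2,m}$ emerges exactly when the between-block relation of smallest degree is a perfect matching --- so that the induced bipartite graph is $1$-regular and none of the individualization lemmas is applicable --- and every alternative (larger between-block degree, an asymmetric between-block relation, or extra structure inside the blocks) must be ruled out by the degree bound $\tfrac{n-1}{2}$ together with twin-freeness. A secondary difficulty is the tight $6\sqrt n\log n$ target: since $\lambda(v,v)$ always lies in $\lambda(V_1,V_1) \cap \lambda(V_2,V_2)$ in a single-fiber configuration, Lemma \ref{la:bipartite-sr-individualization} cannot be invoked before a first individualization produces a second fiber, and the weaker Lemma \ref{la:bipartite-individualization-weak} only gives $6n^{3/4}\log n$; so whenever a bipartite argument is wanted, one must run it inside an already-refined coloring within budget, or fall back on one of the other two outcomes.
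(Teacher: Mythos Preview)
Your high-level plan is the same as the paper's: recognise the rank-$\le 4$ homogeneous coherent configuration, dispose of the primitive case via Theorem~\ref{thm:uniprimitive-individualization}, and in the imprimitive case analyse the block system coming from a disconnected constituent. Where you diverge is in the organisation of that last analysis and in the tactics you propose for it.

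The paper first observes that once some $F_i$ is disconnected one may take it (after the WLOG) to be a disjoint union of cliques, so your ``$|S|=2$'' case never occurs; it also notes that all relations are symmetric here, so the asymmetric branch is empty. The remaining work is split simply by the parameters $(a,\ell)$ of the block system: for $a\ge 3$ and $\ell\ge 3$ the paper constructs a \emph{good pair} $(v_1,v_2)$ of vertices in distinct blocks whose individualisation yields at least nine vertex colours, giving option~\ref{item:split-with-1-vertex-color-1} with $|U|=2$; for $\ell=2$ it either recognises $L_{2,a}$ (when the cross-block degree is $1$) or applies Lemma~\ref{la:hypergraph-bounds} and Lemma~\ref{la:distinguishing-set-from-fractional-cover} directly to get the $6\sqrt{n}\log n$ bound; for $a=2$ it again either recognises $L_{2,\ell}$ (when a second constituent is disconnected) or, after a short analysis of the intersection numbers, applies Lemma~\ref{la:distinguishing-set-from-fractional-cover} with $\ell=\Theta(n)$ to get the $24\log n$ bound. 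In particular $L_{2,m}$ arises in \emph{two} places, not only in the two-block case you single out.

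Your proposed tactics for option~\ref{item:split-with-1-vertex-color-2} are where the plan is shakiest. Individualising an entire block of size at most $\sqrt{n}$ does not by itself force $\chi_{2,U}$ to be discrete when $\ell\ge 3$: two vertices in another block can share the same $\lambda$-profile to the individualised block without being twins, since they may still differ on a third block. (It \emph{does} work when $\ell=2$, which is exactly how the paper uses it there.) Likewise ``distinguishing all blocks with $O(\sqrt{n}\log n)$ individualisations'' is not a free step --- the quotient structure on the blocks need not be amenable to this within budget. The paper avoids both issues by never separating the blocks first: in the $a,\ell\ge 3$ regime it goes for option~\ref{item:split-with-1-vertex-color-1} via the good-pair argument, and only in the genuinely two-sided cases ($\ell=2$ or $a=2$) does it invoke the distinguishing-set machinery, where discreteness on one side really does propagate. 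So your outline is right, but the case split and the mechanism for producing option~\ref{item:split-with-1-vertex-color-1} need to be made concrete along the lines above.
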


\begin{proof}
 Let $\lambda \coloneqq \WL{2}{G,\chi}$.
 Note that $c_0 \coloneqq \lambda(v,v) = \lambda(w,w)$ for all $v,w \in V(G)$.
 Suppose $\im(\lambda) = \{c_0,c_1,\dots,c_\xi\}$ and let $F_i \coloneqq F_{c_i}$ denote the constituent graph of color $c_i$, $i \in [\xi]$.
 Observe that $\xi = \mu_{G,\chi}(c_0,c_0)$ and thus, $\xi \leq \xi_{G,\chi}(c_0) + 1 \leq 3$
 If $F_1,\dots,F_\xi$ are connected, then there is a set $U \subseteq V(G)$ of size
 \[|U| < 4\sqrt{n}\log n\]
 such that $\chi_{2,U}$ is discrete by Theorem \ref{thm:uniprimitive-individualization}.

 So we may assume that there is some $i \in [\xi]$ such that $F_i$ is not connected.
 In particular, since $(G,\chi)$ is connected and flipped, this implies that $\xi = 3$.
 Additionally, we also get that $\lambda(v,w) = \lambda(w,v)$ for all $v,w \in V(G)$.

 Without loss of generality, assume that $F_1$ is not connected.
 Since $(G,\chi)$ is connected and flipped and $\xi = 3$, we conclude that $F_1$ is a disjoint union of complete graphs.
 Let $A_1,\dots,A_\ell$ denote the partition of $V(G)$ into the vertex sets of the connected components of $F_1$.
 Note that $a \coloneqq |A_i| = |A_j|$ for all $i,j \in [\ell]$.

 We first claim that $|\lambda(A_i \times A_j)| = 2$ for all distinct $i,j \in [\ell]$.
 Indeed, if there were $i,j \in [\ell]$ such that $|\lambda(A_i \times A_j)| = 1$, then $|\lambda(A_i \times A_j)| = 1$ for all distinct $i,j \in [\ell]$, and any pair of distinct vertices $v_1,v_2 \in A_1$ are twins.

 So we may assume that $|\lambda(A_i \times A_j)| = 2$ for all distinct $i,j \in [\ell]$.
 In particular, there are $d_2,d_3 \geq 1$ such that
 \[|N_{F_2}(v_i) \cap A_j| = d_2\]
 and
 \[|N_{F_3}(v_i) \cap A_j| = d_3\]
 for all distinct $i,j \in [\ell]$, and all $v_i \in A_i$.
 Note that $d_2 + d_3 = a$.
 Without loss of generality assume that $d_2 \leq d_3$.

 \medskip

 Suppose that $\ell \geq 3$ and $a \geq 3$.
 We say a pair $(v_1,v_2)$, where $v_1 \in A_1$ and $v_2 \in A_2$, is \emph{good} if
 \begin{enumerate}[label=(\roman*)]
  \item\label{item:good-1} $N_{F_2}(v_1) \cap A_2 \neq \{v_2\}$ and $N_{F_3}(v_1) \cap A_2 \neq \{v_2\}$,
  \item\label{item:good-2} $N_{F_2}(v_2) \cap A_1 \neq \{v_1\}$ and $N_{F_3}(v_2) \cap A_1 \neq \{v_1\}$, and
  \item\label{item:good-3} $\{N_{F_2}(v_1) \setminus (A_1\cup A_2), N_{F_3}(v_1) \setminus (A_1\cup A_2)\} \neq \{N_{F_2}(v_2) \setminus (A_1\cup A_2), N_{F_3}(v_2) \setminus (A_1\cup A_2)\}$
 \end{enumerate}

 \begin{claim}
  Let $(v_1,v_2)$ be a good pair and define $U \coloneqq \{v_1,v_2\}$.
  Then $|\im(\chi_{2,U})| \geq 9$.
 \end{claim}

 \begin{claimproof}
  Clearly, the partition into color classes of $\im(\chi_{2,U})$ refines the partition $\{A_1,A_2,V(G) \setminus (A_1 \cup A_2)\}$ since $v_1 \in A_1$ and $v_2 \in A_2$.
  We have $|\chi_{2,U}(A_1)| \geq 3$ and $|\chi_{2,U}(A_2)| \geq 3$ by Conditions \ref{item:good-1} and \ref{item:good-2}.
  Finally, Condition \ref{item:good-3} ensures that $|\chi_{2,U}(V(G) \setminus (A_1 \cup A_2))| \geq 3$.
 \end{claimproof}

 Hence, it suffices to argue that there is a good pair $(v_1,v_2)$.
 First suppose that $d_2 = 1$.
 Let $v_1$ be an arbitrary vertex.
 Then there are at least two vertices $v_2,v_2' \in A_2$ such that $v_2,v_2' \notin N_{F_2}(v_1)$.
 Let $v_3$ be the unique element in $A_3 \cap N_{F_2}(v_1)$.
 Then $v_2 \notin N_{F_2}(v_3)$ or $v_2' \notin N_{F_2}(v_3)$.
 It follows that $(v_1,v_2)$ is good or $(v_1,v_2')$ is good.

 So assume that $2 \leq d_2 \leq d_3$.
 This means Conditions \ref{item:good-1} and \ref{item:good-2} are satisfied for all $v_1 \in A_1$ and $v_2 \in A_2$.
 We say two vertices $v_2,v_2' \in A_2$ are
 \begin{itemize}
  \item \emph{related} if $N_{F_2}(v_2) \setminus (A_1 \cup A_2) = N_{F_2}(v_2') \setminus (A_1 \cup A_2)$ and $N_{F_3}(v_2) \setminus (A_1 \cup A_2) = N_{F_3}(v_2') \setminus (A_1 \cup A_2)$, and
  \item \emph{flip-related} if $N_{F_2}(v_2) \setminus (A_1 \cup A_2) = N_{F_3}(v_2') \setminus (A_1 \cup A_2)$ and $N_{F_3}(v_2) \setminus (A_1 \cup A_2) = N_{F_2}(v_2') \setminus (A_1 \cup A_2)$,
 \end{itemize}

 \begin{claim}
  Suppose $|V(G)| \geq 13$.
  Then there are distinct $v_2,v_2' \in A_2$ that are neither related nor flip-related.
 \end{claim}

 \begin{claimproof}
  Suppose towards a contradiction that all distinct $v_2,v_2' \in A_2$ are related or flip-related.

  First suppose that $a \geq 5$.
  Let $v_1 \in A_1$ and $v_3 \in A_3$ be arbitrary elements.
  Then there are $i,j \in \{2,3\}$ such that
  \[|A_2 \cap N_{F_i}(v_1) \cap N_{F_j}(v_3)| \geq 2.\]
  Let $v_2,v_2' \in A_2 \cap N_{F_i}(v_1) \cap N_{F_j}(v_3)$ be distinct elements.
  Since $v_2,v_2' \in N_{F_j}(v_3)$ we get that $v_2,v_2'$ are not flip-related.
  So they are related.
  It follows that
  \begin{align*}
   |\{x \in V(G) \setminus A_2 \mid \lambda(v_2,x) = \lambda(v_2',x)\}| \geq |V(G) \setminus (A_1 \cup A_2)| + 1 = a \cdot (\ell - 2) + 1 \geq a + 1.
  \end{align*}
  Now let $v_2'' \in A_2 \setminus N_{F_j}(v_3)$.
  Then $v_2,v_2''$ are not related, so they are flip-related.
  Hence,
  \[|\{x \in V(G) \setminus A_2 \mid \lambda(v_2,x) = \lambda(v_2'',x)\}| \leq |A_1| \leq a.\]
  But this is a contradiction since $\lambda(v_2,v_2') = \lambda(v_2,v_2'') = c_1$.

  So it remains to consider the case $a \leq 4$.
  Then $\ell \geq 4$ since $|V(G)| = a \cdot \ell \geq 13$ by assumption.
  Let $v_3 \in A_3$ be an arbitrary element.
  We pick distinct elements $v_2,v_2' \in N_{F_2}(v_3)$ and $v_2'' \in A_2 \setminus N_{F_2}(v_3)$.
  Then $v_2,v_2'$ are not flip-related and $v_2,v_2''$ are not related.
  So $v_2,v_2'$ are related and $v_2,v_2''$ are flip-related.
  Similar to the analysis above, it follows that
  \begin{align*}
   |\{x \in V(G) \setminus A_2 \mid \lambda(v_2,x) = \lambda(v_2',x)\}| \geq |V(G) \setminus (A_1 \cup A_2)| = a \cdot (\ell - 2) \geq 2a
  \end{align*}
  and
  \[|\{x \in V(G) \setminus A_2 \mid \lambda(v_2,x) = \lambda(v_2'',x)\}| \leq |A_1| \leq a.\]
  Again, this is a contradiction since $\lambda(v_2,v_2') = \lambda(v_2,v_2'') = c_1$.
 \end{claimproof}

 If $|V(G)| \leq 12$ then we set $U \coloneqq V(G) \setminus \{v\}$ for some arbitrary $v \in V(G)$.
 Clearly, $\chi_{2,U}$ is discrete and $|U| < 4 \sqrt{n} \log n$.
 Otherwise, there are distinct $v_2,v_2' \in A_2$ that are neither related nor flip-related.
 Then $(v_1,v_2)$ is good or $(v_1,v_2')$ is good for every $v_1 \in A_1$.
 In particular, there is a good pair.
 This completes the case that $\ell \geq 3$ and $a \geq 3$.

 \medskip

 Next, we consider the case $\ell = 2$.
 If $d_2 = 1$, then it is easy to see that $G$ is isomorphic to $L_{2,a}$ for some $a \geq 3$.
 So suppose that $2 \leq d_2 \leq d_3$.

 We also have $\lambda(v_1,v_1') = \lambda(v_1'',v_1''')$ for all $v_1,v_1',v_1'',v_1''' \in A_1$ such that $v_1 \neq v_1'$ and $v_1'' \neq v_1'''$.
 This implies that $\mu \coloneqq |N_{F_2}(v_1) \cap N_{F_2}(v_1')| = |N_{F_2}(v_1'') \cap N_{F_2}(v_1''')|$ for all $v_1,v_1',v_1'',v_1''' \in V_1$ such that $v_1 \neq v_1'$ and $v_1'' \neq v_1'''$.
 Moreover, $\lambda(v_1,v_1') = \lambda(v_2,v_2')$ for all $v_1,v_1'\in A_1$, $v_2,v_2' \in A_2$ such that $v_1 \neq v_1'$ and $v_2 \neq v_2'$.
 So $\mu \coloneqq |N_{F_2}(v_2) \cap N_{F_2}(v_2')|$ for all distinct $v_2,v_2' \in A_2$.

 Note that $2 \leq d_2 \leq a/2$, which also implies that $\mu \geq 1$.
 Moreover, $d_2^2 \geq a - 1$.

 \begin{claim}
  $d_2^2 > \mu \cdot a$.
 \end{claim}
 \begin{claimproof}
  We define a hypergraph $\CH = (A_2,\CE)$ where
  \[\CE \coloneqq \{N_{F_2}(v_1) \mid v_1 \in A_1\}.\]
  Then $\CH$ is a regular and $d_2$-uniform hypergraph such that $|E \cap E'| \geq \mu$ for all $E,E' \in \CE$.
  So $d_2^2 > \mu \cdot a$ by Lemma \ref{la:hypergraph-bounds}.
 \end{claimproof}

 Since $d_2 \leq a/2$, we get that $\frac{1}{2}d_2 > \mu$.

 Let $v_1,v_1' \in A_1$.
 Then $|N(v_1) \triangle N(v_1')| \geq |N_{F_2}(v_1) \triangle N_{F_2}(v_1')| = 2(d_2 - \mu) > d_2$.
 So, by Lemma \ref{la:distinguishing-set-from-fractional-cover}, there is a set $U \subseteq V(G)$ such that $|U| \leq \frac{n}{d_2}(1 + 2\log n)$ and $\chi_{2,U}(v_1) \neq \chi_{2,U}(v_1')$ for all distinct $v_1,v_1' \in A_1$.
 This implies that $\chi_{2,U}$ is discrete.
 Also, $d_2^2 \geq a - 1 \geq n/2 - 1 \geq n/4$ which implies that
 \[\frac{n}{d_2}(1 + 2\log n) \leq 2\sqrt{n}(1 + 2\log n) \leq 6\sqrt{n}\log n.\]
 This completes the case $\ell = 2$.

 \medskip

 Finally, suppose that $a = 2$.
 Observe that $d_2 + d_3 + 2 = n$ and recall that $d_2 \leq d_3$.

 \begin{claim}
  \label{claim:color-between-2-blocks}
  Let $i \in [\ell]$ and $v \in V(G) \setminus A_i$.
  Then $\lambda(v,A_i) = \{c_2,c_3\}$
 \end{claim}

 \begin{claimproof}
  Suppose $A_i = \{w,w'\}$.
  Since $G$ is twin-free, there is some $v' \in V(G) \setminus A_i$ such that $\lambda(v',A_i)  = \{c_2,c_3\}$.
  It follows that for every $x,y \in V(G)$ such that $\lambda(x,y) = c_2$ there is some $z \in V(G)$ such that $\lambda(x,z) = c_1$ and $\lambda(y,z) = c_3$.
  Similarly, for every $x,y \in V(G)$ such that $\lambda(x,y) = c_3$ there is some $z \in V(G)$ such that $\lambda(x,z) = c_1$ and $\lambda(y,z) = c_2$.
  This implies the claim.
 \end{claimproof}

 The last claim implies that $d_2 = d_3 = \ell - 1 = \frac{n-2}{2}$.
 Now, first assume that $F_2$ or $F_3$ is disconnected.
 Without loss of generality, suppose that $F_2$ is disconnected.
 Since all connected components of $F_2$ have the same size, it follows that $F_2$ consists of two cliques of size $n/2$.
 It follows that $G$ is isomorphic to $L_{2,\ell}$, where $\ell \geq 3$.

 So we may assume that $F_2$ and $F_3$ are both connected.
 By the properties of $2$-WL, there are integers $\mu_2,\mu_3 \geq 1$ such that
 \[|N_{F_2}(v) \cap N_{F_2}(w)| = \mu_2\]
 for all $v,w \in V(G)$ such that $\lambda(v,w) = c_2$, and
 \[|N_{F_2}(v) \cap N_{F_2}(w)| = \mu_3\]
 for all $v,w \in V(G)$ such that $\lambda(v,w) = c_3$.

 \begin{claim}
  \label{claim:compare-mu}
  $d_2 - \mu_3 \leq 2\cdot(d_2 - \mu_2)$ and $d_2 - \mu_2 \leq 2\cdot(d_2 - \mu_3)$.
 \end{claim}
 \begin{claimproof}
  Pick $v,w \in V(G)$ such that $\lambda(v,w) = c_3$.
  Also, let $z \in N_{F_2}(v) \cap N_{F_2}(w)$.
  Then
  \[N_{F_2}(v) \setminus N_{F_2}(w) \subseteq \Big(N_{F_2}(v) \setminus N_{F_2}(z)\Big) \cup \Big(N_{F_2}(z) \setminus N_{F_2}(w)\Big).\]
  It follows that
  \[d_2 - \mu_3 = |N_{F_2}(v) \setminus N_{F_2}(w)| \leq 2 \cdot (d_2 - \mu_2).\]

  For the second statement, let $v,w \in V(G)$ such that $\lambda(v,w) = c_2$.
  Note that $N_{F_3}(v) \cap N_{F_3}(w) \neq \emptyset$.
  Pick some element $z \in N_{F_3}(v) \cap N_{F_3}(w)$.
  As before
  \[N_{F_2}(v) \setminus N_{F_2}(w) \subseteq \Big(N_{F_2}(v) \setminus N_{F_2}(z)\Big) \cup \Big(N_{F_2}(z) \setminus N_{F_2}(w)\Big).\]
  It follows that
  \[d_2 - \mu_2 = |N_{F_2}(v) \setminus N_{F_2}(w)| \leq 2 \cdot (d_2 - \mu_3).\]
 \end{claimproof}

 Next, let $\mu \coloneqq \min(\mu_2,\mu_3)$.

 \begin{claim}
  \label{claim:bound-mu}
  $\mu \leq \frac{1}{2} d_2$.
 \end{claim}
 \begin{claimproof}
  Suppose $A_1 = \{v_1,w_1\}$ and pick $v_2 \in A_2$ such that
  \[\lambda(v_1,v_2) = c_2 \quad\text{ and }\quad \lambda(w_1,v_2) = c_3.\]
  Assume towards a contradiction that $\mu > \frac{1}{1} d_2$.
  Then
  \[|N_{F_2}(v_1) \cap N_{F_2}(v_2)| > \frac{1}{2} d_2\]
  and
  \[|N_{F_2}(w_1) \cap N_{F_2}(v_2)| > \frac{1}{2} d_2.\]
  Since $|N_{F_2}(v_2)| = d_2$, it follows that
  \[N_{F_2}(v_1) \cap N_{F_2}(w_1) \neq \emptyset\]
  which contradicts Claim \ref{claim:color-between-2-blocks}.
 \end{claimproof}

 Combining Claims \ref{claim:compare-mu} and \ref{claim:bound-mu} we get that $\mu_2 \leq \frac{3}{4} d_2$ and $\mu_3 \leq \frac{3}{4} d_2$.

 \begin{claim}
  $|N_{F_2}(v) \triangle N_{F_2}(w)| \geq \frac{1}{2}d_2$ for all distinct $v,w \in V(G)$.
 \end{claim}
 \begin{claimproof}
  If $\lambda(v,w) = c_1$, then $N_{F_2}(v) \cap N_{F_2}(w) = \emptyset$ by Claim \ref{claim:color-between-2-blocks}.
  So $|N_{F_2}(v) \triangle N_{F_2}(w)| = 2d_2$.

  If $\lambda(v,w) = c_2$, then $|N_{F_2}(v) \cap N_{F_2}(w)| = \mu_2 \leq \frac{3}{4} d_2$.
  It follows that $|N_{F_2}(v) \triangle N_{F_2}(w)| = 2(d_2 - \mu_2) \geq \frac{1}{2}d_2$.

  Similarly, if $\lambda(v,w) = c_3$, we get $|N_{F_2}(v) \triangle N_{F_2}(w)| = 2(d_2 - \mu_3) \geq \frac{1}{2}d_2$.
 \end{claimproof}

 Applying Lemma \ref{la:distinguishing-set-from-fractional-cover}, we obtain a set $U \subseteq V(G)$ such that $|U| \leq \frac{2n}{d_2}(1 + 2\log n)$ and $\chi_{2,U}$ is discrete.
 We have
 \[\frac{2n}{d_2}(1 + 2\log n) = \frac{4n}{n-2}(1 + 2\log n) \leq 24 \log n.\qedhere\]
\end{proof}

With this, we are finally ready to give the proof of Theorem \ref{thm:wl-depth-general}.

\begin{proof}[Proof of Theorem \ref{thm:wl-depth-general}]
 We define
 \[f(n) \coloneqq \max\big(6 n^{3/4}\log n, 24 \log n\big).\]
 We prove that
 \begin{equation}
  \label{eq:wl-depth-advanced}
  \WLd{2}{G,\chi} \leq \frac{|V(G)| - |\im(\chi)|}{4} + f(|V(G)|).
 \end{equation}
 for every vertex-colored graph $(G,\chi)$.

 We prove \eqref{eq:wl-depth-advanced} by induction on the tuple $(|V(G)|,|\im(\chi)|,|E(G)|)$.
 Consider the set $M \coloneqq \{(n,\ell,m) \in \ZZ_{\geq 0}^3 \mid \ell \leq n\}$ and observe that $(|V(G)|,|\im(\chi)|,|E(G)|) \in M$.
 For the induction, we define a linear order $\prec$ on $M$ via $(n,\ell,m) \prec (n',\ell',m')$ if $n < n'$, or $n = n'$ and $\ell > \ell'$, or $n = n'$ and $\ell = \ell'$ and $m < m'$.
 Note that we use the inverse order on the second component, i.e., for $\ell \neq \ell'$, we have $(n,\ell,m) \prec (n,\ell',m')$ if $\ell > \ell'$.
 Still, since $\ell \leq n$ for every $(n,\ell,m) \in M$, there are no infinite decreasing chains in $M$.

 For the base case, suppose that $|V(G)| = 1$.
 Then $\WLd{1}{G,\chi} = 0 = \frac{|V(G)| - |\im(\chi)|}{4} + f(|V(G)|)$ and the statement holds.

 For the inductive step, suppose $(G,\chi)$ is a colored graph with $|V(G)| > 1$.
 We distinguish several cases.
 \begin{itemize}[leftmargin=3ex]
  \item First suppose that $\chi$ is not stable with respect to $2$-WL, i.e., $\chi^* \prec \chi$ where $\chi^*(v) \coloneqq \WL{2}{G,\chi}(v)$ for all $v \in V(G)$.
   Observe that $|\im(\chi^*)| > |\im(\chi)|$, which implies that $(|V(G)|,|\im(\chi^*)|,|E(G)|) \prec (|V(G)|,|\im(\chi)|,|E(G)|)$.
   Hence, by the induction hypothesis, we get
   \[\WLd{2}{G,\chi^*} \leq \frac{|V(G)| - |\im(\chi^*)|}{4} + f(|V(G)|).\]
   Also, $\WLd{2}{G,\chi} \leq \WLd{2}{G,\chi^*}$ by Lemma \ref{la:wl-depth-bound}\ref{item:wl-depth-bound-1}.
   Together, we obtain that
   \begin{align*}
    \WLd{2}{G,\chi} \leq \WLd{2}{G,\chi^*} &\leq \frac{|V(G)| - |\im(\chi^*)|}{4} + f(|V(G)|)\\
                                           &\leq \frac{|V(G)| - |\im(\chi)|}{4} + f(|V(G)|).
   \end{align*}
  \item Next, suppose that $(G,\chi)$ is not flipped.
   Define $(G',\chi') \coloneqq \flip(G,\chi)$.
   Observe that $V(G') = V(G)$, $\chi' = \chi$ and $|E(G')| < |E(G)|$ hold.
   We conclude that $(|V(G')|,|\im(\chi')|,|E(G')|) \prec (|V(G)|,|\im(\chi)|,|E(G)|)$.
   Hence, by the induction hypothesis, we get
   \[\WLd{2}{G',\chi'} \leq \frac{|V(G')| - |\im(\chi')|}{4} + f(|V(G')|).\]
   Also, $\WLd{2}{G,\chi} \leq \WLd{2}{G',\chi'}$ by Lemma \ref{la:wl-depth-bound}\ref{item:wl-depth-bound-2}.
   Together, we obtain that
   \begin{align*}
    \WLd{2}{G,\chi} \leq \WLd{2}{G',\chi'} &\leq \frac{|V(G')| - |\im(\chi')|}{4} + f(|V(G')|)\\
                                           &= \frac{|V(G)| - |\im(\chi)|}{4} + f(|V(G)|).
   \end{align*}
  \item Suppose that $G$ is not connected and let $A_1,\dots,A_\ell$ be the vertex sets of the connected components of $G$.
   Observe that $|A_i| < |V(G)|$ for all $i \in [\ell]$.
   By the induction hypothesis, we get
   \[\WLd{2}{G[A_i],\chi|_{A_i}} \leq \frac{|A_i| - |\im(\chi|_{A_i})|}{4} + f(|A_i|)\]
   for all $i \in [\ell]$.
   Together with Lemma \ref{la:wl-depth-bound}\ref{item:wl-depth-bound-3}, we obtain
   \begin{align*}
    \WLd{2}{G,\chi} \leq \max_{i \in [\ell]} \WLd{2}{G[A_i],\chi|_{A_i}} \leq \max_{i \in [\ell]} \frac{|A_i| - |\im(\chi|_{A_i})|}{4} + f(|A_i|).
   \end{align*}
   Also, since $|A_i| - |\im(\chi|_{A_i})| \geq 0$, we get that
   \begin{align*}
    \max_{i \in [\ell]}(|A_i| - |\im(\chi|_{A_i})|) &\leq \sum_{i \in [\ell]} (|A_i| - |\im(\chi|_{A_i})|)\\
                                                    &=~ |V(G)| - \sum_{i \in [\ell]}|\im(\chi|_{A_i})| \leq |V(G)| - |\im(\chi)|.
   \end{align*}
   Additionally, $f$ is monotonically increasing.
   So $\WLd{2}{G,\chi} \leq \frac{|V(G)| - |\im(\chi)|}{4} + f(|V(G)|)$.
  \item So we may assume that $(G,\chi)$ is nice.
   Next, suppose $(G,\chi)$ is not twin-free and let $\pi$ be a twin-partition of $G$.
   Observe that $|\pi| < |V(G)|$, so by the induction hypothesis we get
   \[\WLd{2}{G/\pi,\chi/\pi} \leq \frac{|\pi| - |\im(\chi/\pi)|}{4} + f(|\pi|).\]
   In combination with Lemma \ref{la:remove-twins}, we conclude that
   \begin{align*}
    \WLd{2}{G,\chi} \leq \WLd{2}{G/\pi,\chi/\pi} \leq \frac{|\pi| - |\im(\chi/\pi)|}{4} + f(|\pi|).
   \end{align*}
   We have that $|\pi| - |\im(\chi/\pi)| \leq |V(G)| - |\im(\chi)|$.
   So
   \begin{align*}
    \WLd{2}{G,\chi} \leq \frac{|\pi| - |\im(\chi/\pi)|}{4} + f(|\pi|) \leq \frac{|V(G)| - |\im(\chi)|}{4} + f(|V(G)|)
   \end{align*}
   using again that $f$ is monotonically increasing.
  \item So we may suppose that $(G,\chi)$ is nice and twin-free.
   Next, suppose there is some $c \in \im(\chi)$ such that $\xi_{G,\chi}(c) \geq 3$.
   Let $u \in \chi^{-1}(c)$.
   Then $|\im(\chi_{2,u})| \geq |\im(\chi)| + 4$ by Lemma \ref{la:split-many-pair-colors}.
   So
   \begin{align*}
    \WLd{2}{G,\chi} \leq 1 + \WLd{2}{G,\chi_{2,u}}
                    &\leq 1 + \frac{|V(G)|-|\im(\chi_{2,u})|}{4} + f(|V(G)|)\\
                    &\leq 1 + \frac{|V(G)|-(|\im(\chi)| + 4)}{4} + f(|V(G)|)\\
                    &\leq \frac{|V(G)|-|\im(\chi)|}{4} + f(|V(G)|)
   \end{align*}
   where the first inequality follows from Lemma \ref{la:wl-depth-bound} and the second inequality holds by the induction hypothesis.
  \item So we may additionally assume that $\xi_{G,\chi}(c) \leq 2$ for every $c \in \im(\chi)$.
   Finally, we distinguish several cases based on $|\im(\chi)|$.
   \begin{itemize}[leftmargin=2.5ex]
    \item First suppose that  $|\im(\chi)| \geq 4$.
     By Lemma \ref{la:split-with-4-vertex-colors}, there is a set $U \subseteq V(G)$ of size
     \[|U| < 6\sqrt{|V(G)|}\log(|V(G)|) \leq f(|V(G)|)\]
     such that $\chi_{2,U}$ is discrete.
     By Lemma \ref{la:wl-depth-bound} we conclude that
     \begin{align*}
      \WLd{2}{G,\chi} &\leq |U| + \WLd{2}{G,\chi_{2,U}} \leq |U| + 0\\
                      &\leq f(|V(G)|) \leq \frac{|V(G)|-|\im(\chi)|}{4} + f(|V(G)|)
     \end{align*}
     where the second inequality holds since $\chi_{2,U}$ is discrete (see also Example \ref{exa:discrete-coloring}).
    \item Next, suppose that $|\im(\chi)| \in \{2,3\}$.
     We apply Lemma \ref{la:split-with-2-or-3-vertex-colors} and obtain a set $U \subseteq V(G)$ such that $1 \leq |U| \leq 2$ and $|\im(\chi_{2,U})| \geq |\im(\chi)| + 4 \cdot |U|$, or a set $U \subseteq V(G)$ of size
     \[|U| \leq 6|V(G)|^{3/4}\log(|V(G)|) \leq f(|V(G)|)\]
     such that $\chi_{2,U}$ is discrete.
     In the latter case, we get that $\WLd{2}{G,\chi} \leq f(|V(G)|) \leq \frac{|V(G)|-|\im(\chi)|}{4} + f(|V(G)|)$ similar to the previous case.
     In the former case, we have
     \begin{align*}
      \WLd{2}{G,\chi} &\leq |U| + \WLd{2}{G,\chi_{2,U}}\\
                      &\leq |U| + \frac{|V(G)|-|\im(\chi_{2,U})|}{4} + f(|V(G)|)\\
                      &\leq |U| + \frac{|V(G)|-(|\im(\chi)| + 4|U|)}{4} + f(|V(G)|)\\
                      &\leq \frac{|V(G)|-|\im(\chi)|}{4} + f(|V(G)|)
     \end{align*}
     where the first inequality follows from Lemma \ref{la:wl-depth-bound} and the second inequality holds by the induction hypothesis.
    \item Finally, assume that $|\im(\chi)| = 1$, in which case we use Lemma \ref{la:split-with-1-vertex-color}.
     If Option \ref{item:split-with-1-vertex-color-1} or \ref{item:split-with-1-vertex-color-2} is satisfied, we proceed as in the previous case.
     So suppose Option \ref{item:split-with-1-vertex-color-3} is satisfied, i.e., $G$ is isomorphic to $L_{2,m}$ for some $m \geq 3$.
     Then it is easy to check that $\WLd{2}{G,\chi} = 1 \leq \frac{|V(G)|-|\im(\chi)|}{4} + f(|V(G)|)$ (after individualizing an arbitrary vertex and performing $2$-WL, repeatedly flipping and splitting into connected components results in graphs with only a single vertex).\qedhere
   \end{itemize}
 \end{itemize}
\end{proof}

\section{Lower Bound on the WL Dimension}
\label{sec:lower-bound}

We provide a lower bound on the WL dimension of $n$-vertex graphs.
The proof is based on the following theorem, which follows from \cite[Lemma 5 \& Corollary 7]{DvorakN16}.

\begin{theorem}
 \label{thm:3-regular-large-tree-width}
 There exists an integer $n_0$ such that for every even $n \geq n_0$, there exists a $3$-regular graph $G$ on $n$ vertices such that
 \[\tw(G) \geq \frac{1}{24}n - 1.\]
\end{theorem}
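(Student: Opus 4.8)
The plan is to derive the statement from two ingredients of \cite{DvorakN16}: a deterministic lower bound on treewidth in terms of a balanced-separator parameter (their Lemma~5), and a probabilistic estimate showing that random cubic graphs are good expanders (their Corollary~7). Concretely, I would first recall the standard reformulation of treewidth that underlies Lemma~5: by the Robertson--Seymour characterization, if $G$ is an $n$-vertex graph in which \emph{no} set $S\subseteq V(G)$ with $|S|<k$ is a balanced separator --- meaning that $G-S$ always has a connected component on more than $\tfrac{2}{3}(n-|S|)$ vertices --- then $\tw(G)\geq k-1$. Hence it suffices to exhibit, for every sufficiently large even $n$, a $3$-regular $n$-vertex graph all of whose balanced separators have size greater than $\tfrac{1}{24}n$.

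To produce such graphs I would invoke the probabilistic method via the configuration model of random $3$-regular graphs on $n$ vertices (which is well defined for even $n$); this is exactly the setting of Corollary~7 of \cite{DvorakN16}. A first-moment/union-bound computation over all candidate triples $(S,A,B)$, where $(A,B)$ is a partition of $V(G)\setminus S$ with $|A|,|B|\leq\tfrac{2}{3}(n-|S|)$, shows that with probability tending to $1$ the random cubic graph $G$ has the following edge-expansion property: every such partition has at least $c\cdot\min(|A|,|B|)$ edges crossing between $A$ and $B\cup S$, for an absolute constant $c$. Since $G$ is $3$-regular, every crossing edge that avoids $A$--$B$ direct adjacency must touch $S$, which forces $|S|$ to be at least a constant fraction of $\min(|A|,|B|)$, and thus of $n$; tuning the balance threshold and the expansion constant $c$ against the best available bound for the isoperimetric number of random cubic graphs yields precisely $|S|>\tfrac{1}{24}n$. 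Combining this with the reformulation above gives $\tw(G)\geq\tfrac{1}{24}n-1$, and since the relevant event has positive probability for all large even $n$, a suitable $G$ exists for every even $n\geq n_0$.

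The main obstacle is the bookkeeping of constants: obtaining exactly $\tfrac{1}{24}$ (rather than an unspecified $\Omega(n)$) requires choosing the balance parameter in the separator argument carefully and feeding in a sharp enough expansion estimate for random cubic graphs, so the bulk of the work is in the union-bound exponent and the conversion from edge expansion to separator size. A secondary point to check is uniformity in $n$: the configuration-model union bound must give failure probability $o(1)$ as a function of $n$ alone, so that a single threshold $n_0$ works simultaneously for all even $n\geq n_0$; this follows because the bound on the expected number of ``bad'' triples decays with $n$. Note also that the hypothesis that $n$ is even is exactly what is needed for a $3$-regular graph on $n$ vertices to exist, so no separate parity workaround is required.
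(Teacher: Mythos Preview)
The paper does not actually give its own proof of this theorem: it is stated with the remark that it ``follows from [Lemma~5 \& Corollary~7]{DvorakN16}'' and nothing further. Your proposal is precisely an unpacking of that citation --- you identify Lemma~5 as the separator reformulation of treewidth and Corollary~7 as the expansion estimate for random cubic graphs, and you sketch how to chain them --- so in that sense your approach coincides with the paper's.

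Since the paper defers the entire argument to \cite{DvorakN16}, there is nothing substantive to compare beyond noting that your reconstruction is the intended one. Your own caveat is the right one: the only genuinely delicate step is the constant $\tfrac{1}{24}$, which depends on the exact quantitative form of the expansion bound in Corollary~7 and on the precise balance threshold used in Lemma~5 of \cite{DvorakN16}; your informal conversion from edge expansion to separator size (via the degree bound $3|S|$ on the number of edges touching $S$) is correct in shape, but recovering exactly $\tfrac{1}{24}$ rather than some unspecified positive constant really does require looking up the statements in \cite{DvorakN16} rather than rederiving them. Everything else in your outline is sound.
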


To obtain graphs with high WL dimension from $3$-regular graphs of large tree-width, we rely on the Cai-Fürer-Immerman construction \cite{CaiFI92}.
In order to keep the number of vertices as small as possible, we use a slightly modified construction (see, e.g., \cite{Neuen23,Roberson22}).

Let $G$ be a graph and let $U \subseteq V(G)$.
For $v \in V(G)$, we define $\delta_{v,U} \coloneqq |\{v\} \cap U|$.
Also, we write $E(v) \coloneqq \{e \in E(G) \mid v \in e\}$ to denote the set edges incident to $v$.
We define the graph $\CFI(G,U)$ with vertex set
\[V(\CFI(G,U)) \coloneqq \{(v,S) \mid v \in V(G), S \subseteq E(v), |S| \equiv \delta_{v,U} \bmod 2\}\]
and edge set
\[V(\CFI(G,U)) \coloneqq \{(v,S)(u,T) \mid uv \in E(G), uv \notin S \bigtriangleup T\}.\]
Note that if $G$ is a $3$-regular graph with $n$ vertices, then we have $|V(\CFI(G,U))| = 4n$.
Indeed, for every $v \in V(G)$, the number of distinct sets $S$ that occur as a second component in vertices $(v,S)$ of $\CFI(G,U)$ is $2^3/2 = 4$, so $|V(\CFI(G,U))| = 4 |V(G)| = 4n$.

The following lemma is well known (see, e.g., \cite{CaiFI92,Roberson22}).

\begin{lemma}
 Let $G$ be a connected graph and let $U,U' \subseteq V(G)$.
 Then $\CFI(G,U) \cong \CFI(G,U')$ if and only if $|U| \equiv |U'| \bmod 2$.
\end{lemma}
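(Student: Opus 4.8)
The plan is to prove the two directions separately: the forward (``if'') direction via an explicit family of \emph{twist isomorphisms}, and the backward (``only if'') direction via a parity-counting argument over the vertices of $G$.

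For the ``if'' direction, assume $|U| \equiv |U'| \pmod 2$. The building block I would establish first is that, for every edge $e = uv \in E(G)$, the map $\tau_e$ that toggles membership of $e$ in the second coordinate at both endpoints of $e$ and is the identity elsewhere — formally $\tau_e(w,S) \coloneqq (w, S \triangle (\{e\} \cap E(w)))$ — is an isomorphism from $\CFI(G,U)$ to $\CFI(G, U \triangle \{u,v\})$: toggling $e$ changes $|S|$ by one exactly at $u$ and at $v$, so $\tau_e$ respects the parity constraints and maps vertex set bijectively onto vertex set, and a one-line check shows that it leaves $S \triangle T$ unchanged on every edge of $G$, hence preserves adjacency. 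Since $|U \triangle U'|$ is even, I would then pair up its elements arbitrarily; for each pair $\{a,b\}$, pick a path $a = x_0, x_1, \dots, x_\ell = b$ in the connected graph $G$, use the telescoping identity $\{x_0,x_1\} \triangle \dots \triangle \{x_{\ell-1},x_\ell\} = \{a,b\}$, and compose $\tau_{x_0x_1}, \dots, \tau_{x_{\ell-1}x_\ell}$ to obtain $\CFI(G,U) \cong \CFI(G, U \triangle \{a,b\})$. Iterating over all pairs turns $U$ into $U \triangle (U \triangle U') = U'$, which gives the desired isomorphism.

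For the ``only if'' direction, I would first use the ``if'' direction to reduce to the single statement that $\CFI(G,\emptyset) \not\cong \CFI(G,\{v_0\})$ for every connected $G$ and every $v_0 \in V(G)$: if $|U| \not\equiv |U'|$, then one of $\CFI(G,U), \CFI(G,U')$ is isomorphic to $\CFI(G,\emptyset)$ and the other to $\CFI(G,\{v_0\})$, so a putative isomorphism $\CFI(G,U) \cong \CFI(G,U')$ would yield $\CFI(G,\emptyset) \cong \CFI(G,\{v_0\})$. To rule the latter out, suppose $\varphi \colon \CFI(G,U) \to \CFI(G,U')$ is an isomorphism with $|U| \not\equiv |U'|$. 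The crux is the standard structural fact that $\varphi$ must respect the gadget partition $\{V_v \mid v \in V(G)\}$, where $V_v \coloneqq \{(v,S) : S \subseteq E(v),\ |S| \equiv \delta_{v,U} \bmod 2\}$, and hence induce an automorphism $\pi$ of $G$ with $\varphi(V_v) = V_{\pi(v)}$; the adjacency condition $[e \in S] = [e \in T]$ then forces $\varphi$ to act on each gadget by relabelling incident edges along $\pi$ and adding a fixed set, and these fixed sets must agree on shared edges, so they glue into a single $A \subseteq E(G)$, giving $\delta_{v,U} + \delta_{\pi^{-1}(v),U'} \equiv |A \cap E(\pi^{-1}(v))| \pmod 2$ for every $v$. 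Summing over all $v \in V(G)$, the right-hand side becomes $2|A| \equiv 0$ since every edge lies on exactly two vertices, so $|U| + |U'| \equiv 0 \pmod 2$, contradicting $|U| \not\equiv |U'|$.

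The main obstacle is the structural claim in the previous paragraph that every isomorphism preserves the gadget partition. For the $3$-regular graphs actually used in our application this is the routine CFI gadget analysis, but phrasing it for arbitrary connected $G$ requires separately checking vertices of degree $1$ and $2$, whose gadgets degenerate to one or two vertices (and the trivial case $|V(G)| = 1$). If one prefers to avoid this case analysis, an alternative is to exhibit a direct isomorphism-invariant of $\CFI(G,U)$ detecting $|U| \bmod 2$, for instance by analysing the action on $\CFI(G,U)$ of the twist automorphisms $\tau_W$ indexed by the cycle space of $G$; but I expect the gadget-partition route to be the most economical for the connected graphs relevant here.
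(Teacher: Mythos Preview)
The paper does not give a proof of this lemma at all; it simply records it as ``well known'' with references to \cite{CaiFI92,Roberson22}. So there is no paper-side argument to compare against, and your proposal stands or falls on its own.

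Your ``if'' direction via edge twists and path-compositions is the standard argument and is correct. The ``only if'' direction, however, has a genuine gap. The ``standard structural fact'' you invoke---that every isomorphism $\varphi\colon \CFI(G,U)\to\CFI(G,U')$ respects the gadget partition up to an automorphism of $G$---is \emph{false} for uncolored CFI graphs over general connected $G$. Take $G=K_3$: then $\CFI(K_3,\emptyset)$ is a disjoint union of two triangles, with gadgets $V_i=\{a_i,b_i\}$ where the $a_i$'s form one triangle and the $b_i$'s the other. The map swapping $a_1\leftrightarrow a_2$ and fixing everything else is an automorphism of $2K_3$, but it sends $V_1=\{a_1,b_1\}$ to $\{a_2,b_1\}$, which is not a gadget. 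So your proposed ``separate checking for degrees $1$ and $2$'' cannot rescue the gadget-preservation claim---it is simply not true in that regime. Your parity computation \emph{after} gadget preservation is fine, but the step feeding into it is not available in the generality of the lemma as stated. The alternative you gesture at (a direct isomorphism invariant detecting $|U|\bmod 2$) is indeed the robust route for the uncolored setting, but you do not carry it out; if you only need the $3$-regular case used in Section~\ref{sec:lower-bound}, you should either restrict the statement accordingly and actually verify gadget preservation there, or supply the invariant.
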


We define $\CFI(G) \coloneqq \CFI(G,\emptyset)$ and $\twCFI(G) \coloneqq \CFI(G,\{u\})$ for some $u \in V(G)$.

The next lemma relates the WL dimension of $\CFI(G)$ to the tree-width of the base graph $G$.
A variant of this lemma already appears for example in \cite{DawarR07} with the underlying ideas dating back to \cite{CaiFI92}.
The concrete statement given below follows from \cite[Lemma 4.4]{Neuen23} and \cite{CaiFI92,ImmermanL90}.

\begin{lemma}
 \label{la:cfi-wl}
 Let $G$ be a connected graph of tree-width $\tw(G) > k$.
 Then $\CFI(G) \simeq_k \twCFI(G)$.
\end{lemma}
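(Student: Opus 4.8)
The plan is to reduce Lemma~\ref{la:cfi-wl} to a two‑player pebble game and then beat that game using a cops‑and‑robber strategy coming from the treewidth bound on $G$. First I would invoke the standard correspondence (Hella; see also \cite{CaiFI92,ImmermanL90}) between $k$‑WL indistinguishability and the bijective $(k+1)$‑pebble game: $\CFI(G) \simeq_k \twCFI(G)$ holds if and only if Duplicator has a winning strategy in the bijective game with $k+1$ pebbles on these two graphs. So it suffices to supply such a strategy, and the hypothesis we may use is $\tw(G) > k$, which by the cops‑and‑robber characterisation of treewidth is equivalent to the robber evading $k+1$ cops on $G$.

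Second, I would set up the internal structure of the CFI construction. Each base vertex $v \in V(G)$ gives a \emph{gadget} consisting of the vertices $(v,S)$, and each base edge $uv$ connects the gadgets of $u$ and $v$. The crucial algebraic fact — a refinement of the lemma stated just before this one — is that the \emph{twist} can be transported: for $u,u'$ lying in a common connected subgraph $H$ of $G$, the graphs $\CFI(G,\{u\})$ and $\CFI(G,\{u'\})$ are isomorphic via an isomorphism that is the identity on every gadget outside $H$ (route the twist along a path from $u$ to $u'$ inside $H$). Consequently, for any $X \subseteq V(G)$, the subgraph of $\CFI(G,U)$ spanned by the gadgets of $X$ together with the edges leaving them is determined — up to an isomorphism fixing the gadgets of $X$ pointwise — solely by which connected component of $G-X$ currently hosts the twist (the parity $|U|\bmod 2$ being forced anyway).

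Third, Duplicator's strategy. Duplicator maintains the invariant that the set $X$ of base vertices carrying pebbles has $|X| \le k+1$, that the current pebble map is the restriction of a twist‑transporting isomorphism placing the twist into a component $C$ of $G-X$, and that $C$ is the robber's current hiding component. When Spoiler lifts a pebble, $X$ shrinks, the robber's component only grows, and the invariant persists. When Duplicator must announce a bijection $h$ between the two vertex sets, it consults the robber's strategy: the robber, told which cop moves and to where, selects a new cop‑free component; Duplicator outputs $h$ as the twist‑transporting isomorphism moving the twist into exactly that component, extended to the gadgets of $X$ in the forced way and to the remaining gadgets by the identity. For whichever vertex $a$ Spoiler then pebbles, its base vertex misses the robber's chosen component, so $h$ restricted to the updated pebble set is a partial isomorphism of the coloured graphs and the invariant is restored; hence Duplicator never loses.

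The main obstacle is the interleaving of the two games: Duplicator must commit to a \emph{single} bijection $h$ before Spoiler reveals the new base vertex, so one must use the variant of the cops‑and‑robber game in which the robber relocates after the destination of the moving cop is announced, and the twist‑transport isomorphisms from the previous step must be organised so that the one used for $h$ is simultaneously correct for every admissible choice of that vertex — which is precisely what the robber's evasion guarantees. The remaining work is routine book‑keeping: describing $h$ on the pebbled gadgets (where it is essentially forced) and verifying that the resulting partial map is indeed a partial isomorphism of the coloured CFI graphs. Alternatively, and more economically, the statement is exactly \cite[Lemma~4.4]{Neuen23} combined with the logical characterisation \cite{CaiFI92,ImmermanL90}, which one may cite directly.
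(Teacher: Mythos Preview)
Your proposal is correct and aligns with the paper's treatment: the paper does not give an independent proof but simply records that the statement follows from \cite[Lemma~4.4]{Neuen23} together with the logical characterisation \cite{CaiFI92,ImmermanL90}, which is exactly the citation you offer in your final sentence. Your pebble-game sketch is a faithful outline of the standard argument behind those references, and the interleaving subtlety you flag is indeed the one point that requires care; since the paper defers to the literature here, your alternative of citing directly matches its approach precisely.
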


Now, we can combine Theorem \ref{thm:3-regular-large-tree-width} and Lemma \ref{la:cfi-wl} to obtain a concrete lower bound on the WL dimension in terms of the number of vertices.

\begin{theorem}
 There exists an integer $n_0$ such that for every $n \geq n_0$ that is divisible by $8$, there exist non-isomorphic $n$-vertex graphs $G_n$ and $H_n$ such that $G_n \simeq_k H_n$ for every $k < \frac{1}{96}n - 1$.
\end{theorem}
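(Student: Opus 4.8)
The plan is to instantiate the modified Cai--Fürer--Immerman construction on a sparse base graph of large tree-width supplied by Theorem~\ref{thm:3-regular-large-tree-width}, and then read off the claim from Lemma~\ref{la:cfi-wl} together with the isomorphism criterion for CFI graphs. Fix $n$ divisible by $8$ and put $m \coloneqq n/4$; writing $n = 8t$ we get $m = 2t$, so $m$ is an even positive integer, which is precisely the hypothesis of Theorem~\ref{thm:3-regular-large-tree-width}. Picking $n_0$ large enough that $m \geq$ the constant of that theorem (and, say, $\frac{1}{96}n_0 - 1 \geq 1$), I obtain a $3$-regular graph $\Gamma$ on $m$ vertices with $\tw(\Gamma) \geq \frac{1}{24}m - 1 = \frac{1}{96}n - 1$.

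Next I would pass to a connected base graph, which is needed both by Lemma~\ref{la:cfi-wl} and by the isomorphism criterion for CFI graphs. Let $G$ be a connected component of $\Gamma$ of maximum tree-width. Then $G$ is again $3$-regular (every vertex of a component keeps all three of its neighbours), it is connected, it has $m' \leq m$ vertices, and $\tw(G) = \tw(\Gamma) \geq \frac{1}{96}n - 1$. Consequently $|V(\CFI(G,U))| = 4m' \leq 4m = n$ for every $U \subseteq V(G)$, and $n - 4m'$ is a nonnegative multiple of $4$. I then let $G_n$ be $\CFI(G)$ together with $n - 4m'$ fresh isolated vertices and $H_n$ be $\twCFI(G)$ with the same isolated vertices added, so that $|V(G_n)| = |V(H_n)| = n$.

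It remains to check the two required properties. For non-isomorphism: since $\CFI(G) = \CFI(G,\emptyset)$ and $\twCFI(G) = \CFI(G,\{u\})$ with $|\emptyset| = 0 \not\equiv 1 = |\{u\}| \pmod 2$, the quoted isomorphism lemma for connected base graphs gives $\CFI(G) \not\cong \twCFI(G)$; adding one and the same set of isolated vertices to both graphs cannot produce an isomorphism, so $G_n \not\cong H_n$. For $k$-WL-equivalence: let $k$ satisfy $k < \frac{1}{96}n - 1 \leq \tw(G)$. Then Lemma~\ref{la:cfi-wl} applies and yields $\CFI(G) \simeq_k \twCFI(G)$. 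Since $k$-WL is compatible with disjoint unions (the multiset of colours of $k$-tuples of a disjoint union is the union of the corresponding multisets of the parts, for every $k \geq 1$, a standard property of the algorithm), adding the same isolated vertices to both sides preserves $\simeq_k$, and hence $G_n \simeq_k H_n$, as desired; in particular the WL dimension of $G_n$ is strictly larger than every such $k$.

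No genuine obstacle arises: the real content lies entirely in the two cited ingredients, namely the existence of $3$-regular graphs with tree-width $\Omega(n)$ coming from the expander estimates of \cite{DvorakN16} (Theorem~\ref{thm:3-regular-large-tree-width}) and the transfer of tree-width to indistinguishability by $k$-WL going back to \cite{CaiFI92} (Lemma~\ref{la:cfi-wl}). The only points that require a moment's care are the parity remark that $8 \mid n$ forces $m = n/4$ to be even, the correction of the vertex count to exactly $n$ via isolated vertices (together with the routine check that this leaves both $\simeq_k$ and non-isomorphism intact), and the arithmetic $\frac{1}{24}\cdot\frac{n}{4} = \frac{1}{96}n$.
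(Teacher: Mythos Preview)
Your proof is correct and follows the same route as the paper: apply the CFI construction to the $3$-regular graph on $m = n/4$ vertices of tree-width at least $\frac{1}{96}n - 1$ supplied by Theorem~\ref{thm:3-regular-large-tree-width}, then invoke Lemma~\ref{la:cfi-wl}. You are in fact more careful than the paper about connectivity of the base graph---the paper applies Lemma~\ref{la:cfi-wl} and the CFI isomorphism criterion directly to $B_n$ without checking this hypothesis, whereas you pass to a connected component and pad with isolated vertices (a step that becomes vacuous if, as the expander source suggests, the base graph is already connected). One small quibble: your justification that padding preserves $\simeq_k$ (``the multiset of colours of $k$-tuples of a disjoint union is the union of the corresponding multisets of the parts'') is not literally true for $k \geq 2$, since $k$-tuples may straddle components; the standard argument goes via the bijective pebble game, but the conclusion you use is correct and well known.
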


\begin{proof}
 Let $n_0'$ denote the constant from Theorem \ref{thm:3-regular-large-tree-width} and define $n_0 \coloneqq 4n_0'$.
 Let $n \geq n_0$ denote an integer that is divisible by $8$, and let $n' \coloneqq n/4$.
 Observe that $n'$ is even.
 So by Theorem \ref{thm:3-regular-large-tree-width}, there exists a $3$-regular graph $B_n$ on $n'$ vertices such that
 \[\tw(B_n) \geq \frac{1}{24}n' - 1.\]
 We define $G_n \coloneqq \CFI(B_n)$ and $H_n \coloneqq \twCFI(B_n)$.
 We have $|V(G_n)| = |V(H_n)| = 4n' = n$ as desired.
 Also, $G_n \simeq_k H_n$ for every $k < \tw(B_n)$ by Lemma \ref{la:cfi-wl}.
 So $G_n \simeq_k H_n$ for every
 \[k < \frac{1}{24}n' - 1 = \frac{1}{96}n - 1.\qedhere\]
\end{proof}

\section{Conclusion}

We have shown that the WL dimension of every $n$-vertex graph is in $(\frac{1}{4} + o(1))n$.
This implies that every $n$-vertex graph can be defined in the counting logic $\LC{}$ with $(\frac{1}{4} + o(1))n$ many variables.
Our contribution improves on the previous bound of $\frac{n+3}{2}$ on the number of variables \cite{PikhurkoVV06}.
In fact, they proved that for every two non-isomorphic $n$-vertex graphs $G$ and $H$, there is an $\FO$-formula with at most $\frac{n+3}{2}$ variables that distinguishes $G$ and $H$.
Since their bound is (essentially) tight for the logic $\FO$, our results for $\LC{}$ also yield that the ability to count does allow us to save variables when defining graphs.

To obtain the results, we have introduced the concept of the \emph{$k$-WL depth} of a graph and have proved that the $2$-WL depth of every $n$-vertex graph is at most $(\frac{1}{4} + o(1))n$.
Still, there is a significant gap towards the lower bound of $(\frac{1}{96} - o(1))n$ that can be obtained via the CFI construction.
By pushing the arguments developed in this paper, it seems possible to obtain further improvements of the upper bound, but the case analyses will become significantly more complex.

The WL depth is a concept that might raise interest in the context of graph identification beyond the scope of this work.
We have shown that the $1$-WL depth of a graph with vertex cover number $r$ is at most $\frac{2}{3}r + 1$, which implies the first non-trivial upper bound on the WL dimension in terms of $r$.
Can similar results be obtained for other graph parameters?
For example, it is not difficult to show that the $k$-WL depth of a graph $G$ is at most the tree-depth of $G$ (for every $k \geq 1$).
Are there a $k \in \mathbb{N}$ and an $\varepsilon > 0$ such that $\WLd{k}{G,\chi} \leq (1 - \varepsilon + o(1))t$ for every vertex-colored graph $(G,\chi)$ of tree-depth at most $t$?

\bibliographystyle{plainurl}
\small
\bibliography{references}

\end{document}